\documentclass[conference,citecolor=RubineRed]{IEEEtran}
\IEEEoverridecommandlockouts

\usepackage{cite}
\usepackage{amsmath,amssymb,amsfonts}
\usepackage{algorithmic}
\usepackage{graphicx}
\usepackage{textcomp}
\usepackage{xcolor}
\def\BibTeX{{\rm B\kern-.05em{\sc i\kern-.025em b}\kern-.08em
    T\kern-.1667em\lower.7ex\hbox{E}\kern-.125emX}}


\usepackage{amsmath}
\usepackage{amssymb}
\usepackage{amsthm}
  \theoremstyle{definition}
    \newtheorem{para}{}[section]
    \newtheorem{rem}[para]{Remark}
    \newtheorem{defn}[para]{Definition}
    \newtheorem{open}[para]{Open question}
    \newtheorem{example}[para]{Example}
   
  \theoremstyle{plain}
    
    \newtheorem{lemma}[para]{Lemma}
    \newtheorem{proposition}[para]{Proposition}
    \newtheorem{theorem}[para]{Theorem}
    \newtheorem{corollary}[para]{Corollary}

\usepackage{xcolor}
  \definecolor{RoyalBlue}{HTML}{0071BC}
  \definecolor{RubineRed}{HTML}{ED017D}
\usepackage{tikz}
\usetikzlibrary{positioning,arrows,calc}
\tikzset{%
  modal/.style={>=stealth?,shorten >=1pt,shorten <=1pt,auto,node distance=1.5cm, semithick}, 
  world/.style={circle,draw,minimum size=0.5cm,fill=gray!15},
  point/.style={circle,draw,inner sep=0.5mm,fill=black},
  reflexive above/.style={->,loop,looseness=7,in=120,out=60},
  reflexive below/.style={->,loop,looseness=7,in=240,out=300},
  reflexive left/.style={->,loop,looseness=7,in=150,out=210},
  reflexive right/.style={->,loop,looseness=7,in=30,out=330}
}

\usepackage[colorlinks=true,
            linktocpage=true
            citecolor=RubineRed,
            linkcolor=RoyalBlue,
            urlcolor=RubineRed]{hyperref}
\usepackage{subcaption}
\usepackage{proof}
\usepackage{multicol}
  \setlength{\multicolsep}{4.0pt plus 2.0pt minus 1pt}
\usepackage{vwcol}
\usepackage{multirow}
\usepackage{listings}
\lstset{%
  basicstyle=\ttfamily,
  backgroundcolor=\color{grey},
  breaklines=true,
  mathescape=true,
  texcl=true,
  keepspaces
}
\usepackage{doi}


\usepackage{marvosym}

\usepackage{fontawesome5}
\newcommand{\expl}{\scalebox{.8}{\text{\faBomb}}}


\newcommand{\rocqdoc}[1]{\href{\BaseUrl/#1}{\raisebox{-0.8mm}{\includegraphics[height=0.83em]{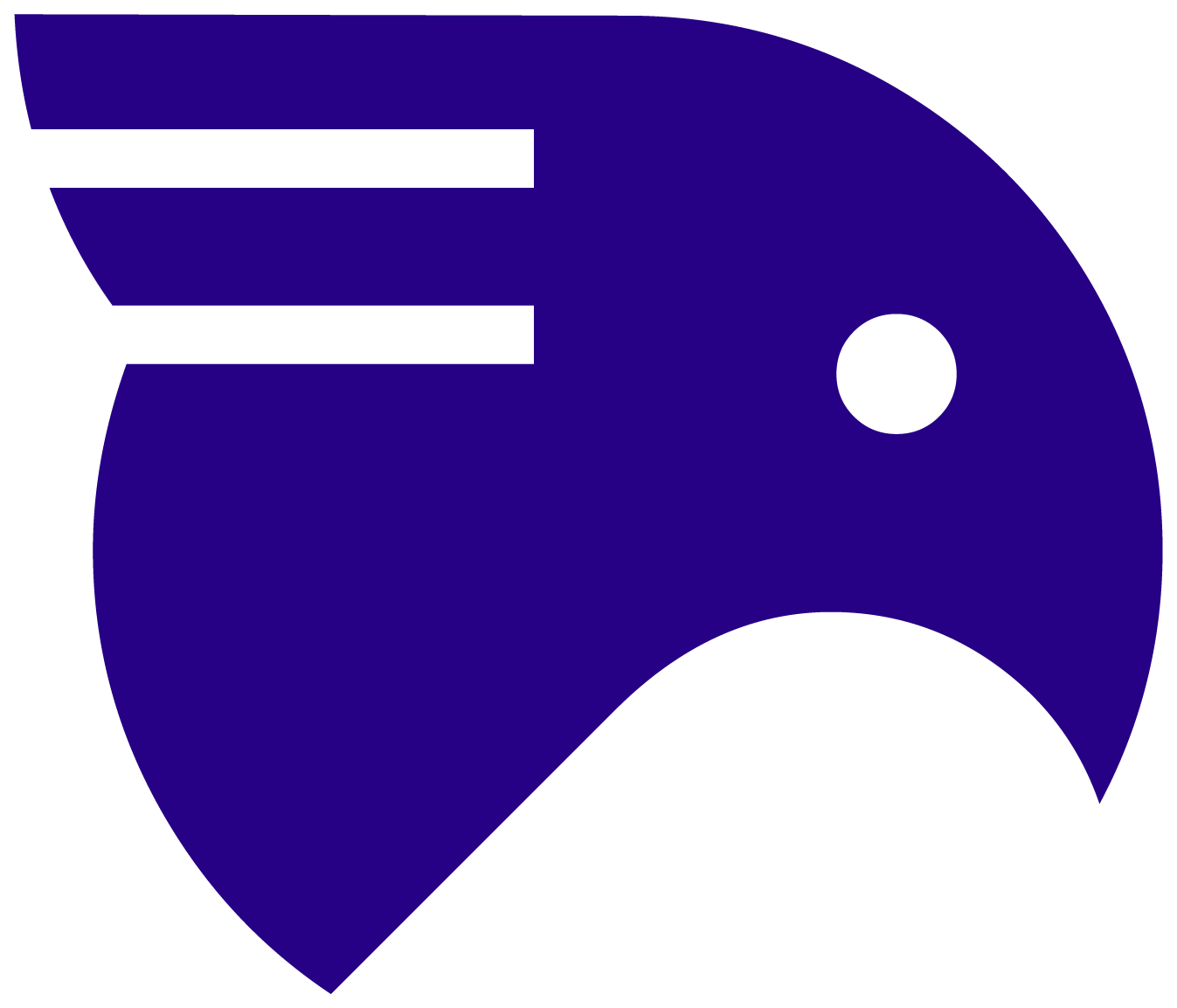}}}}


\newcommand{\lan}[1]{\mathbf{#1}}   
\newcommand{\deriv}[1]{\vdash_{\scriptstyle {#1}}}
\newcommand{\noderiv}[1]{\not\vdash_{\scriptstyle {#1}}}


\newcommand{\dneg}{\dot\neg}

\renewcommand{\phi}{\varphi}

\newcommand{\sement}[1]{\Vdash_{\scriptstyle{\mathcal{#1}}}}
\newcommand{\nosement}[1]{\not\Vdash_{\scriptstyle{\mathcal{#1}}}}

\newcommand{\mo}[1]{\mathfrak{#1}}
\renewcommand{\iff}{\quad\text{iff}\quad}
\renewcommand{\log}[1]{\mathsf{#1}}
\newcommand{\mc}[1]{\mathcal{#1}}

\newcommand{\rto}[1]{to node[#1]{\footnotesize{$R$}}}
\newcommand{\ito}[1]{to node[#1]{\footnotesize{ \ }}}

\newcommand{\subsetsim}{\mathrel{\substack{\textstyle\subset\\[-0.2ex]\textstyle\sim}}}

\newcommand{\Prop}{\mathbin{\mathrm{Prop}}}
\newcommand{\Ax}{\log{Ax}}
\newcommand{\CKAx}{\log{CKAx}}

\newcommand{\axref}[1]{\text{\ref{#1}}}

\newenvironment{myenumerate}
  {\medskip\begin{enumerate}}
  {\end{enumerate}\medskip}
  
\newcommand{\myitem}[1]{%
    \renewcommand{\labelenumi}{(\theenumi) }
    \renewcommand{\theenumi}{#1}
    \item%
  }

\usepackage{wasysym}
\let\oldDiamond\Diamond
\renewcommand{\Diamond}{%
  \mathchoice{\raisebox{-1pt}{$\displaystyle\oldDiamond$}}
             {\raisebox{-1pt}{$\oldDiamond$}}
             {\raisebox{-0.5pt}{$\scriptstyle\oldDiamond$}}
             {\raisebox{-0.2pt}{$\scriptscriptstyle\oldDiamond$}}}

\begin{document}

\title{Semantical Analysis of Intuitionistic Modal Logics between CK and IK
\thanks{
The authors would like to gratefully acknowledge discussions with Dirk Pattinson and Alwen Tiu, and
the detailed and helpful comments of the anonymous reviewers.
The second author has been supported by a UKRI Future Leaders Fellowship, 
‘Structure vs Invariants in Proofs’, project reference MR/S035540/1.}
}

\author{}

\author{\IEEEauthorblockN{Jim de Groot}
\IEEEauthorblockA{\textit{Mathematical Institute} \\
\textit{University of Bern}\\
Bern, Switzerland \\
https://orcid.org/0000-0003-1375-6758}
\and
\IEEEauthorblockN{Ian Shillito}
\IEEEauthorblockA{\textit{School of Computer Science} \\
\textit{University of Birimingham}\\
Birmingham, UK \\
https://orcid.org/0009-0009-1529-2679}
\and
\IEEEauthorblockN{Ranald Clouston}
\IEEEauthorblockA{\textit{School of Computing} \\
\textit{Australian National University}\\
Canberra, Australia \\
ranald.clouston@anu.edu.au}
}

\maketitle

\begin{abstract}
  The intuitionistic modal logics considered between Constructive K (CK) and
  Intuitionistic K (IK) differ in their treatment of the possibility
  (diamond) connective.
  It was recently rediscovered that some logics between CK and IK also
  disagree on their diamond-free fragments, with only some
  remaining conservative over the standard axiomatisation of intuitionistic
  modal logic with necessity (box) alone.
  We show that relational Kripke semantics for CK can be extended with frame
  conditions for all axioms in the standard axiomatisation of IK, as well
  as other axioms previously studied.
  This allows us to answer open questions about the (non-)conservativity of
  such logics over intuitionistic modal logic without diamond.
  Our results are formalised using the Rocq Prover.
\end{abstract}

\begin{IEEEkeywords}
Intuitionistic modal logic, Relational semantics, Completeness, Rocq Prover
\end{IEEEkeywords}

\section{Introduction}

Which logic provides the foundation for intuitionistic modal logics, by analogy with the logic $\log{K}$ for classical modal logics? If we consider
necessity, $\Box$, but disregard possibility, $\Diamond$, then the answer has, until recently, appeared uncontroversial: we extend intuitionistic
propositional logic with the inference rule of necessitation (if $p$ is a theorem, then so is $\Box p$) and axiom
  \begin{myenumerate}
    \setlength{\itemindent}{1em}
    \myitem{$\mathsf{K_{\Box}}$} \label{ax:Kb}
          $\Box(\phi \to \psi) \to (\Box \phi \to \Box \psi)$
  \end{myenumerate}
  Since this logic, which we here call $\log{CK}_{\Box}$, was introduced by
  Bo\v{z}i\'{c} and Do\v{s}en~\cite{BozDos84}, it and its extensions have been
  studied and applied in a literature too large to summarise here;
  some examples are given in Section~\ref{sec:motivation-box}.

How $\log{CK}_{\Box}$ should be extended with $\Diamond$ has received various answers. Consider the following axioms, where we follow the
naming conventions of Dalmonte, Grellois, and Olivetti~\cite{DalGreOli20}:
  \begin{myenumerate}
    \setlength{\itemindent}{1em}
    \myitem{$\mathsf{K_{\Diamond}}$} \label{ax:Kd}
          $\Box(\phi \to \psi) \to (\Diamond \phi \to \Diamond \psi)$
    \myitem{$\mathsf{N_{\Diamond}}$} \label{ax:Nd}
          $\Diamond\bot \to \bot$
    \myitem{$\mathsf{C_{\Diamond}}$} \label{ax:Cd}
          $\Diamond(\phi \vee \psi) \to \Diamond \phi \vee \Diamond \psi$
    \myitem{$\mathsf{I_{\Diamond\Box}}$} \label{ax:Idb}
          $(\Diamond \phi \to \Box \psi) \to \Box(\phi \to \psi)$
  \end{myenumerate}
\noindent
Constructive $\log{K}$ ($\log{CK}$)~\cite{BeldePRit01} extends $\log{CK}_{\Box}$ with \axref{ax:Kd} only; Wijesekera's $\log{K}$ ($\log{WK}$)~\cite{Wij90}
extends $\log{CK}$ further with \axref{ax:Nd}. These logics are proof theoretically natural, attained by restricting the sequent calculus for classical
$\log{K}$ to single conclusions ($\log{CK}$) or zero or one conclusions ($\log{WK}$). Both were originally also motivated by applications in AI: the
notion of context in knowledge representation and reasoning ($\log{CK}$)~\cite{Pai03,MendeP05}, and representing states with partial knowledge, as well as
constructive concurrent dynamic logic ($\log{WK}$)~\cite{WijNer05},
as discussed further in Section~\ref{sec:motivation}. Intuitionistic $\log{K}$ ($\log{IK}$)~\cite{Fis84}%
\footnote{The name $\log{IK}$ has been used inconsistently in the literature, sometimes for the logics that we here call $\log{CK_\Box}$ and
$\log{CK}$.}
has all the above axioms. It is the logic specified by Fischer Servi's translation to classical $\log{(K,S4)}$-bimodal logic~\cite{Fis77,Fis81,Fis84}, and by the standard
translation to intuitionistic first order logic~\cite{Sim94}.
$\log{IK}$ also respects the  G\"{o}del-Gentzen double negation translation from classical modal
logic, although this also holds for the logic without $\axref{ax:Cd}$~\cite{DasMar23}. These are by no means the only options for logics between
$\log{CK}$ and $\log{IK}$; we mention also Kojima's logic for intuitionistic neighbourhood models~\cite{Koj12}, which lies between $\log{CK}$ and
$\log{WK}$, and Forward confluence $\log{IK}$ ($\log{FIK}$)~\cite{BalGaoGenOli24}, which modifies $\log{IK}$ by replacing \axref{ax:Idb} with
a weaker axiom.

While different notions of $\Diamond$ have arisen from different motivations, it has generally been assumed that only $\Diamond$ is controversial,
and that these logics agree with $\log{CK}_{\Box}$ on their $\Diamond$-free fragments. This was shown to be incorrect in Grefe's 1999
thesis~\cite{Gre99}. Grefe showed that the $\Diamond$-free formula $(\lnot\Box\bot\to\Box\bot)\to\Box\bot$ holds in $\log{IK}$ but not in
$\log{CK}_{\Box}$. This observation was not published, and was only recently rediscovered by Das and Marin~\cite{DasMar23}, who showed,
among other results, that while $\log{IK}$ is not conservative over $\log{CK}_{\Box}$, the logic $\log{CK}\oplus\axref{ax:Nd}\oplus\axref{ax:Cd}$,
and hence its sublogics such as $\log{CK}$ and $\log{WK}$, are.

The (re)discovery that the logics between $\log{CK}$ and $\log{IK}$ are not as well understood as previously thought raises many questions.
With each new axiom that is considered in this space, these questions multiply. Working via Hilbert axiomatisations only is notoriously intractable, and
while proof theoretic methods were used successfully by Das and Marin to clarify the status of $\log{CK}\oplus\axref{ax:Nd}\oplus\axref{ax:Cd}$,
the effort involved was considerable. In this paper we instead explore the (bi)relational semantics of Kripke frames.
Such semantics are known for $\log{CK}$~\cite{MendeP05}, $\log{WK}$~\cite{Wij90}, $\log{FIK}$~\cite{BalGaoGenOli24}, and $\log{IK}$~\cite{Fis81},
although imprecisions in the treatment of $\log{WK}$ led to the soundness proof with respect to the semantics for that logic being called ``inconclusive''~\cite{MendeP05}%
\footnote{More precisely, Mendler and De Paiva argued that flaws in Wijesekera's work with WK made it
unsuitable to conclude soundness for CK, but their argument holds equal force as a criticism of the development
for WK itself.}%
. Moreover, the different choices made both for conditions on the relations and for the interpretations of the modal
operators impede comparisons between these logics.

In this paper we take the relational semantics for $\log{CK}$ as a unifying semantics, and give frame conditions for each
of the axioms \axref{ax:Nd}, \axref{ax:Cd}, and \axref{ax:Idb}. This allows us to provide
completeness proofs for each of these axioms independently. In particular, this answers the challenge of Das and Marin~\cite[Section~7]{DasMar23} to
provide relational semantics for logics between $\log{WK}$ and $\log{IK}$. We use these semantics to analyse the $\Diamond$-free fragments,
making the new observations that $\log{CK}\oplus\axref{ax:Cd}\oplus\axref{ax:Idb}$, and hence $\log{CK}\oplus\axref{ax:Idb}$, are
conservative over $\log{CK_{\Box}}$. This is a surprising result, as no logics including \axref{ax:Idb} were previously shown to retain
conservativity; it is now clear that the \emph{combination} of \axref{ax:Idb} and \axref{ax:Nd} is to blame here.

We formalise all our results in the
Rocq Prover~\cite{Coq}, which not only adds confidence to our results 
(in particular, the doubt raised~\cite{MendeP05} about the relational semantics for $\log{WK}$ may now be considered settled), 
but is a crucial working tool for managing the profusion of logics which arise as one considers new axioms. As a proof of concept
of this methodology of working from a base relational semantics for $\log{CK}$ with support from Rocq, we go on to provide relational semantics and
conservativity results for Kojima's logic,
and for the weakening of \axref{ax:Idb} used in $\log{FIK}$.
  Each mechanised result in the paper is accompanied by a clickable rooster symbol 
  ``\raisebox{-.9mm}{\includegraphics[height=1em]{coql.png}}'' leading to
  its mechanisation.
  The full mechanisation can be found at \url{https://github.com/ianshil/CK}
  and its documentation at \url{https://ianshil.github.io/CK/toc.html}.

This paper begins by discussing constructive modal logics in more depth
in Section~\ref{sec:motivation}, before introducing the basic syntax in Section~\ref{sec:consK}. We give sound and strongly complete relational semantics for our base
logic $\log{CK}$ in Section~\ref{sec:rel-sem-CK}, then give frame conditions and completeness results for \axref{ax:Nd}, \axref{ax:Cd} and \axref{ax:Idb} in
Sections~\ref{sec:three-axioms} and~\ref{sec:eight}, and compare the resulting logics' $\Diamond$-free fragments in Section~\ref{sec:diamond-free}. We extend our
techniques to other axioms from the literature in Section~\ref{sec:other}. We finish by discussing our Rocq formalisation in Section~\ref{sec:formal} and
surveying possible further work in Section~\ref{sec:further}.
%

\section{Constructive K and its extensions}\label{sec:motivation}

  In this section we elaborate on the logics that lie at the heart of the paper.

\subsection{Intuitionistic logic with boxes}\label{sec:motivation-box}

  Intuitionistic modal logics with a necessity (box) modality but no diamond,
  or where diamond is viewed as a derived modality, date back to
  1965~\cite{Bul65b}. Early literature often takes an $\log{S4}$-perspective
  on the modality~\cite{Pra65,Ono77,Fon86,BiePai96}.
  The first occurrence of $\log{CK}_{\Box}$ appears to be in 1984~\cite{BozDos84,Dos85},
  where it is called $\boldsymbol{H\!K\Box}$. Subsequently, it has been widely
  studied under various names, including
  $\mathbf{IntK}$~\cite{WolZak98}, $\mathbf{IntK_{\Box}}$~\cite{WolZak97,WolZak99},
  $\mathbf{IK_{\Box}}$~\cite{BeldePRit01}, $\mathbf{IK}$~\cite{Kak07}
  and $iK$~\cite{DasMar23}.
  We highlight some of its appearances.

\begin{example}[Modalities for context]
  The modal operator $\Box$ can be used to formalise the idea of a \emph{context}, a notion in the field of \emph{knowledge representation}.
  For example, if $\Box$ denotes the context of Sherlock Holmes, then it is true that
  Sherlock Holmes lives in Baker Street, i.e.~$\Box(\text{Sherlock lives on Baker Street})$.
  We can use multiple modalities, denoted as
  $\Box_{\kappa}$ or $\texttt{ist}(\kappa, \phi)$ (for \emph{is t}rue), to model several contexts $\kappa$.

  From a computer science point of view, contexts can for example be used to deal
  with databases with multiple conventions~\cite{McCBuv94,McC96}.
  More generally McCartey states that an ``AI goal'' is to allow
  simple axioms for commonsense to be lifted to other contexts~\cite{McC93,McCBuv94}.
  This idea was further studied in e.g.~\cite{BuvBuvMas95,Nay94,Mas95}, and
  in~\cite{Pai03} it was shown that the common core of the latter three is
  given by (a multimodal version of) $\log{CK}_{\Box}$.
\end{example}

\begin{example}[Modalities for knowledge]
  An epistemic interpretation of $\Box\phi$ is that an agent \emph{knows}
  or \emph{believes} $\phi$ to be true.
  In an intuitionistic context, the epistemic operator can be used to model
  an ideal reasoner (the agent) in a growing informational state (an intuitionistic Kripke frame)~\cite{Pro12,JagMar16}.
  This motivates the reflection principle $\Box\phi \to \phi$: if an agent
  knows that $\phi$ is true, then it is true.
  
  Alternatively, one can take $\Box\phi$ to represent ``belief and knowledge as the product of verification''~\cite{ArtPro16}.
  In this view, the intuitionistic truth of a proposition entails knowledge of it,
  because an intuitionistic proof is a verification, so one gets
  the co-reflection principle $\phi \to \Box\phi$ as an axiom.
  A priori, this logic does not rule out false beliefs.
  The extension of $\log{CK}_{\Box}$ with co-reflection is called 
  $\log{IEL}^-$, and has recently received a lot of attention~\cite{Rog20,Rog20b,Bro21,SuSan23}.
  (Incidentally, $\log{IEL}^-$ coincides with
  the inhabitation logic of Haskell's applicative functors~\cite{McBPat08},
  as was noted in~\cite{LitPolRab17,Rog20}.)
\end{example}

\begin{example}[Curry-Howard correspondence]
  Constructive versions of $\log{S4}$ received a lot of attention from a
  type-theoretic perspective~\cite{GhaPaiRit98,BiePai00,AleMenPaiRit01}.
  This sparked attempts to give a
  Curry-Howard correspondence for $\log{CK}_{\Box}$ as well.
  The first such correspondence was established by Bellin, De Paiva and Ritter~\cite{BeldePRit01}, and was later refined by Kakutani~\cite{Kak07}. In their work, $\Box$ is the type former corresponding to a term constructor which can be interpreted as a sort of substitution. This is still an active field of research: a new correspondence for the $\land\lor$-free fragment of $\log{CK}_{\Box}$ was recently discovered by Acclavio, Catta and Olimpieri~\cite{AccCatOli23}, and a Curry-Howard correspondence for $\log{IEL}^-$
  was given in~\cite{Rog20b,Bro21}.
\end{example}

  Extensions of $\log{CK}_{\Box}$, for example with the $\log{S4}$ axioms,
  have also found many applications, ranging from hardware verification~\cite{FaiMen97} to
  access control~\cite{GarPfe06} to
  staged computation~\cite{DavPfe96,DavPfe01,NanPfePie08},
  and from the productivity of recursive definitions~\cite{BizGraCloMoeBir16}
  to global elements in synthetic topology~\cite{Shu18}.

\subsection{Intuitionistic logic with boxes and diamonds}

  As in the mono-modal case, the study of intuitionistic modal logic with two
  modalities, $\Box$ and $\Diamond$, started with intuitionistic
  analogues of $\log{S4}$, for example in~\cite{Pri57,Bul65,Pra65,Bul66,PfeDav01}.
  These were then generalised to intuitionistic counterparts of $\log{K}$,
  where the variety of axioms defining $\Diamond$ and relating
  $\Box$ and $\Diamond$ (such as \axref{ax:Nd}, \axref{ax:Cd} and \axref{ax:Idb})
  resulted in a wide variety of intuitionistic modal logics.
  
  One of the simplest intuitionistic modal logics with an independent
  box and diamond modality is Constructive $\log{K}$ ($\log{CK}$).
  This extends $\log{CK}_{\Box}$ with $\axref{ax:Kd}$,
  and was described in~\cite{BeldePRit01},
  following an adaptation of Prawitz's suggestions~\cite{Pra65}
  for intuitionistic $\log{S4}$ to $\log{K}$.
  Adding various configurations of~\axref{ax:Nd}, \axref{ax:Cd} and \axref{ax:Idb}
  gives rise to logics including $\log{WK}$~\cite{Wij90,WijNer05},
  and $\log{IK}$~\cite{Fis81,Fis84,PloSti86,Sim94}.
  We point out some uses of these logics.

\begin{example}[Satisfiability in context]
  In the setting of knowledge representation, $\Diamond_{\kappa}\phi$
  can be can be interpreted as $\phi$ being \emph{satisfiable in context}
  $\kappa$~\cite{MendeP05}. 
  Under this light, the diamond-containing axiom \axref{ax:Kd} of $\log{CK}$ is a sensible one to adopt. Indeed, truth of the implication of $\phi\to\psi$ in a given context allows one to infer the satisfiability of $\psi$ from satisfiability of $\phi$.
  However, we may not so readily accept other axioms, like \axref{ax:Nd} and \axref{ax:Cd}. 
  For example, \axref{ax:Nd} declares that falsity is satisfiable in no context,
  so adding it to our system prevents us from identifying inconsistent contexts.
\end{example}

\begin{example}[Parallel computation]
  The logic $\log{WK}$ is obtained by adding~\axref{ax:Nd} to $\log{CK}$~\cite{Wij90,WijNer05}.
  It was put forward as a constructivised version of concurrent dynamic logic~\cite{Pel87}.
  Here $\Diamond_{\alpha}\phi$ means that an execution of program $\alpha$
  reaches a state where $\phi$ holds,
  and $\Diamond_{\alpha \cap \beta}$ is read as
  ``$\alpha$ and $\beta$ can be executed in parallel so that upon termination
  (in either computation path) $\phi$ holds,''
  so it is equivalent to $\Diamond_{\alpha}\phi \wedge \Diamond_{\beta}\phi$.
  This interpretation prevents distributivity of diamonds over joins (i.e.~\axref{ax:Cd}),
  because the truth of $\Diamond_{\alpha \cap \beta}(\phi \vee \psi)$ may be
  witnessed by $\Diamond_{\alpha}\phi$ and $\Diamond_{\beta}\psi$.
\end{example}

\begin{example}[Diamonds for consistency]
  Both classically~\cite{JaaHin62} and intuitionisticially~\cite{Wil92},
  the diamond operator is used in epistemic logic to denote
  consistency or a kind of possibility of $\phi$ with respect to an agent's knowledge.
  The disentanglement of box and diamond in the intuitionistic setting allows
  us to reevaluate the axioms we impose on diamonds.
  
  For example, if $\phi \to \psi$ is known and $\phi$ is possible (or consistent),
  then it stands to reason that $\psi$ is consistent too,
  so~\axref{ax:Kd} is a plausible axiom.
  The axiom~\axref{ax:Nd} holds in the intuitionistic epistemic logic studied in~\cite{Wil92},
  but we may not always want this to be the case:
  Since the point of view taken in~\cite{ArtPro16} allows an agent
  to hold a false belief, $\bot$ could be a consequence of their knowledge,
  so that $\Diamond\bot$ holds and we must reject~\axref{ax:Nd}.
\end{example}

\begin{example}[Evaluation logic]
  In~\cite{Pit91}, Pitts introduces \emph{evaluation logic}, which is an
  extension of $\log{IK}$. This logic has modal formulas of the form
  $[x \Leftarrow E]\phi(x)$ and $\langle x \Leftarrow E \rangle \phi(x)$,
  which express that if $x$ is evaluated to $E$, then $\phi(x)$ will
  necessarily or possibly hold.
  The logic is designed to reason about computation specified using
  a style of operational semantics called natural semantics.
\end{example}

\begin{example}[Curry-Howard correspondence]
  It is natural to wonder whether the Curry-Howard correspondence for
  $\log{CK}_{\Box}$ can be extended to one of the above-mentioned constructive
  modal logics with a diamond.
  A correspondence for $\log{CK}$ was given in~\cite{BeldePRit01},
  but this turned out to have deficiencies~\cite{Kak07,PaiRit11,Kav16}
  which as of yet have not been entirely corrected.

  A correspondence for $\log{WK}$ would be particularly attractive, given its
  interpretation as parallel computation. This would allow one to generate
  programs containing concurrency which are verified by extraction.
\end{example}

\section{The formal system(s)}\label{sec:consK}
  
  In this section we fix the syntax and axiomatic calculus for $\log{CK}$
  and its extensions.
  Taking a countably infinite set of propositional variables~$\Prop=\{p,q,r, \dots\}$, we define
  the language $\lan{L}$ via the following grammar (\rocqdoc{Syntax.im_syntax.html\#form}):
  \begin{equation*}
    \varphi ::= p\in\Prop \mid \bot \mid \varphi \land \varphi \mid \varphi
    \lor \varphi \mid \varphi \rightarrow \varphi \mid \Box\varphi \mid \Diamond\varphi
  \end{equation*}
  We abbreviate $\neg\phi:=\phi\rightarrow\bot$ and $\top:=\neg\bot$.
  We use Greek lowercase letters, e.g. $\phi, \psi, \chi$ and $\delta$, to
  denote formulas, and Greek uppercase letters, e.g. $\Gamma, \Delta, \Phi, \Psi$,
  for multisets of formulas.
  For such a multiset $\Gamma$ we define the multisets
  $\Box(\Gamma) := \{\Box\varphi \mid \varphi\in\Gamma\}$
  and $\Box^{-1}(\Gamma) := \{ \phi \mid \Box\phi \in \Gamma \}$ and similarly for
  $\Diamond(\Gamma)$ and $\Diamond^{-1}(\Gamma)$.
  If $\Gamma$ is finite, $\bigvee\Gamma$ denotes the disjunction of
  all formulas in $\Gamma$.
  We distinguish the logical connectives in $\lan{L}$ from those used in our metalogic
  with a dot on top of the metalogical connectives,
  e.g.~$\dneg$.
  Since $\Prop$ is countably infinite and we have finitely
  many connectives we can enumerate the formulas of $\lan{L}$ (\rocqdoc{GHC.enum.html\#form_enum}).

  All logics we consider are syntactically defined as
  extensions of the base logic $\log{CK}$ with axioms.
  We describe this formally by defining a logic parametrised in a
  set $\Ax \subseteq \mathbf{L}$ of axioms, so that $\Ax = \emptyset$
  corresponds to $\log{CK}$.\footnote{In the formalisation
  we use as parameter a set of formulas closed under substitution. 
  Given a set $\Ax$ of axioms, the set of all instances of axioms in $\Ax$
  is such a set.}
  We denote by $\mathcal I(\Ax)$ the set of all instances of axioms
  in a given set $\Ax$.

\begin{defn}[\rocqdoc{GHC.CKH.html\#extCKH_prv}]
  Let $\log{\CKAx}$ (\rocqdoc{GHC.CKH.html\#Axioms}) be an axiomatisation
  of intuitionistic logic (\rocqdoc{GHC.CKH.html\#IAxioms})
  together with $\axref{ax:Kb}$ and $\axref{ax:Kd}$.
  For a set $\Ax \subseteq \mathbf{L}$, 
  define the generalised Hilbert calculus $\log{CK \oplus \Ax}$ by:
  \medskip
  \begin{multicols}{2}%
  \begin{enumerate}\itemsep=6pt
    \setlength{\itemindent}{1.1em}
    \renewcommand{\labelenumi}{(\theenumi) }
    \renewcommand{\theenumi}{$\mathsf{Ax}$}
    \item \label{rule:Ax}
          $\dfrac{\phi\in\mathcal{I}(\CKAx)\cup \mathcal I (\Ax)}{\Gamma \vdash \phi}$
    \renewcommand{\theenumi}{$\mathsf{MP}$}
    \item \label{rule:MP}
          $\dfrac{\Gamma \vdash \phi \qquad \Gamma \vdash \phi \to \psi}{\Gamma \vdash \psi}$
    \setlength{\itemindent}{2.1em}
    \renewcommand{\theenumi}{$\mathsf{Nec}$}
    \item \label{rule:Nec}
          $\dfrac{\emptyset \vdash \phi}{\Gamma \vdash \Box\phi}$
    \renewcommand{\theenumi}{$\mathsf{El}$}
    \item \label{rule:El}
          $\dfrac{\phi \in \Gamma}{\Gamma \vdash \phi}$
  \end{enumerate}
  \end{multicols}
  
\medskip\noindent
  We call \emph{consecutions} expressions of the form $\Gamma\vdash\phi$. 
  We say that $\Gamma\vdash\varphi$ is \emph{provable in} $\log{CK}\oplus\Ax$, and write $\Gamma\deriv{\Ax}\varphi$, if
  there exists a tree of consecutions built using the rules above with $\Gamma\vdash\varphi$ as root and adequate applications of rules \ref{rule:El} and \ref{rule:Ax} as leaves.
  We also write $\Gamma\noderiv{\Ax}\varphi$ if $\dneg(\Gamma\deriv{\Ax}\varphi)$, and write $\Gamma\deriv{\Ax}\Delta$ for $\Delta\subseteq\lan{L}$ if
  there is a finite $\Delta'\subseteq\Delta$ such that $\Gamma\deriv{\Ax}\bigvee\Delta'$.
  If $\Ax=\{\log{A_0},\dots,\log{A_n}\}$ is finite, we write
  $\log{CK} \oplus \log{A_0} \oplus \dots \oplus \log{A_n}$ for $\log{CK} \oplus \Ax$.
\end{defn}

  Sometimes $\log{CK} \oplus \Ax$ has an existing name in the literature.
  For example, $\log{CK} \oplus \axref{ax:Nd}$ is known as $\log{WK}$.
  In such cases, we use both names interchangeably.
The rules displayed below, where $\sigma$ is a uniform substitution, are admissible
in $\log{CK \oplus\Ax}$ for any set of axioms $\Ax$:
\begin{center}
\begin{tabular}{c@{\hspace{1cm}}l@{\hspace{1cm}}l}
$
\inferLineSkip=3pt
\infer{\Gamma,\Gamma'\vdash\varphi}{\Gamma\vdash\varphi}
$ & 
$
\inferLineSkip=3pt
\infer{\Gamma\vdash\varphi}{
	\{\Gamma\vdash\delta \; \mid \; \delta\in\Delta\}
	&
	\Delta\vdash\varphi}
$ & 
$
\inferLineSkip=3pt
\infer{\Gamma^\sigma\vdash\varphi^\sigma}{\Gamma\vdash\varphi}
$\\
\end{tabular}
\end{center}
\begin{center}
\begin{tabular}{c@{\hspace{2cm}}c}
$
\inferLineSkip=3pt
\infer={\Gamma\vdash\phi\rightarrow\psi}{\Gamma,\phi\vdash\psi}
$ & 
$
\inferLineSkip=3pt
\infer{\Box(\Gamma)\vdash\Box\varphi}{\Gamma\vdash\varphi}
$  \\
\end{tabular}
\end{center}
The three topmost rules show that $\log{CK} \oplus \Ax$ is a monotone (\rocqdoc{GHC.logic.html\#extCKH_monot}),
compositional (\rocqdoc{GHC.logic.html\#extCKH_comp}) and
structural (\rocqdoc{GHC.logic.html\#extCKH_struct}) relation, respectively.
Furthermore, we can show that $\Gamma \deriv{\Ax} \phi$ if and only if
there is a finite $\Gamma' \subseteq \Gamma$ such that $\Gamma' \deriv{\Ax} \phi$
(\rocqdoc{GHC.logic.html\#extCKH_finite}).
Therefore $\log{CK} \oplus \Ax$ is a finitary logic~\cite{Kra99}.
The left rule of the bottom row can be applied in both directions and
corresponds to the deduction-detachment theorem (\rocqdoc{GHC.properties.html\#extCKH_Deduction_Theorem},\rocqdoc{GHC.properties.html\#extCKH_Detachment_Theorem}).%
\footnote{This notably implies that $\log{CK}$ and $\log{WK}$ satisfy the deduction theorem.
Mendler and De Paiva make the opposite claim~\cite[Footnote 2]{MendeP05} in their analysis of Wijesekera's work,
but despite some imprecision in his definitions, we believe that Wijesekera had in mind a calculus like ours where
the rule (\axref{rule:Nec}) has $\emptyset$, and not a general $\Gamma$, in its premise.}
The right rule of the bottom row captures the modal sequent calculus rule (\rocqdoc{GHC.properties.html\#K_rule}).

\begin{defn}\label{def:theory}
  A set of formulas $\Gamma \subseteq \mathbf{L}$ is a \emph{theory} (\rocqdoc{GHC.Lindenbaum_lem.html\#closed})
  if it is deductively closed, i.e.\ $\Gamma \deriv{\Ax} \phi$ implies $\phi \in \Gamma$.
  It is \emph{prime} (\rocqdoc{GHC.Lindenbaum_lem.html\#prime}) if $\phi \vee \psi \in \Gamma$ implies
  $\phi \in \Gamma$ or $\psi \in \Gamma$, for all $\phi, \psi \in \mathbf{L}$.
\end{defn}

  We note that the prime theories we consider need not 
  be consistent, and can thus contain $\bot$. This reflects the existence
  of an inconsistent world in the semantics, defined in Section~\ref{sec:rel-sem-CK}.

\begin{lemma}[Lindenbaum \rocqdoc{GHC.Lindenbaum_lem_pair.html\#Lindenbaum_pair}]\label{lem:Lind}
Let $\Gamma\cup\Delta\subseteq\lan{L}$.
If $\Gamma\noderiv{\Ax}\Delta$ then there is a prime theory $\Gamma'\supseteq\Gamma$
such that $\Gamma'\noderiv{\Ax}\Delta$.
\end{lemma}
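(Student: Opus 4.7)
The plan is to use the standard Lindenbaum construction. Since $\lan{L}$ is countable (noted earlier in the excerpt), fix an enumeration $\phi_0,\phi_1,\phi_2,\ldots$ of all formulas. Set $\Gamma_0 := \Gamma$ and, inductively, $\Gamma_{n+1} := \Gamma_n \cup \{\phi_n\}$ whenever $\Gamma_n \cup \{\phi_n\} \noderiv{\Ax} \Delta$, and $\Gamma_{n+1} := \Gamma_n$ otherwise. Put $\Gamma' := \bigcup_n \Gamma_n$. The inclusion $\Gamma \subseteq \Gamma'$ and the induction $\Gamma_n \noderiv{\Ax} \Delta$ for every $n$ are immediate from the hypothesis $\Gamma \noderiv{\Ax} \Delta$ together with the construction.

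From these I would conclude $\Gamma' \noderiv{\Ax} \Delta$ by appealing to the finitariness of $\deriv{\Ax}$ already recorded in the excerpt: any derivation of a finite disjunction from $\Delta$ uses only finitely many premises, so it already sits in some $\Gamma_n$, contradicting $\Gamma_n \noderiv{\Ax} \Delta$. Deductive closure of $\Gamma'$ then follows by maximality: if $\Gamma' \deriv{\Ax} \phi_n$ but $\phi_n \notin \Gamma'$, the construction must have rejected $\phi_n$ at stage $n$ because $\Gamma_n \cup \{\phi_n\} \deriv{\Ax} \Delta$, and the admissible cut rule from the excerpt turns this into $\Gamma' \deriv{\Ax} \Delta$, a contradiction. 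For primeness, suppose $\phi \lor \psi \in \Gamma'$ but neither disjunct belongs to $\Gamma'$; the same maximality argument yields finite $\Delta_1,\Delta_2 \subseteq \Delta$ with $\Gamma' \cup \{\phi\} \deriv{\Ax} \bigvee\Delta_1$ and $\Gamma' \cup \{\psi\} \deriv{\Ax} \bigvee\Delta_2$, and intuitionistic disjunction elimination (via the deduction theorem listed in the excerpt) combines these with $\phi \lor \psi \in \Gamma'$ to yield $\Gamma' \deriv{\Ax} \bigvee(\Delta_1 \cup \Delta_2)$, contradicting $\Gamma' \noderiv{\Ax} \Delta$.

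The argument is folklore for propositional Lindenbaum lemmas, so no step is genuinely difficult, but two points deserve care. First, prime theories here need not be consistent, as flagged just before the statement; this is compatible with the construction precisely because the ``bad'' object is an entire set $\Delta$ rather than a single formula, so when $\bot$ is not derivable to any finite disjunction from $\Delta$, nothing blocks $\bot$ itself from entering $\Gamma'$ at its enumeration stage. Second, the main technical burden, particularly for the Rocq formalisation, is the propositional bookkeeping around $\bigvee\Delta'$ for varying finite $\Delta' \subseteq \Delta$: combining two such derivations into one (needed both in the deductive-closure and primeness steps) requires a uniform choice of a common finite $\Delta_1 \cup \Delta_2 \subseteq \Delta$ and an explicit use of the deduction theorem together with intuitionistic disjunction elimination. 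I expect this bookkeeping, rather than any conceptual step, to be the main obstacle in a fully rigorous write-up.
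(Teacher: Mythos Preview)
Your proposal is correct and is precisely the ``routine argument'' the paper alludes to without spelling out: the standard enumeration-based Lindenbaum construction, using finitariness for the limit step and the deduction theorem plus disjunction elimination for closure and primeness. Your remarks about allowing $\bot$ into $\Gamma'$ and about the bookkeeping for finite $\Delta' \subseteq \Delta$ are apt and match what the Rocq development has to handle.
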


This can be proved by a routine argument.
We often use the Lindenbaum lemma with $\Delta$ of the form $\{ \phi \}$~(\rocqdoc{GHC.Lindenbaum_lem.html\#Lindenbaum}).

\section{Relational semantics for $\log{CK}$}\label{sec:rel-sem-CK}

  We present a relational semantics for $\log{CK}$ which is a light
  modification of Mendler and De Paiva~\cite{MendeP05}.
  Their semantics is characterised by
  the interpretation of \emph{both} modalities over all intuitionistic successors of worlds, and by
  the existence of worlds that satisfy all formulas, including $\bot$.
  Such worlds were introduced by Veldman~\cite{Vel76} as ``sick'' worlds,
  whereas Mendler and De Paiva call them ``fallible'';
  we follow Ilik, Lee and Herbelin's \emph{exploding} terminology~\cite{IliLeeHer10}.
  Because all exploding worlds are essentially the same with respect to formula satisfaction
  we slightly simplify the Mendler-De Paiva semantics by using a single exploding world,
  instead of a set.

\begin{defn}
  A \emph{$\log{CK}$-frame} (\rocqdoc{Kripke.kripke_sem.html\#frame})
  is a tuple $(X, \expl, \leq, R)$ where $(X, \leq)$ is a preorder,
  $\expl \in X$ is a maximal element of $(X, \leq)$,
  and $R$ is a binary relation on $X$ such that $\expl R x$ if and only if $x = \expl$.
  We denote by $\mathcal{CK}$ the class of all $\log{CK}$-frames.
  
  A \emph{valuation} is a map $V$ that assigns to each proposition letter $p$
  an upset $V(p)$ of $(X, \leq)$ such that $\expl \in V(p)$.
  A \emph{$\log{CK}$-model} (\rocqdoc{Kripke.kripke_sem.html\#model}) is a $\log{CK}$-frame with a valuation.
  The interpretation of a formula $\phi$ at a world $x$ in a $\log{CK}$-model
  $\mo{M} = (X, \expl, \leq, R, V)$ (\rocqdoc{Kripke.kripke_sem.html\#forces}) is defined recursively by
  \begin{align*}
    \mo{M}, x \Vdash p &\iff x \in V(p) \\
    \mo{M}, x \Vdash \bot &\iff x = \expl \\
    \mo{M}, x \Vdash \phi \wedge \psi
              &\iff \mo{M}, x \Vdash \phi \text{ and } \mo{M}, x \Vdash \psi \\
    \mo{M}, x \Vdash \phi \vee \psi
              &\iff \mo{M}, x \Vdash \phi \text{ or } \mo{M}, x \Vdash \psi \\
    \mo{M}, x \Vdash \phi \to \psi
              &\iff \forall y \; ( x \leq y
                                    \text{ and } \mo{M}, y \Vdash \phi \\
                                    &\hspace{7em}\text{ imply } \mo{M}, y \Vdash \psi )\\
    \mo{M}, x \Vdash \Box\phi
            &\iff \forall y, z \; (x \leq y
                                     \text{ and } yRz \\
                                     &\hspace{7em}\text{ imply } \mo{M}, z \Vdash \phi) \\
    \mo{M}, x \Vdash \Diamond\phi
            &\iff \forall y \; (x \leq y
                            \text{ implies } \exists z \in X \\
                            &\hspace{7em}\text{ s.t. } yRz \text{ and } \mo{M}, z \Vdash \phi)
  \end{align*}
  Let $\Gamma \cup \{ \phi \} \subseteq \lan{L}$
  and let $\mo{M}$ be a $\log{CK}$-model.
  We write $\mo{M}, x \Vdash \Gamma$ if $x$ satisfies all $\psi \in \Gamma$,
  and we say that $\mo{M}$ \emph{validates} $\Gamma \vdash \phi$ if
  $\mo{M}, x \Vdash \Gamma$ implies $\mo{M}, x \Vdash \phi$ for all worlds
  $x$ in $\mo{M}$.
  A $\log{CK}$-frame $\mo{X}$ \emph{validates} $\Gamma \vdash \phi$ if every model
  of the form $(\mo{X}, V)$ validates the consecution, and it validates a
  formula $\phi$ if it validates the consecution $\emptyset \vdash \phi$.
  If $\mathcal{F}$ is a class of $\log{CK}$-frames, then we say that
  \emph{$\Gamma$ semantically entails $\phi$ on $\mathcal{F}$}
  (\rocqdoc{Kripke.kripke_sem.html\#loc_conseq}),
  and write $\Gamma \Vdash_{\mathcal{F}} \phi$,
  if every $\log{CK}$-frame in $\mathcal{F}$ validates $\Gamma \vdash \phi$.
\end{defn}

  The universal quantifier in the interpretation of $\Diamond$ prevents
  distributivity of diamond over disjunctions, and thus is often not
  necessary when studying logics that include~$\axref{ax:Cd}$.
  We reiterate that $\expl$ is a maximal element in $(X, \leq)$ but not
  necessarily a top element. That is, there are no elements above $\expl$
  in the partial order $\leq$ (other than $\expl$ itself), but $\expl$ does not
  necessarily lie above all elements of $X$.

  We will show that $\log{CK}$-frames form a sound and complete semantics for $\log{CK}$.

\begin{lemma}[Persistence \rocqdoc{Kripke.kripke_sem.html\#Persistence}]
  If $\mo{M}, x \Vdash \phi$ and $x \leq y$ then $\mo{M}, y \Vdash \phi$.
\end{lemma}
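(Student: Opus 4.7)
The plan is to proceed by structural induction on the formula $\phi$, leveraging the three ingredients the semantics provides: valuations are upsets, $\expl$ is maximal in $\leq$, and $\leq$ is transitive (being a preorder).

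For the base cases: when $\phi = p$ is a propositional variable, the conclusion is immediate from the requirement that $V(p)$ is an upset of $(X,\leq)$. When $\phi = \bot$, the hypothesis $\mo{M},x \Vdash \bot$ forces $x = \expl$; since $\expl$ is maximal, $\expl \leq y$ gives $y = \expl$, and hence $\mo{M}, y \Vdash \bot$. This use of maximality (rather than top-ness) is exactly why the semantics is designed with a maximal exploding world.

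The inductive cases for $\land$ and $\lor$ are immediate from the induction hypotheses applied componentwise. For $\phi = \psi \to \chi$, $\phi = \Box\psi$, and $\phi = \Diamond\psi$, the key observation is that the semantic clauses for these connectives already quantify universally over $\leq$-successors. So if $x \leq y$ and we wish to verify the clause at $y$, any witness $z$ with $y \leq z$ (or, for $\Box$, with $y \leq z$ and $zRw$) satisfies $x \leq z$ by transitivity of $\leq$, and the original assumption at $x$ applies directly. No appeal to the induction hypothesis is even needed in these cases, just transitivity.

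I do not anticipate any real obstacle: the semantics has been set up precisely so that this lemma holds by inspection of each clause. The only subtle point is the $\bot$ case, where one must remember that $\expl$ is only required to be maximal in $\leq$ and verify that this suffices. The induction hypothesis is only genuinely used for the $\land$ and $\lor$ clauses.
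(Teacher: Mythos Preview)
Your proposal is correct and matches the paper's approach exactly: structural induction on $\phi$, with the only noteworthy case being $\bot$, which goes through precisely because $\expl$ is maximal. The paper's proof is simply a one-line version of what you wrote (``by induction; the $\bot$ case holds because $\expl$ is maximal; all other cases are as usual''), and your elaboration of the ``usual'' cases is accurate, including the observation that the clauses for $\to$, $\Box$, and $\Diamond$ need only transitivity of $\leq$ rather than the induction hypothesis.
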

\begin{proof}
  By induction on the structure of $\phi$. The $\bot$ case holds because
  $\expl$ is maximal. All other cases are as usual.
\end{proof}

\begin{proposition}[Soundness \rocqdoc{Soundness.CK_soundness.html\#CK_Soundness}]\label{prop:soundness}
  If $\Gamma \vdash_{\log{CK}} \phi$, then $\Gamma \sement{CK} \phi$.
\end{proposition}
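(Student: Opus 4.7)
The plan is to argue by induction on the derivation of $\Gamma\vdash_{\log{CK}}\phi$ in the Hilbert calculus. Before entering the induction I would first re-examine the persistence lemma that precedes the proposition: the only clause that uses the distinctive features of $\log{CK}$-frames is the one for $\bot$, where maximality of $\expl$ forces that $x=\expl$ and $x\leq y$ yield $y=\expl$. A convenient side observation, shown by a straightforward subinduction using the valuation condition $\expl\in V(p)$ and the fact that $\expl R x$ iff $x=\expl$, is that $\expl\Vdash\phi$ for every formula $\phi$; this will smooth reasoning near the exploding world.

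Four inductive cases then arise, one per rule. The rule \ref{rule:El} is immediate. For \ref{rule:MP}, reflexivity of $\leq$ together with the truth condition for implication applied at $x\leq x$ delivers the conclusion. For \ref{rule:Nec}, the induction hypothesis on $\emptyset\vdash\phi$ asserts that $\phi$ holds at every world of every $\log{CK}$-model; hence for any $x$ with $\mo{M},x\Vdash\Gamma$ and any $y,z$ satisfying $x\leq y$ and $yRz$, we immediately get $\mo{M},z\Vdash\phi$, so $\mo{M},x\Vdash\Box\phi$. It is crucial here that the premise of \ref{rule:Nec} has empty context rather than $\Gamma$.

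The substantive case is \ref{rule:Ax}, where I must validate every instance of every axiom in $\CKAx$. The propositional intuitionistic axioms are handled by the familiar case analysis on their shape, using persistence, reflexivity, and transitivity of $\leq$. For $\axref{ax:Kb}$, assuming $x\Vdash\Box(\phi\to\psi)$, $x\leq y$, $y\Vdash\Box\phi$, and taking arbitrary $z,w$ with $y\leq z$ and $zRw$, transitivity of $\leq$ gives $x\leq z$, so $w\Vdash\phi\to\psi$ and $w\Vdash\phi$ together yield $w\Vdash\psi$. I expect the main obstacle to be $\axref{ax:Kd}$, because the universally quantified reading of $\Diamond$ makes the classical pattern subtler: given $x\Vdash\Box(\phi\to\psi)$, $x\leq y$, $y\Vdash\Diamond\phi$, and $y\leq z$, I would first use $y\Vdash\Diamond\phi$ at the intuitionistic successor $z$ to extract a $w$ with $zRw$ and $w\Vdash\phi$, and then reuse the $\axref{ax:Kb}$-style argument at $x\leq z$ to upgrade $w\Vdash\phi$ to $w\Vdash\psi$; this $w$ witnesses $\mo{M},z\Vdash\Diamond\psi$, and since $z$ was arbitrary, $\mo{M},y\Vdash\Diamond\psi$ follows. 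Throughout, no frame condition beyond those built into $\mathcal{CK}$ is invoked, so the conclusion $\Gamma\sement{CK}\phi$ holds for the full class.
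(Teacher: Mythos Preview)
Your proposal is correct and follows essentially the same approach as the paper: a routine induction on the derivation, with the axiom case split by the shape of the axiom and the distinctive $\log{CK}$-frame features (maximality of $\expl$, the $\expl R x \Leftrightarrow x=\expl$ clause) invoked exactly where needed. The paper's proof is much terser---it only singles out the $\bot\to p$ axiom as the noteworthy case---but your more detailed treatment of \axref{ax:Kb} and especially \axref{ax:Kd} (where you correctly exploit the universal-quantifier reading of $\Diamond$ at the intuitionistic successor $z$ and then transitivity $x\leq y\leq z$ to apply the $\Box(\phi\to\psi)$ hypothesis) fills in precisely what ``routine'' elides.
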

\begin{proof}
  By routine induction on the structure of a proof of
  $\Gamma \vdash \phi$. In particular, validity of (any instance of)
  the axiom $\bot \to p$ follows from maximality of $\expl$.
\end{proof}

  Next, we define a canonical model for the logic $\log{CK} \oplus \Ax$,
  where $\Ax$ is any set of axioms. This gives rise to a $\log{CK}$-model
  that validates precisely the consecutions derivable in
  $\log{CK} \oplus \Ax$. 
  We can use this to obtain completeness for a specific logic
  $\log{CK} \oplus \Ax'$ with respect to some class $\mathcal{F}$
  of $\log{CK}$-frames by showing that all frames in $\mathcal{F}$ validate
  the axioms in $\Ax'$, and that the $\log{CK}$-frame underlying the canonical
  model is in $\mathcal{F}$.
  In order to achieve the latter we sometimes have to modify the
  canonical model construction, as we will see in Section~\ref{sec:eight}.
  
  While canonical models are often based on \emph{theories} (Definition~\ref{def:theory}),
  we adapt Wijesekera's use of \emph{segments}~\cite{Wij90}.
  These are theories paired with a set of theories that intuitively denote their modal
  successors.
  This technique prevents distributivity of diamonds over disjunctions,
  corresponding to the \axref{ax:Cd} axiom.
  Because we have an exploding world,
  the theories we use to define our segments are allowed to contain $\bot$.

\begin{defn}\label{def:segment}
  A \emph{segment} (\rocqdoc{Complseg.general_seg_completeness.html\#segment})
  is a pair $(\Gamma, U)$ where $\Gamma$ is a prime theory
  and $U$ is a set of prime theories such that:
  \begin{enumerate}
    \item if $\Box\phi \in \Gamma$ then
          $\phi \in \Delta$ for all $\Delta \in U$;
    \item if $\Diamond\phi \in \Gamma$
          then $\phi \in \Delta$ for some $\Delta \in U$.
  \end{enumerate}
  
  Write $SEG$ for the set of all segments
  and define relations $\subsetsim$ (\rocqdoc{Complseg.general_seg_completeness.html\#cireach})
  and $R$ (\rocqdoc{Complseg.general_seg_completeness.html\#cmreach}) on $SEG$ by:
  \begin{align*}
    (\Gamma, U) \subsetsim (\Gamma', U') &\iff \Gamma \subseteq \Gamma' \\
    (\Gamma, U) R (\Gamma', U') &\iff \Gamma' \in U
  \end{align*}
\end{defn}

  Note that $\subsetsim$ defines a preorder on $SEG$.
  Furthermore, observe that $(\mathbf{L}, \{ \mathbf{L} \})$ is a segment,
  and conversely any segment of the form $(\mathbf{L}, U)$ must have
  $U = \{ \mathbf{L} \}$: $\Diamond\bot \in \mathbf{L}$ implies that
  $U$ is non-empty, and $\Box\bot \in \mathbf{L}$ implies that each of its
  elements is $\mathbf{L}$.
  Thus, setting $\expl = (\mathbf{L}, \{ \mathbf{L} \})$ (\rocqdoc{Complseg.general_seg_completeness.html\#cexpl})
  gives rise to a
  $\log{CK}$-frame $\mo{X}_{\log{CK\oplus\Ax}} := (SEG, \expl, \subsetsim, R)$ (\rocqdoc{Complseg.general_seg_completeness.html\#CF}).
  We can equip this frame with the valuation $V$ given by
  $V(p) = \{ (\Gamma, U) \in SEG \mid p \in \Gamma \}$
  for all $p \in \Prop$ (\rocqdoc{Complseg.general_seg_completeness.html\#cval}).
  Then we obtain the model $\mo{M}_{\log{CK}\oplus\Ax} = (\mo{X}_{\log{CK}\oplus\Ax}, V_{\Sigma})$
  (\rocqdoc{Complseg.general_seg_completeness.html\#CM}).

\begin{lemma}[\rocqdoc{Complseg.general_seg_completeness.html\#Diam_existence}]\label{lem:sigseg}
  Let $\Gamma$ be a prime theory such that $\Diamond\phi \notin \Gamma$.
  Then there exists a segment $(\Gamma, U)$ such that for all
  $\Delta \in U$ we have $\phi \notin \Delta$.
\end{lemma}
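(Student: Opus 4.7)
The plan is to construct the desired $U$ by taking, for each formula $\psi$ with $\Diamond\psi \in \Gamma$, a prime theory $\Delta_\psi$ containing $\Box^{-1}(\Gamma)\cup\{\psi\}$ but avoiding $\phi$, and setting $U = \{\Delta_\psi \mid \Diamond\psi \in \Gamma\}$. (If no such $\psi$ exists, $U = \emptyset$ works trivially.)

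The key step, which I expect to be the main obstacle, is showing that for each $\psi$ with $\Diamond\psi \in \Gamma$,
\[
  \Box^{-1}(\Gamma),\, \psi \noderiv{\Ax} \phi.
\]
I would argue this by contradiction: assume $\Box^{-1}(\Gamma), \psi \deriv{\Ax} \phi$. The deduction-detachment theorem gives $\Box^{-1}(\Gamma) \deriv{\Ax} \psi \to \phi$, and then the modal sequent rule (the bottom-right admissible rule) yields $\Box(\Box^{-1}(\Gamma)) \deriv{\Ax} \Box(\psi \to \phi)$. Since every formula in $\Box(\Box^{-1}(\Gamma))$ belongs to $\Gamma$ by definition, monotonicity gives $\Gamma \deriv{\Ax} \Box(\psi \to \phi)$. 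Applying axiom \axref{ax:Kd} together with \ref{rule:MP} and the assumption $\Diamond\psi \in \Gamma$ then yields $\Gamma \deriv{\Ax} \Diamond\phi$. Because $\Gamma$ is a theory, $\Diamond\phi \in \Gamma$, contradicting the hypothesis of the lemma.

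Given this non-derivability, Lemma~\ref{lem:Lind} (applied with the singleton $\{\phi\}$) furnishes a prime theory $\Delta_\psi \supseteq \Box^{-1}(\Gamma) \cup \{\psi\}$ with $\phi \notin \Delta_\psi$. Defining $U = \{\Delta_\psi \mid \Diamond\psi \in \Gamma\}$, it remains to verify the two clauses of Definition~\ref{def:segment}. For clause~(1), if $\Box\chi \in \Gamma$ then $\chi \in \Box^{-1}(\Gamma) \subseteq \Delta_\psi$ for every $\Delta_\psi \in U$. For clause~(2), if $\Diamond\psi \in \Gamma$ then $\psi \in \Delta_\psi \in U$ by construction. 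Hence $(\Gamma, U)$ is a segment, and by construction $\phi \notin \Delta$ for every $\Delta \in U$, as required.
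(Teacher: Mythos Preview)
Your proposal is correct and follows essentially the same approach as the paper: for each $\Diamond\psi \in \Gamma$ you show $\Box^{-1}(\Gamma),\psi \noderiv{\Ax} \phi$, apply Lindenbaum to obtain $\Delta_\psi$, and set $U = \{\Delta_\psi \mid \Diamond\psi \in \Gamma\}$. The only difference is that you spell out in detail the derivation of $\Gamma \deriv{\Ax} \Diamond\phi$ via the deduction theorem, the $K$-rule, and \axref{ax:Kd}, whereas the paper compresses this into a single line.
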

\begin{proof}
  For any $\Diamond\theta \in \Gamma$, we have
  $\Box^{-1}(\Gamma), \theta \noderiv{\Ax} \phi$,
  for otherwise we would get $\Gamma, \Diamond\theta \deriv{\Ax} \Diamond\phi$ 
  hence $\Diamond\phi \in \Gamma$.
  Now use the Lindenbaum lemma~\ref{lem:Lind} to find a prime theory $\Delta_{\theta}$
  containing $\Box^{-1}(\Gamma)$ and $\theta$ but not $\phi$.
  Then $(\Gamma, \{ \Delta_{\theta} \mid \Diamond\theta \in \Gamma \})$
  is a segment with the desired property.
\end{proof}

  In particular, the previous lemma implies that for each prime theory $\Gamma$
  we can construct a $\Sigma$-segment of the form $(\Gamma, U)$:
  if $\Diamond\bot \notin \Gamma$ we use Lemma~\ref{lem:sigseg}
  and if $\Diamond\bot \in \Gamma$ then we can take $U$ to be the set
  of prime theories containing $\Box^{-1}(\Gamma)$.

\begin{lemma}[Truth lemma \rocqdoc{Complseg.general_seg_completeness.html\#truth_lemma}]
  For any segment $(\Gamma, U)$ and formula $\phi \in \mathbf{L}$ we have
  $(\Gamma, U) \Vdash \phi$ iff $\phi \in \Gamma$.
\end{lemma}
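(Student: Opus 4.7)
The plan is to proceed by induction on the structure of $\phi$. The propositional cases (atoms, $\bot$, $\wedge$, $\vee$) are routine, using that $\Gamma$ is a prime theory: for $p$ we use the definition of $V$; for $\bot$ we use that $\bot\in\Gamma$ forces $\Gamma=\lan{L}$ (hence $U=\{\lan{L}\}$ and $(\Gamma,U)=\expl$), and conversely $\expl\Vdash\bot$ by definition; $\wedge$ follows since theories are closed under conjunction; and $\vee$ uses primeness in one direction and closure under weakening in the other.

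For the implication case, I would reason by contraposition in both directions. If $\phi\to\psi\in\Gamma$ and $(\Gamma,U)\subsetsim(\Gamma',U')$ with $\phi\in\Gamma'$ (by IH), then monotonicity of $\subseteq$ and closure of $\Gamma'$ under \Prule{MP} give $\psi\in\Gamma'$, so IH yields $(\Gamma',U')\Vdash\psi$. For the converse, if $\phi\to\psi\notin\Gamma$, the deduction theorem gives $\Gamma,\phi\noderiv{\Ax}\psi$, so by Lemma~\ref{lem:Lind} there is a prime theory $\Gamma'\supseteq\Gamma\cup\{\phi\}$ with $\psi\notin\Gamma'$; I then extend $\Gamma'$ to a segment $(\Gamma',U')$ using the remark after Lemma~\ref{lem:sigseg} (any prime theory sits in some segment), and IH concludes. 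The $\Diamond$ case is analogous: one direction uses the segment clause for $\Diamond$ together with the existence of a segment with first component $\Delta$ for every prime $\Delta\in U'$; the contrapositive of the other direction applies Lemma~\ref{lem:sigseg} directly to $\Gamma$ (since $\subsetsim$ lets us take $\Gamma'=\Gamma$) to produce a witnessing segment whose $R$-successors all omit $\phi$.

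The main obstacle is the $\Box$ case, specifically its right-to-left direction (the easy direction being immediate from the segment clause for $\Box$ and the monotonicity of $\subseteq$ under $\subsetsim$). Suppose $\Box\phi\notin\Gamma$. The admissible modal rule $\Box^{-1}(\Gamma)\deriv{\Ax}\phi\;\Rightarrow\;\Gamma\deriv{\Ax}\Box\phi$ (from the right rule in the bottom row of admissible rules) gives $\Box^{-1}(\Gamma)\noderiv{\Ax}\phi$, so Lemma~\ref{lem:Lind} provides a prime theory $\Gamma^{\ast}\supseteq\Box^{-1}(\Gamma)$ with $\phi\notin\Gamma^{\ast}$. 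The subtlety is that I must package $\Gamma^{\ast}$ as an element of some $U'$ with $(\Gamma,U')$ a segment: the $\Box$-clause is automatic from $\Box^{-1}(\Gamma)\subseteq\Gamma^{\ast}$, but the $\Diamond$-clause demands that I also include, for every $\Diamond\theta\in\Gamma$, a prime theory $\Delta_{\theta}\supseteq\Box^{-1}(\Gamma)\cup\{\theta\}$. I would construct these $\Delta_{\theta}$ exactly as in the proof of Lemma~\ref{lem:sigseg} (the argument there does not depend on the avoidance condition), and set $U':=\{\Gamma^{\ast}\}\cup\{\Delta_{\theta}\mid\Diamond\theta\in\Gamma\}$. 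Then $(\Gamma,U')$ is a segment with $(\Gamma,U')R(\Gamma^{\ast},V)$ for any segment $(\Gamma^{\ast},V)$, and IH yields $(\Gamma^{\ast},V)\nVdash\phi$, refuting $\Box\phi$ at $(\Gamma,U)$ via $(\Gamma,U)\subsetsim(\Gamma,U')$.

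Throughout, care is needed that each auxiliary object really is a segment (both the $\Box$- and $\Diamond$-clauses verified), and that the relations $\subsetsim$ and $R$ behave as required by the semantic clauses; these are the bookkeeping steps where the formalisation in Rocq will be most valuable. No special properties of $\Ax$ are used, so the lemma applies uniformly to every extension $\log{CK}\oplus\Ax$.
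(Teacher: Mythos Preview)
Your proposal is correct and follows the same inductive strategy as the paper. The one notable difference is in the $\Box$ case: you rebuild the tail from scratch as $U' := \{\Gamma^{\ast}\}\cup\{\Delta_{\theta}\mid\Diamond\theta\in\Gamma\}$, whereas the paper simply takes $U' := U \cup \{\Gamma^{\ast}\}$. Since $(\Gamma,U)$ is already a segment, the $\Diamond$-clause for $(\Gamma,U')$ is inherited from $U \subseteq U'$, and the $\Box$-clause for the new element $\Gamma^{\ast}$ holds because $\Box^{-1}(\Gamma)\subseteq\Gamma^{\ast}$ by construction. This saves you the work of manufacturing the $\Delta_{\theta}$'s and verifying the $\Diamond$-clause anew; your route is correct but does more than necessary. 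Everything else---the propositional cases, the implication case, the $\Diamond$ case via Lemma~\ref{lem:sigseg}, and the need to extend the witnessing prime theory to a full segment for the $R$-relation---matches the paper.
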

\begin{proof}
  By induction on the structure of $\phi$.
  The cases for proposition letters and $\bot$ hold by construction.
  The inductive steps for meets, joins and implications are routine.
  
  If $\phi = \Diamond\psi$ then by construction
  $\Diamond\psi \in \Gamma$ implies $(\Gamma, U) \Vdash \Diamond\psi$.
  Conversely, if $\Diamond\psi \notin \Gamma$ then using
  Lemma~\ref{lem:sigseg} we can find a $\Sigma$-segment $(\Gamma, U')$ such
  that $\psi \notin \Delta$ for all $\Delta \in U'$.
  Since $(\Gamma, U) \subsetsim (\Gamma, U')$ we have
  $(\Gamma, U) \not\Vdash \Diamond\psi$ by persistence.
  
  Lastly, if $\phi = \Box\psi$ and $\Box\psi \in \Gamma$ then by construction
  we have $(\Gamma, U) \Vdash \Box\psi$.
  For the converse, suppose $\Box\psi \notin \Gamma$.
  Then $\Box^{-1}(\Gamma) \noderiv{\Ax} \psi$ 
  (for otherwise we would have $\Box\psi \in \Gamma$), so we can use the
  Lindenbaum lemma to find a prime theory $\Gamma_{\psi}$ containing
  $\Box^{-1}(\Gamma)$ but not $\psi$.
  Now we have that $(\Gamma, U \cup \{ \Gamma_{\psi} \})$ is a $\Sigma$-segment
  and $(\Gamma, U) \subsetsim (\Gamma, U \cup \{ \Gamma_{\psi} \})$,
  so that $\Gamma_{\psi}$ witnesses the fact that $(\Gamma, U) \not\Vdash \Box\psi$.
\end{proof}

\begin{theorem}[Strong completeness \rocqdoc{Complseg.general_seg_completeness.html\#Strong_Completeness}]\label{thm:completeness}
Let $\mathcal F$ be a class of frames such that $\mo{X}_{\log{CK\oplus\Ax}}\in\mathcal F$, 
and every $\mo{X}\in\mathcal F$ validates $\log{Ax}$.
Then, $\Gamma \sement{F} \phi$ entails $\Gamma \deriv{\Ax} \phi$ .
\end{theorem}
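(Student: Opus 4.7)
The plan is to argue by contraposition using the canonical model. Assuming $\Gamma \noderiv{\Ax} \phi$, I will produce a world in $\mo{M}_{\log{CK}\oplus\Ax}$ that satisfies every formula in $\Gamma$ but not $\phi$; the hypothesis $\mo{X}_{\log{CK\oplus\Ax}} \in \mathcal{F}$ then immediately delivers a counterexample within the target class $\mathcal{F}$, which refutes $\Gamma \sement{F} \phi$.

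The construction proceeds in three steps. First, I apply the Lindenbaum lemma~\ref{lem:Lind} (with $\Delta = \{\phi\}$) to obtain a prime theory $\Gamma'$ with $\Gamma \subseteq \Gamma'$ and $\Gamma' \noderiv{\Ax} \phi$; since prime theories are deductively closed, $\phi \notin \Gamma'$. Second, I build a segment of the form $(\Gamma', U)$ using the procedure described immediately after Lemma~\ref{lem:sigseg}: invoke Lemma~\ref{lem:sigseg} if $\Diamond\bot \notin \Gamma'$, and otherwise take $U$ to be the set of prime theories extending $\Box^{-1}(\Gamma')$. Third, I apply the truth lemma: every $\psi \in \Gamma$ lies in $\Gamma'$, so $(\Gamma', U) \Vdash \psi$, while $\phi \notin \Gamma'$ yields $(\Gamma', U) \not\Vdash \phi$. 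The world $(\Gamma', U)$ of the model $\mo{M}_{\log{CK}\oplus\Ax}$ thus witnesses $\Gamma \nosement{F} \phi$, since its underlying frame is in $\mathcal{F}$ by hypothesis.

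Because all the substantial work has already been done in establishing Lindenbaum, the segment construction, and the truth lemma, the remaining effort is essentially bookkeeping — no separate induction or case analysis is required. It is worth noting that the second hypothesis, that every $\mo{X} \in \mathcal{F}$ validates $\log{Ax}$, plays no role in this direction of the argument; it appears in the statement so that the theorem neatly packages the completeness side of a sound-and-complete characterisation of $\log{CK}\oplus\Ax$ by $\mathcal{F}$, a form that will be directly exploited by applications in the sections that follow. The genuine mathematical content, then, is concentrated upstream in the canonical-model apparatus, and this theorem is its payoff.
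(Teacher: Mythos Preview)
Your proposal is correct and follows essentially the same route as the paper's proof: contraposition, Lindenbaum to obtain a prime theory $\Gamma' \supseteq \Gamma$ with $\phi \notin \Gamma'$, extension to a segment $(\Gamma', U)$, and the truth lemma to conclude. Your added remark that the hypothesis ``every $\mo{X}\in\mathcal F$ validates $\log{Ax}$'' is unused in this direction is accurate and worth noting.
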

\begin{proof}
  We reason by contrapositive. Suppose $\Gamma \noderiv{\Ax} \phi$.
  Then we can find a prime theory $\Gamma'$ containing $\Gamma$ but not $\phi$,
  and extend $\Gamma$ to a segment of the form $(\Gamma', U)$~(\rocqdoc{Complseg.general_seg_completeness.html\#Lindenbaum_segment}). 
  The truth lemma implies $(\Gamma', U) \Vdash \chi$ for all
  $\chi \in \Gamma$ while $(\Gamma', U) \not\Vdash \phi$.
  Since $\mo{X}_{\log{CK\oplus\Ax}}\in\mathcal F$ by assumption,
  we find $\Gamma \nosement{F} \phi$.
\end{proof}

\begin{rem}\label{rem:CK-fmp}
  The canonical model construction can also be performed relative to a
  finite set $\Sigma$ of formulas. This gives rise to a finite
  canonical model, a truth lemma relative to $\Sigma$, and ultimately
  a finite model property. 
  Since this is beyond the scope
  of this paper we omit the details.
\end{rem}

As the class of frames $\mathcal{CK}$ vacuously validates all additional axioms of $\log{CK}$, i.e.~none, 
and the frame $\mo{X}_{\log{CK}}\in\mathcal{CK}$, we exploit the result above to obtain strong completeness
for $\log{CK}$.

\begin{theorem}[Strong completeness for $\log{CK}$ \rocqdoc{Complseg.CK_seg_completeness.html\#CK_Strong_Completeness}]\label{thm:ck-completeness}
  If $\Gamma \sement{CK} \phi$ then $\Gamma \deriv{\log{CK}} \phi$.
\end{theorem}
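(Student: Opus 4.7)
The plan is to obtain this result as a direct instantiation of the general strong completeness Theorem~\ref{thm:completeness} with $\Ax = \emptyset$ and $\mathcal{F} = \mathcal{CK}$. Under our parametrisation, $\log{CK}$ is literally $\log{CK} \oplus \emptyset$, so $\deriv{\log{CK}}$ coincides with $\deriv{\emptyset}$, and the target statement is precisely the conclusion of the general theorem for these specific choices.

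I would then verify the two hypotheses of Theorem~\ref{thm:completeness}. The second hypothesis, that every frame in $\mathcal{CK}$ validates every axiom in $\Ax$, is vacuous since $\Ax = \emptyset$. The first hypothesis demands $\mo{X}_{\log{CK}\oplus\emptyset} \in \mathcal{CK}$, and this follows from the canonical model construction carried out before Theorem~\ref{thm:completeness}: the pair $(SEG, \subsetsim)$ is a preorder by inspection of Definition~\ref{def:segment}, the designated point $\expl = (\mathbf{L}, \{\mathbf{L}\})$ is a segment that is maximal under $\subsetsim$ (any segment of the form $(\mathbf{L}, U)$ is forced to have $U = \{\mathbf{L}\}$, as observed just after Definition~\ref{def:segment}), and the $R$-successors of $\expl$ are exactly the elements of $\{\mathbf{L}\}$, giving $\expl R (\Gamma, U) \iff (\Gamma, U) = \expl$ as required by the definition of $\log{CK}$-frame.

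There is no real obstacle here: the heavy lifting—the Lindenbaum construction, the segment machinery, and the truth lemma—has already been carried out to establish Theorem~\ref{thm:completeness}. The only thing to check is that the canonical frame built in Section~\ref{sec:rel-sem-CK} satisfies the defining conditions of $\mathcal{CK}$, which is immediate. Applying Theorem~\ref{thm:completeness} then yields exactly $\Gamma \sement{CK} \phi$ implies $\Gamma \deriv{\log{CK}} \phi$.
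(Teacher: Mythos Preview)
Your proposal is correct and matches the paper's approach exactly: the paper also derives this theorem as an immediate instance of Theorem~\ref{thm:completeness} with $\Ax = \emptyset$ and $\mathcal{F} = \mathcal{CK}$, noting that validation of the (empty) axiom set is vacuous and that $\mo{X}_{\log{CK}} \in \mathcal{CK}$ by construction.
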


\begin{rem}
  Wijesekera~\cite{Wij90} uses a similar construction as above to obtain
  completeness for $\log{WK}$. Besides incorporating an inconsistent world,
  the main difference is that our canonical model is based on the set of all
  segments, while Wijesekera uses recursion to generate a model from a given
  segment.
  We also note that Wijesekera claims completeness with respect to partially
  ordered frames, but their canonical model construction only gives a preorder.
  The claim for partially ordered frames can be recovered via an unravelling
  construction akin to~\cite[Section~3.3]{ChaZak97}.
\end{rem}

\section{Three axioms between $\log{CK}$ and $\log{IK}$}\label{sec:three-axioms}

  The logic $\log{IK}$ can be obtained by extending $\log{CK}$
  with $\axref{ax:Nd}$, $\axref{ax:Cd}$ and $\axref{ax:Idb}$.
  This section examines these three axioms individually.
  We give frame conditions that guarantee validity for each of them,
  and then refine these to frame correspondence conditions.
  While the latter conditions provide sound and strongly complete semantics for
  the extension of $\log{CK}$ with any combination of $\axref{ax:Nd}$,
  $\axref{ax:Cd}$ and $\axref{ax:Idb}$,
  we use the former when possible because of their greater simplicity.  
  In particular, we provide the birelational semantics for
  $\log{CK} \oplus \axref{ax:Nd} \oplus \axref{ax:Cd}$ called for
  by Das and Marin~\cite[Section 7]{DasMar23}.

\begin{defn}\label{def:frm-cond}
  Let $\mo{X} = (X, \expl, \leq, R)$ be a $\log{CK}$-frame.
  We identify three frame conditions~%
  (\rocqdoc{Kripke.correspondence.html\#suff_Nd_frame},%
   \rocqdoc{Kripke.correspondence.html\#suff_Cd_frame},%
   \rocqdoc{Kripke.correspondence.html\#suff_Idb_frame}):
  \begin{myenumerate} \itemsep=3pt 
    \setlength{\itemindent}{2.5em}
    \myitem{\ref{ax:Nd}-suff} \label{eq:Nd-suff}
          $\forall x \; (xR\expl \text{ implies } x = \expl)$
          
    \myitem{\ref{ax:Cd}-suff} \label{eq:Cd-suff}
          $\begin{aligned}[t]
           \forall x \, \exists x' \; (&x \leq x'
            \text{ and } \forall y, z \; (
            \text{if } x \leq y
             \text{ and } x' R z \\
            &\text{then } \exists w \in X
             \text{ s.t. } yRw
             \text{ and } z \leq w))
          \end{aligned}$
          
    \myitem{\ref{ax:Idb}-suff} \label{eq:Idb-suff}
          $\begin{aligned}[t]
            &\forall x, y, z
              \text{ s.t } x R y \leq z \;(
              \exists u \in X
              \text{ s.t. } x \leq u R z \\
             &\hspace{-1em}\text{ and } \forall s \in X
              \text{ s.t. } u \leq s, \exists t \in X
              \text{ s.t. } sRt
              \text{ and } z \leq t)
           \end{aligned}$
  \end{myenumerate}
\end{defn}

\begin{proposition}
  Let $\mo{X} = (X, \expl, \leq, R)$ be a frame and
  $\sf{A} \in \{ \axref{ax:Nd}, \axref{ax:Cd}, \axref{ax:Idb} \}$.
  If $\mo{X}$ satisfies \textup{($\sf{A}$-suff)},
  then $\mo{X}$ validates $\sf{A}$.
\end{proposition}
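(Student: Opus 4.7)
The plan is to proceed by case analysis on $\sf{A}$. In each case I would fix a model $\mo{M}=(\mo{X},V)$ over the frame, pick an arbitrary world $x$, unfold the intuitionistic clauses for the axiom at $x$, and feed the hypothesis into the frame condition to extract the witness the consequent demands. The \axref{ax:Nd} case is near-immediate: given $y\geq x$ with $y\Vdash\Diamond\bot$, instantiating the universal clause for $\Diamond$ at $y$ itself produces some $z$ with $yRz$ and $z=\expl$, and (\ref{eq:Nd-suff}) then forces $y=\expl$, so $y\Vdash\bot$.

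For \axref{ax:Cd}, fix $y\geq x$ with $y\Vdash\Diamond(\phi\vee\psi)$ and let $y'\geq y$ be the witness supplied by (\ref{eq:Cd-suff}) at $y$. Evaluating the diamond at $y'$ produces some $z$ with $y'Rz$ and $z\Vdash\phi\vee\psi$. I would split on the disjunction and argue symmetrically: in either sub-case, for any $v\geq y$ the frame condition delivers $w$ with $vRw$ and $z\leq w$, and persistence transfers $\phi$ (resp.\ $\psi$) from $z$ to $w$. This yields $y\Vdash\Diamond\phi$ or $y\Vdash\Diamond\psi$ depending on the case, hence the required disjunction.

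For \axref{ax:Idb}, suppose $y\geq x$ satisfies $\Diamond\phi\to\Box\psi$. To show $y\Vdash\Box(\phi\to\psi)$, pick any $y'\geq y$, any $R$-successor $z$ of $y'$, and any $z'\geq z$ with $z'\Vdash\phi$; the goal is $z'\Vdash\psi$. Applying (\ref{eq:Idb-suff}) to the chain $y'Rz\leq z'$ yields $u$ with $y'\leq u R z'$ and, crucially, the auxiliary property that every $s\geq u$ admits an $R$-successor above $z'$. Persistence of $\phi$ upwards from $z'$ then gives $u\Vdash\Diamond\phi$; since $y\leq u$, the assumed implication fires at $u$ to deliver $u\Vdash\Box\psi$, and then $uRz'$ yields $z'\Vdash\psi$.

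The most delicate case will be \axref{ax:Idb}, whose frame condition bundles an existence clause with an additional universal clause about the successors of the new witness $u$. The key observation is that this second clause is precisely what is needed to establish $u\Vdash\Diamond\phi$ and thereby fire the assumed implication $\Diamond\phi\to\Box\psi$ at $u$; once this is spotted the proof runs smoothly. The \axref{ax:Nd} and \axref{ax:Cd} cases are essentially straightforward unfoldings of the semantics, with persistence handling the only subtlety in the \axref{ax:Cd} case.
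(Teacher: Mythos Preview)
Your proof is correct. Each of the three cases is a sound direct semantic verification: the \axref{ax:Nd} case is fine, the \axref{ax:Cd} case correctly instantiates the universal diamond clause at the witness $y'$ supplied by the frame condition and then pushes the disjunct up via persistence, and the \axref{ax:Idb} case correctly identifies that the universal clause attached to $u$ in~\eqref{eq:Idb-suff} is exactly what is needed to establish $u\Vdash\Diamond\phi$ so that the assumed implication can be fired at $u$.

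The paper takes a different route. Rather than verifying validity directly, it factors the argument through the correspondence conditions~\eqref{eq:Nd-corr}, \eqref{eq:Cd-corr}, \eqref{eq:Idb-corr}: it observes that each sufficient condition implies the corresponding correspondence condition, and then invokes the (separately proved) equivalence of the correspondence condition with validity of the axiom. Your approach is more self-contained and arguably cleaner for this particular proposition; the paper's approach avoids duplicating the ``suff $\Rightarrow$ validity'' direction, since the ``corr $\Rightarrow$ validity'' direction of the correspondence propositions must be proved anyway, and the implications ``suff $\Rightarrow$ corr'' are purely frame-theoretic statements that do not mention valuations. Both are fine.
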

\begin{proof}
  This result%
  ~(\rocqdoc{Kripke.correspondence.html\#sufficient_Nd},%
   \rocqdoc{Kripke.correspondence.html\#sufficient_Cd},%
   \rocqdoc{Kripke.correspondence.html\#sufficient_Idb})
  follows from the fact that each of the conditions above implies
  the correspondence condition of the axiom under consideration%
  ~(\rocqdoc{Kripke.correspondence.html\#suff_impl_Nd},%
   \rocqdoc{Kripke.correspondence.html\#suff_impl_Cd},%
   \rocqdoc{Kripke.correspondence.html\#suff_impl_Idb}),
  given below.
\end{proof}

\begin{rem}\label{rem:frm-cond}
  If we take $x = x'$ in~\eqref{eq:Cd-suff}, we obtain
  a stronger condition
  (\rocqdoc{Kripke.correspondence.html\#strong_Cd_frame}),
  \begin{myenumerate}
    \setlength{\itemindent}{3.3em}
    \myitem{\ref{ax:Cd}-strong} \label{eq:Cd-strong}
        $\begin{aligned}[t]
          \forall x, y, z (&\text{if } x \leq y \text{ and } xRz \\
            &\text{then } \exists w \; (yRw \text{ and } z \leq w)).
         \end{aligned}$
  \end{myenumerate}
  This is a standard frame condition for the semantics of
  $\log{IK}$~\cite{Fis81,PloSti86,Sim94}.
  It implies that we can ignore the universal quantifier in the interpretation
  of $\Diamond\phi$, looking only at modal successors of the current world.
  In presence of~\eqref{eq:Cd-strong}, condition~\eqref{eq:Idb-suff} is equivalent
  to~(\rocqdoc{Kripke.correspondence.html\#weak_Idb_frame})
  \begin{myenumerate}
    \setlength{\itemindent}{3.3em}
    \myitem{\ref{ax:Idb}-weak} \label{eq:Idb-suffCd}
        $\forall x, z, u$ (if $xRz \leq u$ then $\exists y$ ($x \leq y R u$)),
  \end{myenumerate}
  which is also used in the standard semantics of $\log{IK}$.
\end{rem}

  \begin{figure}[h!]
    \vspace{-1em}
    \begin{subfigure}{.15\textwidth}
    \centering
    \begin{tikzpicture}[scale=.6,xscale=.9]
      \node (x) at (0,0) {$x$};
      \node (e) at (2.5,0) {$\expl$};
      \node (phantom) at (0,-1) {};
      \draw[-latex, dashed] (x) \rto{above} (e);
      \draw[very thick,red] (.95,.3) -- (1.55,-.3);
      \draw[very thick,red] (.95,-.3) -- (1.55,.3);
    \end{tikzpicture}
    \caption{\eqref{eq:Nd-suff}}
    \end{subfigure}
    \begin{subfigure}{.15\textwidth}
    \centering
    \hspace{-1em}
    \begin{tikzpicture}[scale=.6,xscale=.9]
      \node (x) at (0,-.4) {$x$};
      \node (xp) at (0,1.5) {$x'$};
      \node (y) at (0,3) {$y$};
      \node (z) at (2.5,1.5) {$z$};
      \node (w) at (2.5,3) {$w$};
      \draw[-latex, bend left=35] (x) \ito{left} (y);
      \draw[-latex, dashed ] (x) \ito{right} (xp);
      \draw[-latex, dashed] (z) \ito{right} (w);
      \draw[-latex] (xp) \rto{below} (z);
      \draw[-latex, dashed] (y) \rto{above} (w);
    \end{tikzpicture}
    \caption{\eqref{eq:Cd-suff}}
    \end{subfigure}
    \begin{subfigure}{.15\textwidth}
    \centering
    \begin{tikzpicture}[scale=.6,xscale=.9]
      \node (x) at (0,0) {$x$};
      \node (y) at (2.5,0) {$y$};
      \node (z) at (2.5,1.5) {$z$};
      \node (u) at (0,1.5) {$u$};
      \node (s) at (0,3) {$s$};
      \node (t) at (2.5,3) {$t$};
      \draw[-latex] (x) \rto{below} (y);
      \draw[-latex, bend right=5] (y) to (z);
      \draw[-latex, dashed, bend left=5] (x) \ito{left} (u);
      \draw[-latex, dashed] (u) \rto{below} (z);
      \draw[-latex, bend left=5] (u) to (s);
      \draw[-latex, dashed, bend right=5] (z) to (t);
      \draw[-latex, dashed] (s) \rto{above} (t);
    \end{tikzpicture}
    \caption{\eqref{eq:Idb-suff}}
    \label{fig:Idb-suff}
    \end{subfigure}
    \begin{subfigure}{.15\textwidth}
    \centering
    \begin{tikzpicture}[scale=.6,xscale=.9]
      \node (x) at (0,0) {$x$};
      \node (y) at (0,2) {$y$};
      \node (z) at (2.5,0) {$z$};
      \node (w) at (2.5,2) {$w$};
      \draw[-latex] (x) \ito{left} (y);
      \draw[-latex, dashed] (z) \ito{right} (w);
      \draw[-latex] (x) \rto{below} (z);
      \draw[-latex, dashed] (y) \rto{above} (w);
    \end{tikzpicture}
    \caption{\eqref{eq:Cd-strong}}
    \end{subfigure}
    \begin{subfigure}{.15\textwidth}
    \centering
    \begin{tikzpicture}[scale=.6,xscale=.9]
      \node (x) at (0,0) {$x$};
      \node (y) at (0,2) {$y$};
      \node (z) at (2.5,0) {$z$};
      \node (w) at (2.5,2) {$u$};
      \node (phantom) at (0,3) {}; 
      \draw[-latex, dashed] (x) to (y);
      \draw[-latex] (z) to (w);
      \draw[-latex] (x) \rto{below} (z);
      \draw[-latex, dashed] (y) \rto{above} (w);
    \end{tikzpicture}
    \caption{\eqref{eq:Idb-suffCd}}
    \end{subfigure}
  \caption{Sufficient conditions for validity of $\axref{ax:Nd}$,
           $\axref{ax:Cd}$ and $\axref{ax:Idb}$.
           Unlabelled arrows denote the intuitionistic relation.
           Solid arrows indicate universally quantified relations,
           while the dashed ones indicate existential ones.}
  \label{fig:frm-cond}
  \end{figure}
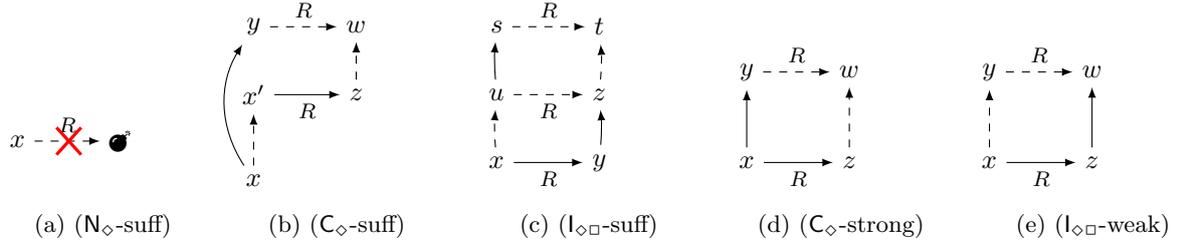

  The five frame conditions introduced in Definition~\ref{def:frm-cond}
  and Remark~\ref{rem:frm-cond} are depicted in Figure~\ref{fig:frm-cond}.
  While these are sufficient to ensure
  validity of certain axioms, none of them are necessary.
  The next examples illustrate this for~\eqref{eq:Nd-suff}
  and~\eqref{eq:Idb-suff}.

\begin{example}\label{exm:Nd-suff}
  Consider the frame depicted in Figure~\ref{fig:exm-Nd-suff}.
  This validates $\Diamond\bot \to \bot$, because $\expl$ is the only
  world that satisfies $\Diamond\bot$. But it does not satisfy
  \eqref{eq:Nd-suff}, because $xR\expl$.
\end{example}

\begin{example}\label{exm:Idb-suff}
  Consider the frame given in Figure~\ref{fig:exm-Idb-suff}.
  This satisfies neither~\eqref{eq:Idb-suff} nor~\eqref{eq:Idb-suffCd}.
  However, the frame does validate $\axref{ax:Idb}$.
  To see this, we show that every world that satisfies $\Diamond p \to \Box q$
          also satisfies $\Box(p \to q)$.
          For $u, v, w$ this follows immediately from their lack of
          modal successors, which implies that they trivially satisfy $\Box(p \to q)$.
          For $y$ it follows from the fact that
          $y \Vdash \Diamond p \to \Box q$ implies that either
          $w \not\Vdash p$ or $w \Vdash q$, so that $w \Vdash p \to q$
          whence $y \Vdash \Box(p \to q)$.
          Lastly, suppose $x \Vdash \Diamond p \to \Box q$.
          If none of $u, v, w$ satisfy $p$ then they all satisfy $p \to q$,
          and hence $x \Vdash \Box(p \to q)$.
          If any of $u, v, w$ satisfy $p$ then so does $w$,
          which implies $x \Vdash \Diamond p$.
          Then we have $x \Vdash \Box q$, so that $u \Vdash q$ and hence
          $u, v, w \Vdash q$.
          Thus $u, v, w \Vdash p \to q$, which again
          implies $x \Vdash \Box(p \to q)$.
\end{example}

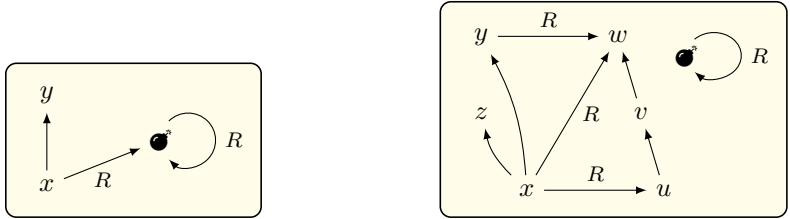
\begin{figure}[h!]
  \vspace{-1em}
    \begin{subfigure}{.23\textwidth}
      \centering
      \begin{tikzpicture}[scale=.6]
      \draw[rounded corners, fill=blue!9, semithick] (-.9,2.7) rectangle (4.7,-.7);
        \node (x1) at (0,0) {$x$};
        \node (y1) at (0,2) {$y$};
        \node (e1) at (2.5,1) {$\expl$};
        \draw[-latex] (x1) \rto{below} (e1);
        \draw[-latex] (x1) \ito{left} (y1);
        \draw[-latex] ([shift=(135:.6)]3.1,1) arc (135:-135:.6)
                      node[right,pos=.5]{\footnotesize{$R$}};
      \end{tikzpicture}
      \caption{Frame from Exm.~\ref{exm:Nd-suff}.}
      \label{fig:exm-Nd-suff}
    \end{subfigure}
    \begin{subfigure}{.25\textwidth}
      \begin{tikzpicture}[scale=.6,yscale=.85]
      \draw[rounded corners, fill=blue!9, semithick] (14.1,4.9) rectangle (21.7,-.7);
        \node (x3) at (16,0) {$x$};
        \node (y3) at (15,4) {$y$};
        \node (u3) at (19,0) {$u$};
        \node (v3) at (18.5,2) {$v$};
        \node (w3) at (18,4) {$w$};
        \node (e3) at (19.5,3.5) {$\expl$};
        \draw[-latex,bend right=10] (x3) \ito{right,pos=.7} (y3);
        \draw[-latex] (x3) \rto{above} (u3);
        \draw[-latex] (x3) \rto{right} (w3);
        \draw[-latex] (u3) \ito{right} (v3);
        \draw[-latex] (v3) \ito{right,pos=.3} (w3);
        \draw[-latex] (y3) \rto{above} (w3);
        \draw[-latex] ([shift=(135:.6)]20.1,3.5) arc (135:-135:.6) node[right,pos=.5]{\footnotesize{$R$}};
      \end{tikzpicture}
      \caption{Frame from Example~\ref{exm:Idb-suff}.}
      \label{fig:exm-Idb-suff}
    \end{subfigure}
  \caption{Frames witnessing non-necessity of some of the sufficient
           conditions from Definition~\ref{def:frm-cond} and Remark~\ref{rem:frm-cond}.}
\end{figure}

  We now give exact correspondence conditions for each of
  the axioms $\axref{ax:Nd}, \axref{ax:Cd}$ and $\axref{ax:Idb}$.
  Two of the three correspondence conditions are depicted in
  Figure~\ref{fig:corr} below.
  
\begin{proposition}[\rocqdoc{Kripke.correspondence.html\#correspond_Nd}]
  A $\log{CK}$-frame $\mo{X} = (X, \expl, \leq, R)$ validates $\axref{ax:Nd}$
  if and only if it satisfies:
  \begin{myenumerate}
    \setlength{\itemindent}{3.3em}
    \myitem{\ref{ax:Nd}-corr} \label{eq:Nd-corr}
      $\forall x \; (\text{if } yR\expl \text{ for all } y \geq x,
       \text{ then } x = \expl)$
  \end{myenumerate}
\end{proposition}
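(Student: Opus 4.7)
The plan is to exploit the fact that $\Diamond\bot$ contains no propositional variables, so its satisfaction at a world depends only on the underlying frame, not on any valuation. First I will unpack the semantic clauses to compute $\mo{M}, x \Vdash \Diamond\bot$ in any model $\mo{M}$ on $\mo{X}$. By the clause for $\Diamond$, this is equivalent to: for every $y \geq x$, there exists $z$ with $y R z$ and $\mo{M}, z \Vdash \bot$. Using the clause for $\bot$, the latter holds only when $z = \expl$. Hence
\[
 \mo{M}, x \Vdash \Diamond\bot \iff \forall y \geq x, \; y R \expl.
\]
This equivalence is the semantic content of \ref{eq:Nd-corr}, and the rest of the argument simply aligns it with validity of $\axref{ax:Nd}$.

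For the ($\Leftarrow$) direction, suppose \ref{eq:Nd-corr} holds and fix a valuation $V$ and world $x$. To show that $\Diamond\bot \to \bot$ is satisfied at $x$, take any $y \geq x$ with $\mo{M}, y \Vdash \Diamond\bot$. The observation above gives $zR\expl$ for all $z \geq y$, and then \ref{eq:Nd-corr} applied to $y$ yields $y = \expl$, i.e.\ $\mo{M}, y \Vdash \bot$, as required.

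For the ($\Rightarrow$) direction, assume $\mo{X}$ validates $\axref{ax:Nd}$ and let $x$ be such that $yR\expl$ for all $y \geq x$. Choose any valuation $V$ on $\mo{X}$ (for instance the one with $V(p) = X$ for all $p$; recall that $\expl$ must lie in every $V(p)$). By the equivalence above, $\mo{M}, x \Vdash \Diamond\bot$. Since $\mo{X}$ validates $\Diamond\bot \to \bot$, applying this at $x$ with $x \leq x$ gives $\mo{M}, x \Vdash \bot$, i.e.\ $x = \expl$.

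There is no real obstacle here: the proof is essentially a direct unfolding of the semantic definitions, and both directions take only a few lines once the observation about $\Diamond\bot$ being valuation-independent is made. The only minor point to be careful about is the $(\Rightarrow)$ direction, where one must exhibit an actual model in order to invoke frame validity; but because $\Diamond\bot$ ignores the valuation entirely, any valuation will do.
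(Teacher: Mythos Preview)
Your proof is correct and follows essentially the same approach as the paper's own proof, which simply notes that the condition is validity of $\axref{ax:Nd}$ ``by definition'' and that any failure of the condition yields a world satisfying $\Diamond\bot$ but not $\bot$. You have merely unpacked these two observations in more detail, making explicit the valuation-independence of $\Diamond\bot$ and the use of the implication clause; the underlying argument is identical.
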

\begin{proof}
  The correspondence condition implies validity of $\axref{ax:Nd}$ by
  definition. Conversely, if the correspondence condition does not hold
  then this must be witnessed by a world that satisfies $\Diamond\bot$
  but not $\bot$.
\end{proof}

  To simplify the statement of the correspondence condition for $\axref{ax:Cd}$
  we use the following notation: if $(X, \leq)$ is a preorder, $R$ a binary relation on $X$,
  $x\in X$ and $a \subseteq X$, then
  ${\downarrow}a := \{ x \in X \mid x \leq y \text{ for some } y \in a \}$ denotes
  the downset generated by $a$,
  $R[x]:=\{ y \in X \mid xRy \}$ and $R^{-1}(x) = \{ y \in X \mid yRx \}$.

\begin{proposition}[\rocqdoc{Kripke.correspondence.html\#correspond_Cd}]
  A frame $\mo{X} = (X, \expl, \leq, R)$ validates $\axref{ax:Cd}$
  if and only if it satisfies:
  \begin{myenumerate}
    \setlength{\itemindent}{2.3em}
    \myitem{\ref{ax:Cd}-corr} \label{eq:Cd-corr}
      $\begin{aligned}[t]
         \forall x, y, z \; (
           &\text{if } y, z \notin R^{-1}(\expl)
            \text{ and } x \leq y
            \text{ and } x \leq z \\
           &\text{ then } \exists w ( x \leq w 
            \text{ and } R[w] \subseteq {\downarrow}R[y] \\
           &\phantom{\text{ then } \exists w ( x \leq w}
            \text{ and } R[w] \subseteq {\downarrow}R[z]))
      \end{aligned}$
  \end{myenumerate}
\end{proposition}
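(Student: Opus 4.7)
The plan is to prove both directions by contrapositive.

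First I would prove that \eqref{eq:Cd-corr} implies validity of $\axref{ax:Cd}$. Suppose $\axref{ax:Cd}$ fails at some world $x$ in a model built on $\mo{X}$; then $x \Vdash \Diamond(\phi \lor \psi)$ while $x \not\Vdash \Diamond\phi$ and $x \not\Vdash \Diamond\psi$. Unpacking the universal quantifier in the semantics of $\Diamond$, I extract witnesses $y, z$ with $x \leq y$, $x \leq z$, no $R$-successor of $y$ satisfying $\phi$, and no $R$-successor of $z$ satisfying $\psi$. Because $\expl$ satisfies every formula, these witnesses must lie outside $R^{-1}(\expl)$, so \eqref{eq:Cd-corr} applies and yields $w \geq x$ with $R[w] \subseteq {\downarrow}R[y]$ and $R[w] \subseteq {\downarrow}R[z]$. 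Instantiating $x \Vdash \Diamond(\phi \lor \psi)$ at the $\leq$-successor $w$ produces $v$ with $wRv$ and $v \Vdash \phi \lor \psi$. A case split on the disjuncts, combined with persistence along $v \leq v'$ for an appropriate $v' \in R[y]$ or $v' \in R[z]$, yields the required contradiction.

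For the converse, I would construct a refuting model whenever \eqref{eq:Cd-corr} fails. Suppose $x, y, z$ witness the failure: $y, z \notin R^{-1}(\expl)$, $x \leq y$, $x \leq z$, and no $w \geq x$ simultaneously satisfies $R[w] \subseteq {\downarrow}R[y]$ and $R[w] \subseteq {\downarrow}R[z]$. I propose the valuation
\begin{equation*}
  V(p) := X \setminus {\downarrow}R[y], \qquad V(q) := X \setminus {\downarrow}R[z]
\end{equation*}
on two fresh propositional variables, defined arbitrarily elsewhere. Both sets are $\leq$-upsets, as complements of $\leq$-downsets, so persistence is automatic. With this valuation, $y$ witnesses $x \not\Vdash \Diamond p$ and $z$ witnesses $x \not\Vdash \Diamond q$ by construction. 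The negation of \eqref{eq:Cd-corr} tells me that every $w \geq x$ has some $R$-successor outside ${\downarrow}R[y]$ or outside ${\downarrow}R[z]$, hence in $V(p) \cup V(q)$; thus $x \Vdash \Diamond(p \lor q)$ and $\axref{ax:Cd}$ fails at $x$.

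The main subtlety, and the reason the correspondence condition restricts $y$ and $z$ to lie outside $R^{-1}(\expl)$, is verifying that $V(p)$ and $V(q)$ contain $\expl$, which is demanded of all valuations in our semantics. Maximality of $\expl$ makes $\expl \in {\downarrow}R[y]$ equivalent to $y R \expl$, so the side condition $y, z \notin R^{-1}(\expl)$ is precisely what guarantees $\expl \in V(p) \cap V(q)$. I expect the bookkeeping around the exploding world, rather than any combinatorial difficulty, to be the delicate part of the argument; conversely, this observation also explains why the forward direction needs to verify that the extracted witnesses $y$ and $z$ genuinely lie outside $R^{-1}(\expl)$ before \eqref{eq:Cd-corr} can be invoked.
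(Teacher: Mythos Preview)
Your proposal is correct and matches the paper's argument essentially step for step: both directions are argued contrapositively, with the same witnesses $y,z$ and $w$, and the same countermodel construction. The only cosmetic difference is that the paper defines $V(p) = (X \setminus {\downarrow}R[y]) \cup \{\expl\}$, explicitly adjoining $\expl$, whereas you observe that the hypothesis $y \notin R^{-1}(\expl)$ together with maximality of $\expl$ already forces $\expl \notin {\downarrow}R[y]$, so the adjunction is redundant.
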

\begin{proof}
  Suppose $\mo{X}$ satisfies~\eqref{eq:Cd-corr}.
  Let $x$ be a world, $V$ any valuation and suppose
  $x \not\Vdash \Diamond p \vee \Diamond q$.
  Then $x \not\Vdash \Diamond p$ and $x \not\Vdash \Diamond q$, so there exist
  $y \geq x$ and $z \geq x$
  such that $R[y] \cap V(p) = \emptyset$ and $R[z] \cap V(q) = \emptyset$.
  We must have $y, z \notin R^{-1}(\expl)$, so
  by~\eqref{eq:Cd-corr} we get some $w \geq x$ such that
  every $R$-successor of $w$ lies below an $R$-successor of $y$
  and below an $R$-successor of $z$. This implies that $R[w] \cap (V(p) \cup V(q)) = \emptyset$, so that
  $w$ witnesses $x \not\Vdash \Diamond(p \vee q)$.
  Since this holds for every $x \in X$ and every valuation,
  it follows that $\mo{X} \Vdash \Diamond(p \vee q) \to \Diamond p \vee \Diamond q$.
  
  For the converse, suppose that~\eqref{eq:Cd-corr} does not hold.
  Then we can find $x, y, z$ satisfying $y, z \notin R^{-1}(\expl)$ and
  $x \leq y$ and $x \leq z$,
  and such that there is no $w \geq x$ such that every $R$-successor of $w$
  lies below $R$-successors of $y$ and $z$.
  Taking $V(p) = (X \setminus {\downarrow}R[y]) \cup \{ \expl \}$
  and $V(q) = (X \setminus {\downarrow}R[z]) \cup \{ \expl \}$
  then results in a model where $x$ satisfies $\Diamond(p \vee q)$
  but not $\Diamond p$ or $\Diamond q$, whence
  $x \not\Vdash \Diamond(p \vee q) \to \Diamond p \vee \Diamond q$.
\end{proof}

\begin{proposition}[\rocqdoc{Kripke.correspondence.html\#correspond_Idb}]\label{prop:Idb-corr}
  A frame $\mo{X} = (X, \expl, \leq, R)$ validates $\axref{ax:Idb}$
  if and only if it satisfies:
  \begin{myenumerate}
    \setlength{\itemindent}{2.7em}
    \myitem{\ref{ax:Idb}-corr} \label{eq:Idb-corr}
      $\begin{aligned}[t]
       \forall x, y, z (
        &\text{if } x R y \leq z \neq \expl \\
        &\text{then } \exists u, w \; (
          x \leq u R w \leq z \\
          &\text{ and } \forall s \;
          (\text{if } u \leq s 
           \text{ then } sR \expl \\
          &\phantom{\text{ and } \forall s \;}\text{ or } \exists t (sRt \text{ and } z \leq t))))
      \end{aligned}$
  \end{myenumerate}
\end{proposition}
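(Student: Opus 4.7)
The plan is to prove the two directions separately: the forward direction (from \eqref{eq:Idb-corr} to validity of $\axref{ax:Idb}$) is essentially semantic unfolding, while the converse requires designing a refuting valuation, which is where the real work lies.

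For the forward direction I would argue contrapositively. If some world $x_0$ refutes $(\Diamond\phi \to \Box\psi) \to \Box(\phi \to \psi)$ in a model, unfolding the semantics yields $x \geq x_0$ and $y$ with $xRy$, and $z \geq y$ with $z \Vdash \phi$ and $z \not\Vdash \psi$. Since $\expl$ forces every formula, $z \not\Vdash \psi$ implies $z \neq \expl$, so \eqref{eq:Idb-corr} applies and supplies $u, w$ with $x \leq uRw \leq z$ together with the side condition on $s \geq u$. That side condition, combined with persistence of $\phi$ from $z$ and the fact that $\expl$ trivially forces $\phi$, says precisely $u \Vdash \Diamond\phi$. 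Since $u \geq x_0$ and $x_0 \Vdash \Diamond\phi \to \Box\psi$, we deduce $u \Vdash \Box\psi$, hence $w \Vdash \psi$ via $uRw$, hence $z \Vdash \psi$ by persistence along $w \leq z$, contradicting $z \not\Vdash \psi$.

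For the converse I would again reason contrapositively and assume $(x, y, z)$ witnesses the failure of \eqref{eq:Idb-corr}. The plan is to define a valuation on two propositional letters by $V(p) := {\uparrow}\{z\} \cup \{\expl\}$ and $V(q) := X \setminus {\downarrow}\{z\}$, and show this refutes $\axref{ax:Idb}$ at $x$. Both are upsets containing $\expl$ (using maximality of $\expl$ and $z \neq \expl$ to ensure $\expl \not\leq z$), while $x \not\Vdash \Box(p \to q)$ is immediate from $xRy \leq z$, $z \Vdash p$, and $z \not\Vdash q$. The main obstacle is showing $x \Vdash \Diamond p \to \Box q$: the valuation is engineered precisely so that $u \Vdash \Diamond p$ is equivalent to the property ``$\forall s \geq u$, $sR\expl$ or some $t$ with $sRt$ satisfies $z \leq t$'', which is visibly upward-closed in $u$. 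So if $u \geq x$ forces $\Diamond p$ and $v \geq u$ satisfies $vRw$ with $w \leq z$, then $(v, w)$ instantiates the conclusion pattern of \eqref{eq:Idb-corr} for our fixed failing $(x, y, z)$, and the failure assumption produces some $s \geq v$ violating exactly the property just observed to hold at $v$, a contradiction. Hence $w \in V(q)$, giving $u \Vdash \Box q$ and completing the refutation.
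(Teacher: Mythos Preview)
Your proposal is correct and follows essentially the same approach as the paper: both directions use the same key valuation $V(p) = {\uparrow}z \cup \{\expl\}$ and $V(q) = X \setminus {\downarrow}z$ (the paper writes $(X \setminus {\downarrow}z) \cup \{\expl\}$, but as you observe, $\expl$ is already in the complement since $z \neq \expl$ and $\expl$ is maximal). The only cosmetic difference is that the paper proves the converse directly---assuming validity, it uses $x \not\Vdash \Box(p \to q)$ to force $x \not\Vdash \Diamond p \to \Box q$ and then \emph{extracts} the required $u,w$ from the refuting witnesses---whereas you argue contrapositively by assuming a failing triple and verifying $x \Vdash \Diamond p \to \Box q$; the two are just unwindings of each other. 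One small slip: in your forward direction you write ``$x_0 \Vdash \Diamond\phi \to \Box\psi$'', but it is the intermediate witness $x_1 \geq x_0$ (from unfolding the outer implication) that forces the antecedent; since $u \geq x \geq x_1$ this is harmless.
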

\begin{proof}
  Suppose $\mo{X}$ satisfies~\eqref{eq:Idb-corr}.
  Let $V$ be any valuation and suppose $x' \not\Vdash \Box(p \to q)$.
  Then there exist worlds $x$, $y$ and $z$ such that $x' \leq xRy \leq z$ and
  $z \Vdash p$ and $z \not\Vdash q$.
  Since $z \not\Vdash q$ we must have $z \neq \expl$, so we can 
  find $u, w$ with the properties mentioned in~\eqref{eq:Idb-corr}.
  Then $u \Vdash \Diamond p$ because each intuitionistic successor $s$ of $u$
  can modally see $\expl$ or some successor of $z$, both of which satisfy $p$.
  But we have $u \not\Vdash \Box q$, because $u \leq u R w$
  and $w \leq z$, hence $w \not\Vdash q$.
  Since $x' \leq x \leq u$, we conclude $x' \not\Vdash \Diamond p \to \Box q$.
  
  For the converse, suppose that the frame validates
  $(\Diamond p \to \Box q) \to \Box(p \to q)$ and let $x, y, z \in X$ be
  such that $x R y \leq z$ and $z \neq \expl$.
  Since the formula is valid, it holds for all valuations and we
  can exploit its truth under any valuation that suits us.
  Define $V(p) = ({\uparrow}z) \cup \{ \expl \}$ and
  $V(q) = (X \setminus {\downarrow}z) \cup \{ \expl \}$.
  Then $z \Vdash p$ and $z \not\Vdash q$ (because $z \neq \expl$),
  and therefore $y \not\Vdash p \to q$,
  so that $x \not\Vdash \Box(p \to q)$.
  But this means that we must have $x \not\Vdash \Diamond p \to \Box q$.
  So there exists some successor $v$ of $x$ such that $v \Vdash \Diamond p$
  while $v \not\Vdash \Box q$.
  The latter implies that there exist worlds $u, w$ such that
  $v \leq u R w$ and $w \not\Vdash q$. By definition of $V(q)$ this means
  $w \leq z$. Furthermore, $v \Vdash \Diamond p$ implies $u \Vdash \Diamond p$,
  so each intuitionistic successor $s$ of $u$ has a modal successor that
  satisfies $p$. By definition this means that either $sR\expl$ or there
  is some $t$ such that $sRt$ and $z \leq t$.
  Thus we have found $u, w \in X$ with the desired properties, hence
  the frame satisfies~\eqref{eq:Idb-corr}.
\end{proof}

  \begin{figure}[h]
    \centering
    \begin{subfigure}{.21\textwidth}
    \begin{tikzpicture}[xscale=.75]
      \node (x) at (0,0) {$x$};
      \node (y) at (-.6,2.2) {$y$};
      \node (z) at (.8,1.5) {$z$};
      \node (w) at (0,.9) {$w$};
      \node (wp) at (3,1) {$w'$};
      \node (yp) at (2.3,2.2) {$y'$};
      \node (zp) at (4.1,1.6) {$z'$};
      \draw[-latex, dashed] (x) to (w);
      \draw[-latex, bend left=20] (x) to (y);
      \draw[-latex, bend right=20] (x) to (z);
      \draw[-latex] (w) to node[below]{\footnotesize{$R$}} (wp);
      \draw[-latex, dashed, bend left=15] (wp) to (yp);
      \draw[-latex, dashed, bend right=25] (wp) to (zp);
      \draw[-latex, dashed] (y) to node[above]{\footnotesize{$R$}} (yp);
      \draw[-latex, dashed] (z) \rto{below,pos=.35} (zp);
    \end{tikzpicture}
    \caption{\eqref{eq:Cd-corr}}
    \end{subfigure}
    \qquad
    \begin{subfigure}{.2\textwidth}
    \vspace{-1em}
    \begin{tikzpicture}[scale=.55, yscale=.8,xscale=1]
      \node (x) at (0,0) {$x$};
      \node (y) at (2.5,0) {$y$};
      \node (z) at (3.5,3.25) {$z$};
      \node (znb) at (4.35,3.25) {$\neq \expl$};
      \node (u) at (-1,2.5) {$u$};
      \node (w) at (2.5,1.25) {$w$};
      \node (s) at (0,5.7) {$s$};
      \node (t) at (2.5,5) {$t$};
      \node (e) at (2.6,6.4) {$\expl$};
      \draw[-latex] (x) \rto{below} (y);
      \draw[-latex, bend right=15] (y) to (z);
      \draw[-latex, dashed, bend left=15] (x) to (u);
      \draw[-latex, dashed, bend right=10] (u) \rto{above,pos=.5} (w);
      \draw[-latex, dashed, bend right=5] (w) to (z);
      \draw[-latex, bend left=15] (u) to (s);
      \draw[-latex, dashed, bend right=15] (z) to (t);
      \draw[-, dashed] (s) to (1,5.7);
      \draw[-latex, dashed, bend right=11]  (1.05,5.7) \rto{above,pos=0} (e);
      \draw[-latex, dashed, bend left=11] (1.05,5.7) to (t);
      \node at (2.5,5.7) {\footnotesize{or}};
    \end{tikzpicture}
    \caption{\eqref{eq:Idb-corr}}
    \label{fig:Idb-corr}
    \end{subfigure}
    \caption{Correspondence conditions for validity of $\axref{ax:Cd}$ and $\axref{ax:Idb}$.
           The solid arrows indicate universally quantified arrows,
           while the dashed ones indicate existential ones.
           The modal relation is labelled $R$ and the intuitionistic relation
           is unlabelled.}
    \label{fig:corr}
  \end{figure}
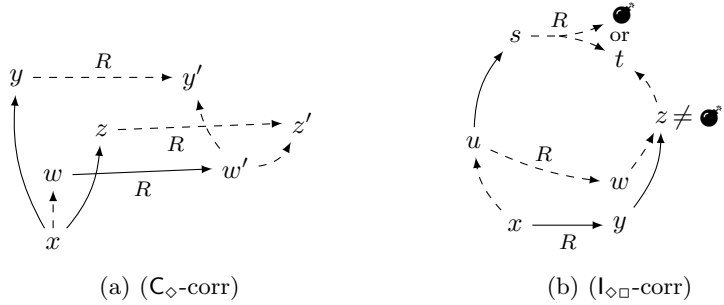

\section{Eight completeness results}\label{sec:eight}

  The correspondence results for the axioms
  $\axref{ax:Nd}, \axref{ax:Cd}$ and $\axref{ax:Idb}$ give rise to
  sound semantics for extensions of $\log{CK}$ with any combination of them.
  Next, we prove corresponding completeness results.
  Where possible, we show that the canonical model satisfies the
  sufficient frame conditions, to simplify our reasoning about conservativity
  in Section~\ref{sec:diamond-free}.

\begin{theorem}\label{thm:strong_compl_for_Ax}
  Let $\Ax \subseteq \{ \axref{ax:Nd}, \axref{ax:Cd}, \axref{ax:Idb} \}$.
  Then the logic $\log{CK} \oplus \Ax$ is sound and
  strongly complete with respect to the class of frames satisfying
  \textup{($\sf{A}$-suff)} for each $\sf{A} \in \Ax$,
  except for $\log{CK} \oplus \axref{ax:Nd}$ and $\log{CK} \oplus \axref{ax:Nd} \oplus \axref{ax:Cd}$, for which
  \eqref{eq:Nd-corr} replaces \eqref{eq:Nd-suff}.
\end{theorem}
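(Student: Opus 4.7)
The plan is to reduce the theorem to two tasks: (i) soundness for each of the eight extensions, and (ii) demonstrating that a suitable canonical model lives in the targeted frame class, so that Theorem~\ref{thm:completeness} delivers strong completeness. Soundness is immediate: I would combine Proposition~\ref{prop:soundness} with the sufficient/correspondence results already established, observing that any frame satisfying $(\sf{A}\text{-suff})$ or $(\sf{A}\text{-corr})$ validates $\sf{A}$, so induction on derivations handles the extra axioms rule by rule.

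For completeness I would work case by case on $\Ax\subseteq\{\axref{ax:Nd},\axref{ax:Cd},\axref{ax:Idb}\}$, checking that the canonical frame $\mo{X}_{\log{CK}\oplus\Ax}$ (suitably adapted) satisfies the stated condition. The $\axref{ax:Nd}$ case is the easiest and explains the split in the statement: if $\Gamma\neq\mathbf{L}$ then $\bot\notin\Gamma$, whence by $\axref{ax:Nd}$ also $\Diamond\bot\notin\Gamma$. Applying Lemma~\ref{lem:sigseg} with $\phi=\bot$ produces a segment $(\Gamma,U')$ with $\mathbf{L}\notin U'$, i.e.\ an intuitionistic successor of $(\Gamma,U)$ that is not $R$-related to $\expl$. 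This yields $(\axref{ax:Nd}\text{-corr})$ but in general not $(\axref{ax:Nd}\text{-suff})$, since nothing prevents the base segment $(\Gamma,U)$ itself from containing $\mathbf{L}\in U$; this is why the two cases without $\axref{ax:Idb}$ resort to the correspondence condition.

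For $\axref{ax:Cd}$ the strategy is to strengthen Lemma~\ref{lem:sigseg} so that a single segment $(\Gamma,U')$ can simultaneously witness the non-membership of \emph{all} formulas $\phi$ with $\Diamond\phi\notin\Gamma$. The key algebraic step uses $\axref{ax:Cd}$ to push disjunctions across $\Diamond$: from $\Diamond\phi_1\lor\dots\lor\Diamond\phi_n\notin\Gamma$ one derives $\Diamond(\phi_1\lor\dots\lor\phi_n)\notin\Gamma$, which via the Lindenbaum lemma produces sufficiently many prime theories in $U'$ to satisfy $(\axref{ax:Cd}\text{-suff})$: the second component $U'$ is rich enough that any intuitionistic extension $y$ of $x$ still admits, for every $\Delta\in U'$, an $R$-successor of $y$ extending $\Delta$.

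The main obstacle is $\axref{ax:Idb}$, where the plain canonical model can fail $(\axref{ax:Idb}\text{-suff})$: we need, given $(\Gamma,U)R\Delta$ and $\Delta\subseteq\Delta'$, a segment $(\Gamma',U')$ with $(\Gamma,U)\subsetsim(\Gamma',U')R\Delta'$ whose every intuitionistic extension admits an $R$-successor above $\Delta'$. I would modify the canonical construction by restricting to segments in which $U$ is obtained from $\Gamma$ via a uniform ``saturation'' that closes under the witnesses forced by $\axref{ax:Idb}$; proving such segments exist amounts to an application of the Lindenbaum lemma combined with the key observation that $\axref{ax:Idb}$ lets us turn the hypothesis $\Box^{-1}(\Gamma),\Delta'\vdash\psi$ into information about $\Gamma$ of the form $\Diamond\phi\to\Box\psi$. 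The bookkeeping for the combined cases $\axref{ax:Cd}\oplus\axref{ax:Idb}$ and $\axref{ax:Nd}\oplus\axref{ax:Cd}\oplus\axref{ax:Idb}$ is where things get thorniest, but once the $\axref{ax:Idb}$ modification is in place it also forces $\mathbf{L}\notin U$ whenever $\Gamma\neq\mathbf{L}$ (because such a saturated segment must contain a non-trivial witness for $\Diamond\bot\notin\Gamma$), which is why $(\axref{ax:Nd}\text{-suff})$ — rather than merely the correspondence version — holds in every case containing both $\axref{ax:Nd}$ and $\axref{ax:Idb}$.
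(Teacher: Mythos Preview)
Your overall architecture---soundness via the sufficient/correspondence results, completeness via Theorem~\ref{thm:completeness} once the canonical frame is shown to lie in the target class, and a case split on which of $\axref{ax:Cd},\axref{ax:Idb}$ are present---matches the paper, and your treatment of $\axref{ax:Nd}$ (including the reason why only the correspondence form is available when $\axref{ax:Idb}$ is absent) is correct.

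The genuine gap is in the $\axref{ax:Cd}$ case. You construct a segment $(\Gamma,U')$ whose tail $U'$ consists of prime theories $\Delta$ with $\Diamond(\Delta)\subseteq\Gamma$ (this is exactly the paper's ``A-segment''), and you propose to use it as the witness $x'$ in~\eqref{eq:Cd-suff}. But~\eqref{eq:Cd-suff} then quantifies over \emph{every} $y\geq x$ in the canonical frame, and in the unrestricted frame $\mo{X}_{\log{CK}\oplus\Ax}$ there are segments $(\Gamma_y,U_y)$ with $\Gamma\subseteq\Gamma_y$ whose tail $U_y$ is completely uncontrolled---e.g.\ the segments produced by Lemma~\ref{lem:sigseg}, with one ad hoc $\Delta_\theta$ per $\Diamond\theta\in\Gamma_y$. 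For a fixed $\Delta\in U'$ you know that each $\delta\in\Delta$ has $\Diamond\delta\in\Gamma_y$, hence \emph{some} $\Delta_\theta\in U_y$ contains $\delta$; but nothing forces a \emph{single} element of $U_y$ to contain all of $\Delta$. So the step ``the second component $U'$ is rich enough that any intuitionistic extension $y$ of $x$ still admits, for every $\Delta\in U'$, an $R$-successor of $y$ extending $\Delta$'' fails in the unrestricted canonical model. The paper's fix is to \emph{prune} the canonical model so that every world is either an A-segment or a ``B-segment'' (tail $=\{\Delta\mid\Box^{-1}(\Gamma)\subseteq\Delta\}$). Then any $y$ without $\expl$ in its tail is forced to be an A-segment, and the A-segment property of $y$ (not of $x'$) is what lets you extend $\Delta$ inside $U_y$ (Lemma~\ref{lem:CF_suff_Cd}). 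The B-segments are there only to make the $\Box$-direction of the truth lemma go through.

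For $\axref{ax:Idb}$ you correctly anticipate that a restriction is needed; the paper's concrete device is ``P-segments'' (Definition~\ref{def:p-seg}), where each tail is determined by a single purpose $\pi$ with $\Diamond\pi\notin\Gamma$. For the combined case $\axref{ax:Cd}\oplus\axref{ax:Idb}$ the paper does something you do not anticipate: it drops segments entirely and works with prime theories, setting $\Gamma R\Delta$ iff $\Box^{-1}(\Gamma)\subseteq\Delta$ and $\Diamond(\Delta)\subseteq\Gamma$. This is effectively ``A-segments only,'' and with $\axref{ax:Idb}$ available the truth lemma for $\Box$ no longer needs B-segments; the resulting frame then satisfies the simpler conditions~\eqref{eq:Cd-strong} and~\eqref{eq:Idb-suffCd}. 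Your explanation of why $\axref{ax:Nd}$ upgrades to~\eqref{eq:Nd-suff} once $\axref{ax:Idb}$ is present is correct in spirit and matches how the pruned models behave.
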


  The completeness part of the proof splits into four cases, depending on
  whether or not $\axref{ax:Cd}$ and $\axref{ax:Idb}$ are in $\Ax$.
  Each uses a slightly different canonical model construction.
  Intuitively, we ``prune'' the canonical model construction for $\log{CK}$
  from Definition~\ref{def:segment} (i.e.~we leave out certain segments)
  to ensure satisfaction of relevant sufficient and correspondence conditions.
  We begin by adapting the canonical model to accommodate for the case where
  $\axref{ax:Cd} \in \Ax$ and $\axref{ax:Idb} \notin \Ax$.

\begin{defn}[\rocqdoc{ComplsegAB.general_segAB_completeness.html\#ABsegment}]
  Let $\Gamma$ and $\Delta$ be prime theories. 
  \begin{enumerate}
    \item The \emph{A-segment} of $\Gamma$ is the segment
          $(\Gamma, A)$ where $\Delta \in A$ if and only if for all $\phi \in \mathbf{L}$:
          $\Box\phi \in \Gamma$ implies $\phi \in \Delta$,
          and $\phi \in \Delta$ implies $\Diamond\phi \in \Gamma$.
    \item The \emph{B-segment} of $\Gamma$ is the segment
          $(\Gamma, B)$ where $\Delta \in B$ if and only if
          $\{ \phi \mid \Box\phi \in \Gamma \} \subseteq \Delta$.
  \end{enumerate}
\end{defn}

  Note that we always have $\expl \in B$.
  Besides, $\expl \in A$ if $\Diamond\bot \in \Gamma$.
  We now verify that $(\Gamma, A)$ and $(\Gamma, B)$ are indeed segments.

\begin{lemma}
  Let $\Gamma$ be a prime theory with A- and B-segment $(\Gamma, A)$ and $(\Gamma, B)$.
  Then we have:
  \begin{enumerate}
    \item If $\Box\phi \in \Gamma$ then $\phi \in \Delta$ for all $\Delta \in A \cup B$%
             ~$(\rocqdoc{ComplsegAB.general_segAB_completeness.html\#boxreflect})$.
    \item If $\Diamond\phi \in \Gamma$ then $\phi \in \Delta$ for some $\Delta \in A$,
          and $\phi \in \Theta$ for some $\Theta \in B$%
          ~$(\rocqdoc{ComplsegAB.general_segAB_completeness.html\#diamwitness})$.
  \end{enumerate}  
\end{lemma}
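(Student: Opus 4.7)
The plan is to treat the two parts separately: part (1) reduces to unpacking the defining clauses of $A$ and $B$, while part (2) splits into a trivial B case and a nontrivial A case that invokes the Lindenbaum lemma together with axiom $\axref{ax:Cd}$ (which is the ambient assumption of this subsection).

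For part (1), both the A-segment and B-segment definitions include the inclusion $\Box^{-1}(\Gamma) \subseteq \Delta$ as their first (in A) or only (in B) clause. Hence whenever $\Box\phi \in \Gamma$ and $\Delta \in A \cup B$, we have $\phi \in \Box^{-1}(\Gamma) \subseteq \Delta$. For part (2) in the B case, I take $\Theta = \mathbf{L}$: this prime theory trivially contains $\Box^{-1}(\Gamma)$, so $\mathbf{L} \in B$, and it of course contains $\phi$. (The hypothesis $\Diamond\phi \in \Gamma$ is not even used here.)

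The real content is part (2) for the A-segment: given $\Diamond\phi \in \Gamma$, I want to produce a prime theory $\Delta$ containing $\Box^{-1}(\Gamma) \cup \{\phi\}$ such that every $\psi \in \Delta$ satisfies $\Diamond\psi \in \Gamma$. By Lemma~\ref{lem:Lind}, this reduces to checking $\Box^{-1}(\Gamma), \phi \noderiv{\Ax} \{\psi \mid \Diamond\psi \notin \Gamma\}$. I argue by contradiction: suppose some finite $\Delta'$ with $\Diamond\psi \notin \Gamma$ for all $\psi \in \Delta'$ admits $\Box^{-1}(\Gamma), \phi \deriv{\Ax} \bigvee \Delta'$. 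Chaining together the deduction theorem, the admissible box rule (from $\Gamma \vdash \chi$ to $\Box(\Gamma) \vdash \Box\chi$), the trivial inclusion $\Box(\Box^{-1}(\Gamma)) \subseteq \Gamma$, and axiom $\axref{ax:Kd}$ applied internally to $\Gamma$, I obtain $\Gamma \deriv{\Ax} \Diamond\phi \to \Diamond(\bigvee \Delta')$. Since $\Diamond\phi \in \Gamma$, modus ponens yields $\Gamma \deriv{\Ax} \Diamond(\bigvee \Delta')$; a short induction on $|\Delta'|$ using $\axref{ax:Cd}$ to distribute $\Diamond$ over the disjunction then gives $\Gamma \deriv{\Ax} \bigvee \{\Diamond\psi \mid \psi \in \Delta'\}$. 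Primality of $\Gamma$ picks some $\psi \in \Delta'$ with $\Diamond\psi \in \Gamma$, contradicting the choice of $\Delta'$.

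The main obstacle — and the single point at which the argument genuinely relies on $\axref{ax:Cd} \in \Ax$ — is the last step distributing $\Diamond$ across $\bigvee \Delta'$. Without $\axref{ax:Cd}$ the derivation $\Gamma \deriv{\Ax} \Diamond(\bigvee \Delta')$ cannot be broken into a disjunction of $\Diamond$-formulas, primality buys us nothing, and the construction collapses. This is precisely why the A-segment is the correct refinement for logics containing $\axref{ax:Cd}$, while the coarser B-segment (witnessed by $\mathbf{L}$) suffices unconditionally.
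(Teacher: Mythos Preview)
Your approach matches the paper's: part (1) is by definition, the B-witness is $\Theta = \mathbf{L} = \expl$, and for the A-witness you apply Lindenbaum to $\Box^{-1}(\Gamma) \cup \{\phi\}$ against $\{\psi \mid \Diamond\psi \notin \Gamma\}$, deriving the contradiction via $\axref{ax:Kd}$, then $\axref{ax:Cd}$, then primality.

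There is one small but genuine gap. Your ``short induction on $|\Delta'|$'' distributing $\Diamond$ over $\bigvee\Delta'$ has no valid base case: for $\Delta' = \emptyset$ you would need $\Diamond\bot \to \bot$, which is $\axref{ax:Nd}$, not $\axref{ax:Cd}$, and is not assumed here. This case can arise: if $\Box\bot \in \Gamma$ and $\Diamond\phi \in \Gamma$ (consistent in $\log{CK} \oplus \axref{ax:Cd}$), then $\Box^{-1}(\Gamma) \cup \{\phi\} \deriv{\Ax} \bot = \bigvee\emptyset$. The paper avoids this by first disposing of the case $\Diamond\bot \in \Gamma$, where $\expl \in A$ is already a witness containing $\phi$; in the remaining case $\bot \in \{\psi \mid \Diamond\psi \notin \Gamma\}$, so the target set is nonempty and one may take $\Delta' \neq \emptyset$. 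Inserting that one-line case split fixes your argument.
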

\begin{proof}
  Item 1) holds by construction, and for B-segments 2) is witnessed by
  $\Theta = \expl \in B$, so we are left to prove 2) for A-segments.
  To this end, suppose $\Diamond\phi \in \Gamma$.
  If $\Diamond\bot \in \Gamma$, then we can also use $\Delta = \expl\in A$ as witness.
  Else, we claim that 
  \begin{equation}\label{eq:Cd-truth}
    \Box^{-1}(\Gamma) \cup \{ \phi \}
      \not\vdash \Diamond^{-1}(\Gamma^c).
  \end{equation}
  (Here $\Gamma^c = \lan{L} \setminus \Gamma$,
  so $\Diamond^{-1}(\Gamma^c) = \{ \theta \mid \Diamond\theta \notin \Gamma \}$.)
  Then $\Diamond^{-1}(\Gamma^c)$ is not empty as $\Diamond\bot \not\in \Gamma$.
  If~\eqref{eq:Cd-truth} is false, then there are
  $\psi_1, \ldots, \psi_n \in \Box^{-1}(\Gamma)$
  and $\theta_1, \ldots, \theta_m \in \Diamond^{-1}(\Gamma^c)$ such that
  $\psi_1 \wedge \cdots \wedge \psi_n \wedge \phi \vdash \theta_1 \vee \cdots \vee \theta_m$.
  Since $\Box$ distributes over meets and $\Diamond$ distributes over joins%
  ~(\rocqdoc{GHC.properties.html\#Diam_distrib_list_disj}),
  we have $\psi := \psi_1 \wedge \cdots \psi_n \in \Box^{-1}(\Gamma)$
  and $\theta := \theta_1 \vee \cdots \vee \theta_m \in \Diamond^{-1}(\Gamma^c)$.
  Then $\psi \wedge \phi \vdash \theta$, so by~\axref{ax:Kd},
  $\Box\psi \wedge \Diamond\phi \vdash \Diamond\theta$.
  By assumption $\Box\psi, \Diamond\phi \in \Gamma$, hence $\Diamond\theta \in \Gamma$,
  a contradiction.
  Now the Lindenbaum lemma yields a prime theory $\Delta$ containing
  $\Box^{-1}(\Gamma) \cup \{ \phi \}$ and disjoint from $\Diamond^{-1}(\Gamma^c)$.
  Therefore $\phi \in \Delta$ and $\Delta \in A$, as desired.
\end{proof}

  We construct the canonical frame $\mo{X}_{AB}$ and model $\mo{M}_{AB}$%
  ~(\rocqdoc{ComplsegAB.general_segAB_completeness.html\#CM})
  as in Definition~\ref{def:segment}, except that we restrict our worlds
  to A- and B-segments.
  Then we have the following truth lemma.
  
\begin{lemma}[Truth lemma \rocqdoc{ComplsegAB.general_segAB_completeness.html\#truth_lemma}]
  Let $(\Gamma, U)$ be an A- or B-segment. Then for all $\phi \in \mathbf{L}$ we have
  $(\Gamma, U) \Vdash \phi$ iff $\phi \in \Gamma$.
\end{lemma}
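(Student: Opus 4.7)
The plan is to mirror the truth lemma for the general canonical model of Section~\ref{sec:rel-sem-CK}, proceeding by induction on the structure of $\phi$. The cases for proposition letters and $\bot$ follow from the definition of $V_\Sigma$; the inductive steps for $\wedge$, $\vee$ and $\to$ are routine, combining monotonicity along $\subsetsim$ with the deductive closure and primality of $\Gamma$. The only delicate work happens in the two modal cases, and the whole point of the A/B-segment construction is to ensure that restricting worlds to A- and B-segments still leaves enough refuting successors.

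For $\phi = \Box\psi$, the forward direction follows from item~1 of the previous lemma: any $(\Gamma, U) \subsetsim (\Gamma', U')$ forces $\Gamma \subseteq \Gamma'$, hence $\Box\psi \in \Gamma'$; since $(\Gamma', U')$ is an A- or B-segment, every $\Delta \in U'$ contains $\psi$, and the induction hypothesis closes the case. For the converse, assume $\Box\psi \notin \Gamma$. The strategy is to pass via $\subsetsim$ to the B-segment $(\Gamma, B)$, then apply the Lindenbaum lemma to $\Box^{-1}(\Gamma) \noderiv{\Ax} \psi$ to obtain a prime theory $\Delta \supseteq \Box^{-1}(\Gamma)$ with $\psi \notin \Delta$. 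By construction $\Delta \in B$, so the B-segment of $\Delta$ is an $R$-successor of $(\Gamma, B)$ where, by induction, $\psi$ fails.

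For $\phi = \Diamond\psi$, if $\Diamond\psi \in \Gamma$ then for any $\subsetsim$-extension $(\Gamma', U')$ we have $\Diamond\psi \in \Gamma'$, so item~2 of the previous lemma produces a $\Delta \in U'$ with $\psi \in \Delta$; either segment based on $\Delta$ serves as a witnessing $R$-successor via the induction hypothesis. If $\Diamond\psi \notin \Gamma$, the key move is to travel via $\subsetsim$ to the A-segment $(\Gamma, A)$: the defining property of A-segments is exactly that $\phi \in \Delta \in A$ implies $\Diamond\phi \in \Gamma$, so $\psi \in \Delta$ for some $\Delta \in A$ would contradict our assumption. Hence no $R$-successor of $(\Gamma, A)$ satisfies $\psi$ (by induction), refuting $\Diamond\psi$ at $(\Gamma, U)$.

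The main obstacle is the design choice, not the verification: the A-segment's defining property is tailored precisely to refute $\Diamond\psi$, while the B-segment is the largest set of modal successors compatible with $\Gamma$ and hence the right target for Lindenbaum-based refutation of $\Box\psi$. Once one sees that the two refutations must be carried out at different segments sharing the same prime theory (and reachable from any A- or B-segment via $\subsetsim$), the calculations themselves reduce to a direct unwinding of the definitions.
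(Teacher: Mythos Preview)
Your proposal is correct and follows essentially the same approach as the paper's proof: induction on $\phi$, with the $\Box$-refutation routed through the B-segment via Lindenbaum and the $\Diamond$-refutation routed through the A-segment using its defining property. The paper only spells out the $\Box\psi$ case explicitly, but your treatment of both modal cases matches the intended argument.
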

\begin{proof}
  By induction on the structure of $\phi$. 
  We demonstrate the case $\phi = \Box\psi$.
  If $\Box\psi \in \Gamma$ then $\Gamma \Vdash \Box\psi$ by construction.
  If $\Box\psi \notin \Gamma$ then $\Box^{-1}(\Gamma) \not\vdash \psi$,
  so we can use the Lindenbaum lemma to find a prime theory $\Delta$
  containing $\Box^{-1}(\Gamma)$ but not~$\psi$.
  By definition $\Delta$ is in the tail of the B-segment of $\Gamma$.
  Extending $\Delta$ to a segment $(\Delta, V)$ (using either the A- or B-segment),
  we find $(\Gamma, U) \subsetsim (\Gamma, B) R (\Delta, V)$,
  and by the induction hypothesis $(\Delta, V) \not\Vdash \psi$.
  Therefore $(\Gamma, U) \not\Vdash \Box\psi$.
\end{proof}
  
  Finally we check that restricting to A- and B-segments gives rise to
  a frame that satisfies~\eqref{eq:Cd-suff}.
  
\begin{lemma}[\rocqdoc{ComplsegAB.general_segAB_completeness.html\#CF_suff_Cd}]\label{lem:CF_suff_Cd}
  The frame $\mo{X}_{AB}$ satisfies~\eqref{eq:Cd-suff}.
\end{lemma}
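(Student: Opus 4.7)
The plan, given $x = (\Gamma, U) \in \mo{X}_{AB}$, is to take $x'$ to be the A-segment $(\Gamma, A)$ of $\Gamma$; write $A_\Gamma$ for this tail when disambiguation is needed, and similarly $B_{\Gamma'}$ for a B-segment tail. Since both $x$ and $x'$ share first component $\Gamma$, we have $x \subsetsim x'$. Now fix any $y = (\Gamma', V)$ with $x \subsetsim y$ (so $\Gamma \subseteq \Gamma'$) and any $z = (\Delta, W)$ with $x' R z$ (so $\Delta \in A_\Gamma$, hence $\Box^{-1}(\Gamma) \subseteq \Delta \subseteq \Diamond^{-1}(\Gamma) \subseteq \Diamond^{-1}(\Gamma')$). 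The goal is to produce a world $w \in \mo{X}_{AB}$ with $y R w$ and $\Delta$ included in the first component of $w$.

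The proof splits on whether $y$ is a B- or an A-segment of $\Gamma'$. When $V = B_{\Gamma'}$, simply take $w := \expl = (\mathbf{L}, \{\mathbf{L}\})$: we always have $\mathbf{L} \in B_{\Gamma'}$, so $yRw$, and $\Delta \subseteq \mathbf{L}$ trivially gives $z \subsetsim w$. The nontrivial case is $V = A_{\Gamma'}$: here I would apply the Lindenbaum lemma~\ref{lem:Lind} to the pair $(\Box^{-1}(\Gamma') \cup \Delta, \; \{\theta \mid \Diamond\theta \notin \Gamma'\})$ to obtain a prime theory $\Delta_w$ extending $\Box^{-1}(\Gamma') \cup \Delta$ and disjoint from $\{\theta \mid \Diamond\theta \notin \Gamma'\}$. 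Such a $\Delta_w$ lies in $A_{\Gamma'} = V$, so pairing it with, say, its B-segment yields a world $w \in \mo{X}_{AB}$ meeting both requirements.

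The hard part is verifying the Lindenbaum premise: for any finite list $\theta_1, \ldots, \theta_k$ with each $\Diamond\theta_i \notin \Gamma'$, the set $\Box^{-1}(\Gamma') \cup \Delta$ does not derive $\theta_1 \vee \cdots \vee \theta_k$. Assuming otherwise, closure of $\Box^{-1}(\Gamma')$ under conjunction (via \axref{ax:Kb} and necessitation) together with closure of the prime theory $\Delta$ under conjunction yields $\psi \wedge \delta \deriv{\Ax} \theta_1 \vee \cdots \vee \theta_k$ for some $\psi \in \Box^{-1}(\Gamma')$ and $\delta \in \Delta$. Standard manipulations with necessitation, \axref{ax:Kb} and \axref{ax:Kd} then lift this to $\Box\psi \deriv{\Ax} \Diamond\delta \to \Diamond(\theta_1 \vee \cdots \vee \theta_k)$. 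Since $\Box\psi \in \Gamma'$ and $\Diamond\delta \in \Gamma \subseteq \Gamma'$ (using $\delta \in \Delta \in A_\Gamma$), this yields $\Diamond(\theta_1 \vee \cdots \vee \theta_k) \in \Gamma'$. Iteratively applying \axref{ax:Cd} distributes the diamond over the disjunction, and primeness of $\Gamma'$ then extracts some $\Diamond\theta_j \in \Gamma'$, contradicting $\Diamond\theta_j \notin \Gamma'$. Availability of \axref{ax:Cd} is essential to this final step, which is precisely why the A-segment refinement is tailored to the setting $\axref{ax:Cd} \in \Ax$.
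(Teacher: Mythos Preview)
Your proof is correct and follows essentially the same approach as the paper: both take $x'$ to be the A-segment of $\Gamma$, both use $\expl$ as the witness when it lies in the tail of $y$, and both apply Lindenbaum to $\Box^{-1}(\Gamma') \cup \Delta$ against $\{\theta \mid \Diamond\theta \notin \Gamma'\}$ otherwise. The only cosmetic differences are that the paper splits on ``$\expl \in U'$'' rather than ``$y$ is a B-segment'' (which also absorbs the A-segment sub-case $\Diamond\bot \in \Gamma'$), and that the paper uses \axref{ax:Cd} implicitly to collapse the finite disjunction to a single $\psi$ with $\Diamond\psi \notin \Gamma'$, whereas you make the distribution step explicit at the end.
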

\begin{proof}
  Given $(\Gamma, U)$, we claim that the A-segment $(\Gamma, A)$ of $\Gamma$
  witnesses satisfaction of~\eqref{eq:Cd-suff}.
  Suppose $(\Gamma, U) \subsetsim (\Gamma', U')$ and
  $(\Gamma, A) R (\Delta, V)$ (so $\Delta \in A$).
  If $\expl \in U'$ then $(\expl, A)$ is a segment such that
  $(\Gamma', U') R (\expl, A)$ and $(\Delta, V) \subsetsim (\expl, A)$,
  so~\eqref{eq:Cd-suff} holds.
  So assume $\expl \notin U'$. Then $(\Gamma', U')$ is an A-segment.
  We construct a $\Delta' \in U'$ such that $\Delta \subseteq \Delta'$.
  Then e.g.~the A-segment $(\Delta', A)$ witnesses truth of~\eqref{eq:Cd-suff}.
  
  First, note that $\{ \psi \mid \Diamond\psi \notin \Gamma' \} \neq \emptyset$,
  else $\Diamond\bot\in\Gamma'$ would entail the presence of $\expl$ in $U'$.
  Second, we claim that 
  \begin{equation}\label{eq:corr-Cd2}
    \Box^{-1}(\Gamma') \cup \Delta
      \not\vdash \{ \psi \mid \Diamond\psi \notin \Gamma' \}.
  \end{equation}
  If this were not the case, then we can find $\Box\phi \in \Gamma'$ and $\delta \in \Delta$ and $\Diamond\psi \notin \Gamma'$
  such that $\phi \wedge \delta \vdash \psi$, hence $\Box\phi \wedge \Diamond\delta \vdash \Diamond\psi$.
  By definition of A-segments, $\delta \in \Delta$ implies $\Diamond\delta \in \Gamma$,
  so $\Diamond\delta \in \Gamma'$. But $\Box\phi \in \Gamma'$ by
  construction, hence $\Diamond\psi \in \Gamma'$, a contradiction.
  So~\eqref{eq:corr-Cd2} holds, and we can use the Lindenbaum lemma to find
  a prime theory $\Delta'$ containing $\Delta$ and $\phi$ for each
  $\Box\phi \in \Gamma'$, while avoiding $\psi$ for each $\Diamond\psi \notin \Gamma'$.
  We claim that $\Diamond\phi\in\Gamma'$ for each $\phi\in\Delta'$.
  If not, then there exists a $\phi \in \Delta'$ such that 
  $\Diamond\phi \notin \Gamma'$, contradicting the fact that
  $\Delta'$ avoids all $\psi$ such that $\Diamond\psi \notin \Gamma'$.
  This entails $\Delta \subseteq \Delta'$ and $\Delta' \in U'$.
\end{proof}

  Combining the previous lemmas yields:

\begin{lemma}[\rocqdoc{ComplsegAB.CK_Cd_segAB_completeness.html\#suff_CKCd_Strong_Completeness}]
  $\log{CK} \oplus \axref{ax:Cd}$ is sound and strongly complete
  with respect to the class of frames satisfying~\eqref{eq:Cd-suff}.
\end{lemma}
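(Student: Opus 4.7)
The plan is to package the preceding technical work: soundness reduces to the frame-correspondence result, and completeness is a direct application of the canonical model $\mo{M}_{AB}$ built from A- and B-segments.

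For soundness, I would first invoke Proposition~\ref{prop:soundness} to handle all axioms and rules of $\log{CK}$, leaving only instances of $\axref{ax:Cd}$ to check. Validity of $\axref{ax:Cd}$ on any frame satisfying~\eqref{eq:Cd-suff} is already delivered by the proposition after Definition~\ref{def:frm-cond}, so soundness follows by a routine induction on derivations with this one additional base case.

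For completeness, I would reason by contrapositive, mirroring the proof of Theorem~\ref{thm:completeness}. Suppose $\Gamma \noderiv{\axref{ax:Cd}} \phi$. By the Lindenbaum lemma~\ref{lem:Lind}, extend $\Gamma$ to a prime theory $\Gamma'$ with $\Gamma \subseteq \Gamma'$ and $\phi \notin \Gamma'$. Now form the A-segment $(\Gamma', A)$ (or, if one prefers, the B-segment; both are worlds of $\mo{M}_{AB}$ by construction). The truth lemma for $\mo{M}_{AB}$ then gives $(\Gamma', A) \Vdash \chi$ for every $\chi \in \Gamma$ and $(\Gamma', A) \not\Vdash \phi$. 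By Lemma~\ref{lem:CF_suff_Cd}, the underlying frame $\mo{X}_{AB}$ satisfies~\eqref{eq:Cd-suff} and therefore belongs to the target class $\mathcal{F}$, witnessing $\Gamma \nosement{F} \phi$.

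There is no real obstacle here: the genuinely difficult work, namely verifying that restricting to A- and B-segments preserves the truth lemma and still forces~\eqref{eq:Cd-suff} on the canonical frame, has already been carried out in the preceding two lemmas. The only minor care point is making sure that, regardless of whether $\Diamond\bot \in \Gamma'$, some segment with first component $\Gamma'$ exists in $\mo{M}_{AB}$; this is immediate since the B-segment $(\Gamma', B)$ is always available (with $\expl \in B$), so the Lindenbaum step always lands in our model.
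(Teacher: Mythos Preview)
Your proposal is correct and matches the paper's approach exactly: the paper simply writes ``Combining the previous lemmas yields'' this result, and your sketch spells out precisely that combination (soundness via the sufficient-condition proposition, completeness via Lindenbaum, the truth lemma for $\mo{M}_{AB}$, and Lemma~\ref{lem:CF_suff_Cd}). Your remark that the B-segment is always available so the Lindenbaum step lands in the model is a reasonable minor clarification not made explicit in the paper.
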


  Next, we consider the case where $\axref{ax:Idb} \in \Ax$ and
  $\axref{ax:Cd} \notin \Ax$.

\begin{defn}[\rocqdoc{ComplsegP.general_segP_completeness.html\#Psegment}]\label{def:p-seg}
  A \emph{P-segment} (or \emph{purposeful segment}) of $\Gamma$ is defined as
  follows:
  \begin{itemize}
    \item If $\bot \in \Gamma$, then the only P-segment based on $\Gamma$ is
          $(\Gamma, \{ \Gamma \})$ (i.e.~$(\expl, \{ \expl \})$);
    \item If $\bot \notin \Gamma$ but $\Diamond\bot \in \Gamma$, then
          the only P-segment based on $\Gamma$ is $(\Gamma, U)$, where
          $U = \{ \Delta \mid \Box^{-1}(\Gamma) \subseteq \Delta \}$;
    \item If $\bot,\Diamond\bot \notin \Gamma$, then a segment $(\Gamma, U)$
          is a P-segment if there exists a $\Diamond\pi \notin \Gamma$ such that
          \begin{equation*}
            U = \{ \Delta \mid \Box^{-1}(\Gamma) \subseteq \Delta
                   \text{ and } \pi \notin \Delta \} =: U_{\Gamma,\pi}.
          \end{equation*}
  \end{itemize}
\end{defn}

  It is clear that the two cases with $\Diamond\bot \in \Gamma$ give rise
  to segments. For the case with $\Diamond\bot \notin \Gamma$, we have:
  
\begin{lemma}[\rocqdoc{ComplsegP.general_segP_completeness.html\#Lindenbaum_Psegment_Diam}]\label{lem:seg-Idb}
  Let $\Gamma$ be a prime theory such that $\Diamond\pi \notin \Gamma$.
  Let $U_{\Gamma,\pi} := \{ \Delta \mid \Box^{-1}(\Gamma) \subseteq \Delta \text{ and } \pi \notin \Delta \}$.
  Then $(\Gamma, U_{\Gamma,\pi})$ is a segment.
\end{lemma}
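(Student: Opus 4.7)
The plan is to verify directly the two clauses of Definition~\ref{def:segment} for the pair $(\Gamma, U_{\Gamma,\pi})$. Clause~(1) holds by construction: if $\Box\phi \in \Gamma$, then $\phi \in \Box^{-1}(\Gamma) \subseteq \Delta$ for every $\Delta \in U_{\Gamma,\pi}$. The entire content of the lemma therefore sits in clause~(2), which I would prove by reducing to a single underivability and then appealing to Lindenbaum.

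Concretely, for each $\Diamond\phi \in \Gamma$, I would aim to establish
\begin{equation*}
\Box^{-1}(\Gamma), \phi \noderiv{\Ax} \pi.
\end{equation*}
Once this holds, Lemma~\ref{lem:Lind} (applied with the singleton $\{\pi\}$ on the right) yields a prime theory $\Delta \supseteq \Box^{-1}(\Gamma) \cup \{\phi\}$ with $\pi \notin \Delta$, which is precisely a $\Delta \in U_{\Gamma,\pi}$ witnessing $\phi$.

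The main obstacle is justifying this underivability, which I plan to do by contradiction, in the style of Lemma~\ref{lem:sigseg}. Assume for contradiction that $\Box^{-1}(\Gamma), \phi \deriv{\Ax} \pi$. By finitariness of the calculus and the distribution of $\Box$ over conjunction, the finitely many hypotheses from $\Box^{-1}(\Gamma)$ can be bundled into a single formula $\psi$ with $\Box\psi \in \Gamma$ and $\psi, \phi \deriv{\Ax} \pi$. The deduction theorem then gives $\emptyset \deriv{\Ax} \psi \to (\phi \to \pi)$; necessitation yields $\emptyset \deriv{\Ax} \Box(\psi \to (\phi \to \pi))$; \axref{ax:Kb} produces $\Box\psi \deriv{\Ax} \Box(\phi \to \pi)$; and finally \axref{ax:Kd} gives $\Box\psi, \Diamond\phi \deriv{\Ax} \Diamond\pi$. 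Since $\Box\psi, \Diamond\phi \in \Gamma$ and $\Gamma$ is deductively closed, this forces $\Diamond\pi \in \Gamma$, contradicting the standing hypothesis.

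Two small remarks on the plan. First, this reasoning is internal to $\log{CK}$ and does not invoke \axref{ax:Idb}; the role of \axref{ax:Idb} only surfaces later when showing the P-segment canonical model satisfies the correspondence condition for \axref{ax:Idb}. Second, the construction is implicitly used in the third case of Definition~\ref{def:p-seg}, where $\Diamond\bot \notin \Gamma$, so the possibility that clause~(2) might be queried at $\phi = \bot$ (which would force $\bot \in \Delta$ and hence $\pi \in \Delta$) does not arise.
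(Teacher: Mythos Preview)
Your proof is correct and follows exactly the same route as the paper: verify clause~(1) by construction, and for clause~(2) show $\Box^{-1}(\Gamma),\phi \noderiv{\Ax} \pi$ by contradiction (pushing through \axref{ax:Kb} and \axref{ax:Kd} to get $\Diamond\pi \in \Gamma$), then apply Lindenbaum. Your second remark is harmless but unnecessary: the hypothesis $\Diamond\pi \notin \Gamma$ already forces $\Diamond\bot \notin \Gamma$ by monotonicity of $\Diamond$, so the lemma holds as stated without any appeal to how it is used in Definition~\ref{def:p-seg}.
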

\begin{proof}
  By definition $\Box\phi \in \Gamma$ implies $\phi \in \Delta$.
  Let $\Diamond\phi \in \Gamma$.
  Then $\Box^{-1}(\Gamma) \cup \{ \phi \} \not\vdash \pi$,
  for if this were not the case then $\Gamma, \Diamond\phi \vdash \Diamond\pi$,
  contradicting $\Diamond\pi \notin \Gamma$.
  So the Lindenbaum lemma yields a prime theory $\Delta$
  containing $\Box^{-1}(\Gamma)$ and $\phi$ but not $\pi$,
  whence $\Delta \in U$ and $\phi \in \Delta$ as desired.
\end{proof}

  As a consequence of this lemma, for any prime theory
  $\Gamma$ there exists a P-segment headed by $\Gamma$.
  Construct the canonical frame $\mo{X}_P$ and model $\mo{M}_P$
  as in Definition~\ref{def:segment}, but restricting the set of segments
  to P-segments~(\rocqdoc{ComplsegP.general_segP_completeness.html\#CM}).

\begin{lemma}[Truth lemma \rocqdoc{ComplsegP.general_segP_completeness.html\#truth_lemma}]
  Let $(\Gamma, U)$ be a P-segment in $\mo{M}_P$.
  Then for all $\phi \in \mathbf{L}$ we have
  $(\Gamma, U) \Vdash \phi$ iff $\phi \in \Gamma$.
\end{lemma}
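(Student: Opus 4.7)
The plan is induction on $\phi$. Base cases ($p \in \Prop$, $\bot$) are immediate from the canonical valuation and the fact that $\expl = (\mathbf{L}, \{\mathbf{L}\})$ is the only P-segment whose first component contains $\bot$. The Boolean cases $\wedge, \vee, \to$ follow the standard intuitionistic template; the $\to$ case uses Lindenbaum together with the fact that every prime theory carries at least one P-segment (one may always take $\pi = \bot$ in the third clause of Definition~\ref{def:p-seg} when $\Diamond\bot \notin \Gamma$). The real work is in the modal cases.

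For $\Box\psi \in \Gamma$, all three clauses of Definition~\ref{def:p-seg} ensure that every $\Delta \in U'$, for any $(\Gamma', U') \supsetsim (\Gamma, U)$, contains $\Box^{-1}(\Gamma')$ and therefore contains $\psi$; the induction hypothesis closes the case. The forward direction of $\Diamond\psi$ is analogous: from $\Diamond\psi \in \Gamma \subseteq \Gamma'$, the segment axiom applied to $(\Gamma', U')$ yields $\Delta \in U'$ with $\psi \in \Delta$; existence of such $\Delta$ is the content of Lemma~\ref{lem:seg-Idb} in the third clause and is trivial in the other two.

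The backward directions both work by moving to an alternative P-segment above $(\Gamma, U)$ via $\subsetsim$. Given $\Diamond\psi \notin \Gamma$, the derivation $\Diamond\bot \deriv{\Ax} \Diamond\psi$ (from $\bot \deriv{\Ax} \psi$, (\ref{rule:Nec}) and \axref{ax:Kd}) forces $\Diamond\bot \notin \Gamma$, so the P-segment $(\Gamma, U_{\Gamma, \psi})$ is available, and by design no $\Delta \in U_{\Gamma, \psi}$ contains $\psi$. Given $\Box\psi \notin \Gamma$, split on whether $\Diamond\bot \in \Gamma$: if so, then $U = \{\Delta \mid \Box^{-1}(\Gamma) \subseteq \Delta\}$ and Lindenbaum applied to $\Box^{-1}(\Gamma) \not\deriv{\Ax} \psi$ yields a prime $\Delta \in U$ with $\psi \notin \Delta$; if not, the alternative P-segment $(\Gamma, U_{\Gamma, \bot})$ is available, and the same Lindenbaum argument supplies $\Delta \in U_{\Gamma, \bot}$ with $\psi \notin \Delta$ (the constraint $\bot \notin \Delta$ follows automatically, since $\bot \in \Delta$ would force $\Delta = \mathbf{L}$ and hence $\psi \in \Delta$).

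The main obstacle I anticipate is the backward $\Box$ case: one cannot in general stay inside the given tail $U$, because in the third clause of Definition~\ref{def:p-seg} the particular $\pi$ defining $U = U_{\Gamma, \pi}$ may force every element of $U$ to contain $\psi$. The resolution relies on the multiplicity of P-segments above each prime theory---several choices of $\pi$ give distinct P-segments, all $\subsetsim$-comparable to $(\Gamma, U)$---with the universal fallback $\pi = \bot$ always succeeding once $\Diamond\bot \notin \Gamma$. This flexibility is precisely what Definition~\ref{def:p-seg} is engineered to provide, so the truth lemma itself needs no further direct appeal to \axref{ax:Idb} at this step.
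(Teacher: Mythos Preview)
Your proposal is correct and follows essentially the same route as the paper's proof: induction on $\phi$, with the modal backward directions handled by passing to an alternative P-segment over the same head $\Gamma$ (namely $(\Gamma,U_{\Gamma,\psi})$ for $\Diamond$ and $(\Gamma,U_{\Gamma,\bot})$ or the unique P-segment when $\Diamond\bot\in\Gamma$ for $\Box$). Your explicit justification that $\Diamond\psi\notin\Gamma$ forces $\Diamond\bot\notin\Gamma$, and your closing remark on why one must leave the given tail $U$ in the $\Box$ case, are nice clarifications that the paper leaves implicit.
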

\begin{proof}
  By induction on the structure of $\phi$. The only interesting cases
  are for $\phi = \Box\psi$ and $\phi = \Diamond\psi$.
  
  If $\Box\psi \in \Gamma$ then $\Gamma \Vdash \Box\psi$ by construction.
  If $\Box\psi \notin \Gamma$ then $\Box^{-1}(\Gamma) \not\vdash \psi$.
  Use the Lindenbaum lemma to construct a prime theory $\Delta$
  containing $\Box^{-1}(\Gamma)$ but not $\psi$, and extend this to a P-segment
  $(\Delta, V)$.
  Now we consider two cases:
  if $\Diamond\bot \in \Gamma$ then $\Delta$ is in the tail of the (unique)
  P-segment headed by $\Gamma$ (by definition).
  Then $(\Gamma, U)R(\Delta, V)$ and $(\Delta, V) \not\Vdash \psi$,
  hence $(\Gamma, U) \not\Vdash \Box\psi$.
  If $\Diamond\bot \notin \Gamma$ then $\Delta \in U_{\Gamma,\bot}$,
  so $(\Gamma, U_{\Gamma,\bot})$ is a segment above $(\Gamma, U)$
  witnessing $(\Gamma, U) \not\Vdash \Box\psi$.
  
  If $\Diamond\psi \in \Gamma$ then $(\Gamma, U) \Vdash \Diamond\psi$ by
  the definition of segments. If $\Diamond\psi \notin \Gamma$
  then Lemma~\ref{lem:seg-Idb} yields a segment $(\Gamma, V)$
  such that no $\Delta \in V$ contains $\psi$.
  By induction we then find $(\Gamma, V) \not\Vdash \Diamond\psi$,
  and since $(\Gamma, U) \subseteq (\Gamma, V)$
  this implies $(\Gamma, U) \not\Vdash \Diamond\psi$.
\end{proof}

  We show that the canonical model satisfies~\eqref{eq:Idb-suff}.

\begin{lemma}[\rocqdoc{ComplsegP.general_segP_completeness.html\#CF_Idb}]
  $\mo{X}_P$ satisfies~\eqref{eq:Idb-suff}.
\end{lemma}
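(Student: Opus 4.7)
The plan is to fix P-segments $x = (\Gamma_1, U_1)$, $y = (\Gamma_2, U_2)$, $z = (\Gamma_3, U_3)$ with $xRy \subsetsim z$, so that $\Gamma_2 \in U_1$ and $\Gamma_2 \subseteq \Gamma_3$, and in particular $\Box^{-1}(\Gamma_1) \subseteq \Gamma_3$. I aim to produce a P-segment $u = (\Delta, V)$ with $x \subsetsim u$ and $uRz$, and such that every P-segment $s$ with $u \subsetsim s$ modally reaches some $t$ with $z \subsetsim t$. First I would dispatch the boundary cases: if $\bot \in \Gamma_1$ or $\Gamma_3 = \lan{L}$, I take $u = \expl$, which works trivially; and if $\Diamond\bot \in \Gamma_1$ with $\bot \notin \Gamma_1$, then by Definition~\ref{def:p-seg} the tail $U_1$ of $x$ is the set of all prime extensions of $\Box^{-1}(\Gamma_1)$, which contains both $\Gamma_3$ and $\lan{L}$, so $u = x$ succeeds with $t = \expl$ for every $s$ with $u \subsetsim s$. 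That leaves the principal case $\bot, \Diamond\bot \notin \Gamma_1$ and $\bot \notin \Gamma_3$.

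For this case I would construct $\Delta$ via Lindenbaum's lemma as a prime theory extending $\Gamma_1 \cup \{\Diamond\theta \mid \theta \in \Gamma_3\}$ and disjoint from $\{\Box\phi \mid \phi \notin \Gamma_3\}$; this yields $\Gamma_1 \subseteq \Delta$, $\Box^{-1}(\Delta) \subseteq \Gamma_3$, and, since some $\Box\phi$ must be missing from $\Delta$, $\bot \notin \Delta$. I then pick $V := \{\Delta' \mid \Box^{-1}(\Delta) \subseteq \Delta'\}$ if $\Diamond\bot \in \Delta$, and $V := U_{\Delta, \bot}$ otherwise; either way $V$ is a P-segment tail for $\Delta$ containing $\Gamma_3$. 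The hard part will be establishing the Lindenbaum consistency premise, namely that $\Gamma_1, \Diamond\theta_1, \ldots, \Diamond\theta_n \noderiv{\Ax} \Box\phi_1 \vee \cdots \vee \Box\phi_m$ whenever each $\theta_i \in \Gamma_3$ and each $\phi_j \notin \Gamma_3$. The trick is to bundle: setting $\theta := \bigwedge \theta_i \in \Gamma_3$ (exploiting $\Diamond\theta \vdash \Diamond\theta_i$ via $\axref{ax:Kd}$) and $\phi := \bigvee \phi_j$ (exploiting $\Box\phi_j \vdash \Box\phi$ via $\axref{ax:Kb}$), any offending derivation collapses to $\Gamma_1 \vdash \Diamond\theta \to \Box\phi$; axiom $\axref{ax:Idb}$ then promotes this to $\Box(\theta \to \phi) \in \Gamma_1$, so $\theta \to \phi \in \Box^{-1}(\Gamma_1) \subseteq \Gamma_3$, forcing $\phi \in \Gamma_3$, and primality of $\Gamma_3$ finally puts some $\phi_j$ into $\Gamma_3$, a contradiction.

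Finally, to verify the universal successor condition I would fix a P-segment $s = (\Delta', V')$ with $u \subsetsim s$. If $\bot$ or $\Diamond\bot$ lies in $\Delta'$, then $t = \expl$ does the job. Otherwise $V' = U_{\Delta', \pi'}$ for some $\Diamond\pi' \notin \Delta'$, and a further Lindenbaum application would deliver a prime $\Theta$ extending $\Box^{-1}(\Delta') \cup \Gamma_3$ with $\pi' \notin \Theta$. The required consistency $\Box^{-1}(\Delta') \cup \Gamma_3 \noderiv{\Ax} \pi'$ is obtained symmetrically: an offending derivation compresses to $\Box(\theta \to \pi') \in \Delta'$ for some $\theta \in \Gamma_3$, and since $\Diamond\theta \in \Delta \subseteq \Delta'$ by construction, $\axref{ax:Kd}$ yields $\Diamond\pi' \in \Delta'$, contradicting $V' = U_{\Delta', \pi'}$.
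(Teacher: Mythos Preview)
Your proposal is correct and follows essentially the same route as the paper: both build the head $\Delta$ of $u$ via the Lindenbaum lemma applied to $\Gamma_1 \cup \Diamond(\Gamma_3)$ against $\Box((\Gamma_3)^c)$, justify consistency via~\axref{ax:Idb}, pick the P-segment tail by case analysis on $\Diamond\bot \in \Delta$, and then handle an arbitrary $s \succsim u$ by a second Lindenbaum step whose consistency comes from~\axref{ax:Kd} together with $\Diamond(\Gamma_3) \subseteq \Delta$. The only difference is your preliminary case split on $\bot \in \Gamma_1$ and $\Diamond\bot \in \Gamma_1$: the first is vacuous once $\Gamma_3 \neq \lan{L}$ is assumed, and the second is absorbed by the paper's general construction (since then $\Diamond\bot \in \Delta$ anyway), so this split is harmless but unneeded.
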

\begin{proof}
  Suppose $(\Gamma, U) R (\Delta, V) \subsetsim (\Delta', V')$.
  If $(\Delta', V') = (\expl, \{ \expl \})$ then we can take
  $u = (\expl, \{ \expl \})$ to satisfy~\eqref{eq:Idb-suff}.
  So suppose this is not the case. Then $\bot \notin \Delta'$.
  We claim that
  \begin{equation}\label{eq:Idb-corr-3}
    \Gamma \cup \Diamond(\Delta') \not\vdash \Box((\delta')^c).
  \end{equation}
  If this were false, then using the fact that $\Gamma$ and $\Delta'$ are
  prime we can find $\phi \in \Gamma$, $\delta \in \Delta'$ and $\psi \notin \Delta'$
  such that $\phi \wedge \Diamond\delta \vdash \Box\psi$.
  Then $\phi \vdash \Diamond\delta \to \Box\psi$,
  so $\phi \vdash \Box(\delta \to \psi)$ by $\axref{ax:Idb}$,
  hence $\Box(\delta \to \psi) \in \Gamma$, wherefore $\delta \to \psi \in \Delta \subseteq \Delta'$.
  By assumption $\delta \in \Delta'$, so deductive closure of $\Delta'$
  implies $\psi \in \Delta'$, a contradiction.
  
  So~\eqref{eq:Idb-corr-3} holds, and the Lindenbaum lemma yields a prime theory
  $\Theta$ that contains $\Gamma$ and $\Diamond(\Delta')$ but avoids $\Box((\Delta')^c)$.
  If $\Diamond\bot \in \Theta$ then $\Theta$ is the head of a unique P-segment
  $(\Theta, U')$, and $(\Gamma, U) \subsetsim (\Theta, U') R (\Delta', V')$ by construction.
  Moreover, if $(\Theta, U') \subsetsim (\Gamma'', U'')$ then $\Diamond\bot \in \Gamma''$,
  hence $\expl \in U''$ and $(\expl, \{ \expl \})$ is a world such that
  $(\Gamma'', U'') R (\expl, \{ \expl \})$ and $(\Delta', V') \subsetsim (\expl, \{ \expl \})$.
  So~\eqref{eq:Idb-suff} is satisfied.
  
  If $\Diamond\bot \notin \Theta$, then we take $(\Theta, U_{\Theta,\bot})$.
  By construction $(\Gamma, U) \subsetsim (\Theta, U_{\Theta,\bot}) R (\Delta', V')$.
  Let $(\Gamma'', U'')$ be a P-segment such that $\Theta \subseteq \Gamma''$.
  If $\bot \in \gamma''$ or $\Diamond\bot \in \Gamma''$ then $\expl \in U''$
  and we can use $(\expl, \{ \expl \})$ to witness~\eqref{eq:Idb-suff}.
  If not, then $U''$ must be of the form $U_{\Gamma'',\pi}$ for some $\Diamond\pi \notin \Gamma''$.
  We claim that
  \begin{equation}\label{eq:Idb-corr-2}
    \Box^{-1}(\Gamma'') \cup \Delta' \not\vdash \pi.
  \end{equation}
  If this were false, then there exist $\phi \in \Box^{-1}(\Gamma'')$
  and $\delta \in \Delta'$ such that $\phi \wedge \delta \vdash \pi$.
  This implies $\Box\phi \wedge \Diamond\delta \vdash \Diamond\pi$,
  and since $\Box\phi, \Diamond\delta \in \Gamma''$ we get $\Diamond\pi \in \Gamma''$,
  a contradiction. 
  Using the Lindenbaum lemma,~\eqref{eq:Idb-corr-2} yields a prime theory $\Delta''$
  that we can extend to a segment $(\Delta'', V'')$.
  By construction $\Delta'' \in U''$, so $(\Gamma'', U'') R (\Delta'', V'')$,
  as well as $\Delta' \subseteq \Delta''$.
\end{proof}

\begin{lemma}[\rocqdoc{ComplsegP.CK_Idb_segP_completeness.html\#CKIdb_Strong_Completeness}]
  $\log{CK} \oplus \axref{ax:Idb}$ is sound~and~strongly
  complete with respect to the class of frames satisfying~\eqref{eq:Idb-corr}.
\end{lemma}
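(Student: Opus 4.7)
The plan is to assemble the two preceding lemmas into the soundness-plus-completeness template already used for $\log{CK}$ itself. For soundness, Proposition~\ref{prop:Idb-corr} gives that every frame satisfying \eqref{eq:Idb-corr} validates \axref{ax:Idb}, so the induction on derivations from Proposition~\ref{prop:soundness} extends verbatim to cover the extra axiom.

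For strong completeness I would mimic the proof of Theorem~\ref{thm:completeness}, replacing the canonical model of Definition~\ref{def:segment} by the P-segment canonical model $\mo{M}_P$. Contrapositively, suppose $\Gamma \noderiv{\{\axref{ax:Idb}\}} \phi$. The Lindenbaum lemma produces a prime theory $\Gamma' \supseteq \Gamma$ with $\phi \notin \Gamma'$, and then Definition~\ref{def:p-seg}, using Lemma~\ref{lem:seg-Idb} in the principal case where $\bot, \Diamond\bot \notin \Gamma'$, yields a P-segment of the form $(\Gamma', U)$. Applying the truth lemma for $\mo{M}_P$ exhibits this segment as a point of $\mo{X}_P$ forcing all of $\Gamma$ but refuting $\phi$.

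The remaining step is to check that $\mo{X}_P$ belongs to the target class. The preceding lemma established that $\mo{X}_P$ satisfies the stronger sufficient condition \eqref{eq:Idb-suff}; combined with the sufficiency proposition from Section~\ref{sec:three-axioms} this yields validity of \axref{ax:Idb} on $\mo{X}_P$, which by Proposition~\ref{prop:Idb-corr} is equivalent to satisfaction of \eqref{eq:Idb-corr}. I expect no genuinely new obstacle at this stage: the real technical work — carving out P-segments so that diamond formulas retain controlled witnesses while \axref{ax:Idb} imposes the right shape on successors — has already been done in the previous two lemmas, and the present result merely packages them.
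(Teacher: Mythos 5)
Your proposal is correct and matches the paper's (implicit) proof: soundness via the correspondence condition \eqref{eq:Idb-corr}, and completeness by re-running the contrapositive argument of Theorem~\ref{thm:completeness} with the P-segment canonical model $\mo{M}_P$, whose membership in the target class follows because \eqref{eq:Idb-suff} entails \eqref{eq:Idb-corr}. You also correctly avoid the one pitfall here — Theorem~\ref{thm:completeness} cannot be applied off the shelf since the general segment frame need not satisfy \eqref{eq:Idb-corr}, so the argument must indeed be mimicked with $\mo{M}_P$ rather than invoked directly.
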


  We have now gathered enough background to prove:

\begin{proof}[Proof of Theorem~\ref{thm:strong_compl_for_Ax}]
  Soundness results are straightforwardly obtained as instances of a general result%
  ~(\rocqdoc{Soundness.general_soundness.html\#Soundness})
  via the use of correspondence conditions
  ~(\rocqdoc{Soundness.CK_soundness.html\#CK_Soundness},%
  \rocqdoc{Soundness.CK_Cd_soundness.html\#CKCd_Soundness},%
  \rocqdoc{Soundness.CK_Idb_soundness.html\#CKIdb_Soundness},%
  \rocqdoc{Soundness.WK_soundness.html\#WK_Soundness},%
  \rocqdoc{Soundness.CK_Cd_Idb_soundness.html\#CKCdIdb_Soundness},%
  \rocqdoc{Soundness.CK_Cd_Nd_soundness.html\#CKCdNd_Soundness},%
  \rocqdoc{Soundness.CK_Idb_Nd_soundness.html\#CKIdbNd_Soundness},%
  \rocqdoc{Soundness.IK_soundness.html\#IK_Soundness}).
  For completeness, we consider four cases based on whether or not
  $\axref{ax:Cd}$ and $\axref{ax:Idb}$ are in $\Ax$.
  In each case, adding $\axref{ax:Nd}$ to the logic yields a notion
  of (prime) theory which implies that
  the corresponding canonical model satisfies~\eqref{eq:Nd-corr} or \eqref{eq:Nd-suff}.
  
  \medskip\noindent
  \textit{Case 1: $\axref{ax:Cd} \notin \Ax$ and $\axref{ax:Idb} \notin \Ax$.}
    If $\axref{ax:Nd} \notin \Ax$ then this is Theorem~\ref{thm:ck-completeness}%
    ~(\rocqdoc{Complseg.CK_seg_completeness.html\#CK_Strong_Completeness})
    and, as mentioned, if $\axref{ax:Nd} \in \Ax$ then the canonical model construction
    used in this theorem satisfies~\eqref{eq:Nd-corr}%
    ~(\rocqdoc{Complseg.WK_seg_completeness.html\#WK_Strong_Completeness}).

  \medskip\noindent 
  \textit{Case 2: $\axref{ax:Cd} \in \Ax$ and $\axref{ax:Idb} \notin \Ax$.}
    Use a canonical model based on A- and B-segments
    as outlined above~%
    (\rocqdoc{ComplsegAB.CK_Cd_segAB_completeness.html\#suff_CKCd_Strong_Completeness},%
     \rocqdoc{ComplsegAB.CK_Cd_Nd_segAB_completeness.html\#suff_CKCdNd_Strong_Completeness}).

  \medskip\noindent
  \textit{Case 3: $\axref{ax:Cd} \notin \Ax$ and $\axref{ax:Idb} \in \Ax$.}
    In this case we use a canonical model based on P-segments,
    as outlined above~%
    (\rocqdoc{ComplsegP.CK_Idb_segP_completeness.html\#suff_CKIdb_Strong_Completeness},%
\rocqdoc{ComplsegP.CK_Idb_Ndb_segP_completeness.html\#suff_suff_CKIdbNdb_Strong_Completeness}).
  
  \medskip\noindent
  \textit{Case 4: $\axref{ax:Cd} \in \Ax$ and $\axref{ax:Idb} \in \Ax$.}
    In this final case we can use prime theories as the worlds of our canonical
    models~(\rocqdoc{Complth.general_th_completeness.html\#cworld}), rather than segments. These are partially ordered by inclusion,
    and we define the modal accessibility relation $R$ by
    letting $\Gamma R \Delta$ iff
    $\Box^{-1}(\Gamma) \subseteq \Delta
        \text{ and } \Diamond(\Delta) \subseteq \Gamma$.
    We can view this as the restriction to segments of the
    form $(\Gamma, U)$ where a prime theory $\Delta$ is in $U$ if and only
    if $\Gamma R \Delta$.
    It turns out that $\axref{ax:Cd}$ is crucial to prove the truth lemma,
    while $\axref{ax:Idb}$ allows us to prove that the resulting frame satisfies
    \eqref{eq:Cd-strong} and \eqref{eq:Idb-suffCd}~(\rocqdoc{Complth.general_th_completeness.html\#CF_strong_Cd_weak_Idb}).
    Once again, we obtain strong completeness%
    ~(\rocqdoc{Complth.CK_Cd_Idb_th_completeness.html\#suff_suff_CKCdIdb_Strong_Completeness},%
    \rocqdoc{Complth.IK_th_completeness.html\#suff_suff_suff_IK_Strong_Completeness}).
\end{proof}

  Completeness with respect to correspondence conditions straightforwardly follows. 
  
\begin{corollary}\label{cor:corr_strong_compl_for_Ax}
  Let $\Ax \subseteq \{ \axref{ax:Nd}, \axref{ax:Cd}, \axref{ax:Idb} \}$.
  Then the logic $\log{CK} \oplus \Ax$ is sound and
  strongly complete with respect to the class of frames satisfying
  \textup{($\sf{A}$-corr)} for each $\sf{A} \in \Ax$.
\end{corollary}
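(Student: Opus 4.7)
The plan is to derive this corollary directly from Theorem~\ref{thm:strong_compl_for_Ax} by exploiting the relationship between the sufficient frame conditions and the correspondence conditions established in Section~\ref{sec:three-axioms}. The key observation is that for each axiom $\sf{A} \in \{\axref{ax:Nd}, \axref{ax:Cd}, \axref{ax:Idb}\}$, satisfaction of \textup{($\sf{A}$-suff)} implies validity of $\sf{A}$ (by the first proposition of Section~\ref{sec:three-axioms}), and validity of $\sf{A}$ is equivalent to satisfaction of \textup{($\sf{A}$-corr)}; hence every frame satisfying \textup{($\sf{A}$-suff)} also satisfies \textup{($\sf{A}$-corr)}, so the class of frames validating the suff-conditions is contained in the class validating the corr-conditions.

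For soundness, I would invoke the three correspondence propositions (for $\axref{ax:Nd}, \axref{ax:Cd}, \axref{ax:Idb}$) which each state that satisfaction of the correspondence condition is equivalent to validity of the corresponding axiom. Thus every frame in the class $\mathcal{F}_{\text{corr}}$ satisfying all of the \textup{($\sf{A}$-corr)} conditions for $\sf{A} \in \Ax$ validates every axiom in $\Ax$. Combined with Proposition~\ref{prop:soundness} and a routine induction on proofs (treating each new axiom as directly validated at every frame), this delivers $\Gamma \deriv{\Ax} \phi \Rightarrow \Gamma \sement{F_{\text{corr}}} \phi$.

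For strong completeness, I would reason by contrapositive. Suppose $\Gamma \noderiv{\Ax} \phi$. By Theorem~\ref{thm:strong_compl_for_Ax}, there is a frame $\mo{X}$ satisfying \textup{($\sf{A}$-suff)} (or \eqref{eq:Nd-corr} when $\sf{A}=\axref{ax:Nd}$) for each $\sf{A} \in \Ax$, together with a valuation and a world refuting $\Gamma \vdash \phi$. By the inclusion noted in the first paragraph, $\mo{X}$ also satisfies \textup{($\sf{A}$-corr)} for each $\sf{A} \in \Ax$, so $\mo{X} \in \mathcal{F}_{\text{corr}}$, witnessing $\Gamma \nosement{F_{\text{corr}}} \phi$.

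The bulk of the work has already been done in Theorem~\ref{thm:strong_compl_for_Ax}, so there is no real obstacle here; the only care needed is to separate, case by case, the logics in which $\axref{ax:Nd}$ already appears with its correspondence condition from those in which it does not, and to verify that ``more frames'' (i.e.\ the corr-class) preserves the completeness direction while the correspondence equivalences preserve the soundness direction.
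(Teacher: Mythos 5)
Your proposal is correct and takes essentially the same route as the paper: both derive the corollary from Theorem~\ref{thm:strong_compl_for_Ax} via the observation that each sufficient condition entails the matching correspondence condition, so the countermodels produced by the theorem already lie in the corr-class, while soundness follows from the correspondence equivalences. Your care over the cases where $\axref{ax:Nd}$ already appears with~\eqref{eq:Nd-corr} mirrors the paper's remark that the cases $\Ax \subseteq \{\axref{ax:Nd}\}$ are handled by the theorem itself.
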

\begin{proof}
Note that the cases where $\Ax\subseteq\{\axref{ax:Nd}\}$
are already treated above.
For the remaining ones, as sufficient conditions entail correspondence conditions
(\rocqdoc{Kripke.correspondence.html\#suff_impl_Cd},%
\rocqdoc{Kripke.correspondence.html\#suff_impl_Idb},%
\rocqdoc{Kripke.correspondence.html\#suff_impl_Nd}),
we directly use Theorem~\ref{thm:strong_compl_for_Ax} to
obtain our result 
(\rocqdoc{ComplsegAB.CK_Cd_segAB_completeness.html\#CKCd_Strong_Completeness},%
\rocqdoc{ComplsegAB.CK_Cd_Nd_segAB_completeness.html\#CKCdNd_Strong_Completeness},%
\rocqdoc{ComplsegP.CK_Idb_segP_completeness.html\#CKIdb_Strong_Completeness},%
\rocqdoc{ComplsegP.CK_Idb_Nd_segP_completeness.html\#CKIdbNd_Strong_Completeness},%
\rocqdoc{Complth.CK_Cd_Idb_th_completeness.html\#CKCdIdb_Strong_Completeness},%
\rocqdoc{Complth.IK_th_completeness.html\#IK_Strong_Completeness}).
\end{proof}

\section{Comparison of diamond-free fragments}\label{sec:diamond-free}

  We leverage the sound and complete semantics for extensions of
  $\log{CK}$
  to study their diamond-free fragments.
  We call a logic a \emph{conservative extension of $\log{CK}_{\Box}$}
  if its $\Diamond$-free fragment coincides with $\log{CK}_{\Box}$.
  It is known that the $\Diamond$-free fragment of
  $\log{CK} \oplus \axref{ax:Nd} \oplus \axref{ax:Cd}$ is a conservative
  extension of $\log{CK}_{\Box}$~\cite[Corollary~30]{DasMar23}, and that this is
  not the case for $\log{CK} \oplus \axref{ax:Nd} \oplus \axref{ax:Idb}$.
  In a blog comment~\cite{Das22-comment} Das speculated that ``the real distinction of $\log{IK}$ is due to [\axref{ax:Idb}]'',
  but we falsify this by proving
  that $\log{CK} \oplus \axref{ax:Cd} \oplus \axref{ax:Idb}$
  (hence $\log{CK} \oplus \axref{ax:Idb}$) is conservative over $\log{CK}_{\Box}$.
  As a consequence, we characterise precisely which extensions of $\log{CK}$
  with axioms in $\{ \axref{ax:Nd}, \axref{ax:Cd}, \axref{ax:Idb} \}$
  are conservative over~$\log{CK}_{\Box}$.%
  \footnote{Formally, we show that $\log{CK} \oplus \Ax$ and $\log{CK}$ coincide on their $\Diamond$-free
  fragments for certain $\Ax$, relying on Das and Marin~\cite[Corollary~6]{DasMar23} that this coincides with $\log{CK}_\Box$.}

\begin{proposition}[\rocqdoc{Conservativity.CK_Cd_Idb_conserv_CK.html\#diam_free_eq_CKCdIdb_CK}]\label{prop:k34}
  The logic $\log{CK} \oplus \axref{ax:Cd} \oplus \axref{ax:Idb}$
  is a conservative extension of $\log{CK}_{\Box}$.
\end{proposition}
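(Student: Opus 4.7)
The plan is a semantic argument. As the paper's footnote observes, since Das and Marin have shown that the $\Diamond$-free fragment of $\log{CK}$ coincides with $\log{CK}_{\Box}$, it suffices to show that $\log{CK}\oplus\axref{ax:Cd}\oplus\axref{ax:Idb}$ is conservative over $\log{CK}$ on $\Diamond$-free formulas. Arguing by contrapositive, suppose $\phi$ is $\Diamond$-free with $\noderiv{\log{CK}}\phi$; Theorem~\ref{thm:ck-completeness} supplies a $\log{CK}$-model $\mo{M}=(X,\expl,\leq,R,V)$ and a world $x\in X$ refuting $\phi$. The plan is to transform $\mo{M}$ into a $\log{CK}$-model $\mo{M}'$ whose frame satisfies both \eqref{eq:Cd-suff} and \eqref{eq:Idb-suff} and still refutes $\phi$ at $x$; soundness from Theorem~\ref{thm:strong_compl_for_Ax} will then yield $\noderiv{\log{CK}\oplus\axref{ax:Cd}\oplus\axref{ax:Idb}}\phi$.

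The construction is $\mo{M}':=(X,\expl,\leq',R',V)$, where
\[
  \leq'\,:=\,\leq\,\cup\,\{(y,\expl)\mid y\in X\}
\]
promotes $\expl$ to a top element of the preorder, and
\[
  R'\,:=\,\{(y,z)\mid \exists y'\in X\text{ with }yRy'\text{ and }y'\leq' z\}\ \cup\ \{(y,\expl)\mid y\in X\}
\]
is the upward closure of $R$ in its second argument, together with a fresh edge from every world to $\expl$. Since $\expl\in V(p)$ for all $p$, $V$ remains a valuation on $(X,\leq')$; moreover $(X,\expl,\leq',R')$ is still a $\log{CK}$-frame, because $\expl R' z$ forces $z=\expl$ (from $\expl R y'\Rightarrow y'=\expl$ and $\expl\leq' z\Rightarrow z=\expl$), and $\expl$ remains maximal. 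Condition~\eqref{eq:Cd-suff} then holds with $x':=x$ and witness $w:=\expl$, while~\eqref{eq:Idb-suff} holds with $u:=x$ and $t:=\expl$, where the upward closure built into $R'$ is what directly produces $xR'z$ from $xR'y\leq' z$.

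The core verification is that $\Diamond$-free forcing coincides at every world in $\mo{M}$ and $\mo{M}'$, by induction on the formula. For implications, the only genuinely new $\leq'$-successor of any world is $\expl$, which forces every formula (a standard inner induction using $\expl\Vdash\bot$ and maximality). For $\Box\psi$, a new $R'$-edge from $y$ to $z$ either ends at $\expl$ or places $z$ in $\leq'$ above some old $R$-successor $y'$ of $y$; persistence in $\mo{M}$ then propagates $\psi$ from $y'$ to $z$. The main obstacle is engineering the two modifications so that they cooperate: the upward closure of $R$ is forced upon us by~\eqref{eq:Idb-suff} but threatens to invalidate previously valid $\Box$-formulas, and it is precisely the promotion of $\expl$ to a top, together with the fact that $\expl$ satisfies every formula, that absorbs each new $R'$-edge and preserves forcing of $\Diamond$-free formulas throughout.
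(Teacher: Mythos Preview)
Your argument is correct and follows the same overall strategy as the paper: transform an arbitrary $\log{CK}$-model into one validating $\axref{ax:Cd}$ and $\axref{ax:Idb}$ while preserving forcing of $\Diamond$-free formulas. The difference lies in the transformation. The paper simply sets $x\preceq y$ iff $x\leq y$ or $y=\expl$, and $x\,\mathcal{R}\,y$ iff $xRy$ or $y=\expl$, i.e.\ it only adds edges to $\expl$. This already yields \eqref{eq:Cd-strong}, and for $\axref{ax:Idb}$ the paper checks the \emph{correspondence} condition \eqref{eq:Idb-corr} rather than \eqref{eq:Idb-suff}: given $x\,\mathcal{R}\,y\preceq z$, taking $u=x$ and $w=y$ works because every $s\succeq x$ satisfies $s\,\mathcal{R}\,\expl$, which is exactly the escape clause in \eqref{eq:Idb-corr}. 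Your additional upward closure of $R$ in its second coordinate is what you need to hit the stronger \eqref{eq:Idb-suff}, and this in turn forces the persistence argument in the $\Box$-case of your preservation lemma. So the paper's construction is strictly simpler in both the frame modification and the verification; your remark that the upward closure is ``forced upon us'' is accurate only if one insists on the sufficient condition, whereas the paper shows the correspondence condition suffices with less work.
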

\begin{proof}
  By definition every formula that is derivable in $\log{CK}_{\Box}$ is also
  derivable in $\log{CK} \oplus \axref{ax:Cd} \oplus \axref{ax:Idb}$,
  so we focus on the converse. We show that for every world
  in every $\log{CK}$-frame we can find a world in a frame for 
  $\log{CK} \oplus \axref{ax:Cd} \oplus \axref{ax:Idb}$ that satisfies
  exactly the same diamond-free formulas.
  This implies that the class of $\log{CK}$-frames validates all diamond-free
  consecutions in $\log{CK} \oplus \axref{ax:Cd} \oplus \axref{ax:Idb}$,
  so that the result follows from completeness of $\log{CK}_{\Box}$ with respect
  to the class of $\log{CK}$-frames.

  Let $\mo{X} = (X, \expl, \leq, R)$ be a $\log{CK}$-frame, and define binary relations
  $\preceq$ and $\mc{R}$ on $X$ by:
  \begin{align*}
    x \preceq y &\iff x \leq y \text{ or } y = \expl  \\
    x \mc{R}  y &\iff x R y  \text{ or } y = \expl
  \end{align*}
  (Note that $\expl$ was already maximal with respect to $\leq$,
  but now it becomes a top element with respect to $\preceq$.)
  Then $\mo{X}' = (X, \expl, \preceq, \mc{R})$ is a $\log{CK}$-frame%
  ~(\rocqdoc{Conservativity.CK_Cd_Idb_conserv_CK.html\#CdIdb_frame_former}).
  The fact that every world can intuitionistically and modally access $\expl$
  implies that $\mo{X}'$ satisfies~\eqref{eq:Cd-strong}%
  ~(\rocqdoc{Conservativity.CK_Cd_Idb_conserv_CK.html\#CdIdb_frame_former_strong_Cd_frame}).
  Moreover, we have that $x \mc{R} y \preceq z$ implies $x \preceq x \mc{R} y \preceq z$ and
  all intuitionistic successors of $x$ can modally access $\expl$, wherefore
  \eqref{eq:Idb-corr} holds%
  ~(\rocqdoc{Conservativity.CK_Cd_Idb_conserv_CK.html\#CdIdb_frame_former_Idb_frame}).
  Therefore $\mo{X}'$ is a frame for $\log{CK} \oplus \axref{ax:Cd} \oplus \axref{ax:Idb}$.

  Valuations for $\mo{X}$ and $\mo{X}'$ coincide
  because we only added relations of the form $x \preceq \expl$ to the
  intuitionistic accessibility relation.
  Let $V$ be such a valuation for $\mo{X}$. Then we can show by induction
  on the structure of $\phi$ that
    \begin{equation*}
    (\mo{X}, V), x \Vdash \phi \iff (\mo{X}', V), x \Vdash \phi  
  \end{equation*}
  for all diamond-free formulas $\phi$%
  ~(\rocqdoc{Conservativity.CK_Cd_Idb_conserv_CK.html\#model_forces_iff_CdIdb_model_forces}).
  This entails that every diamond-free consecution $\Gamma \vdash \phi$ in
  $\log{CK} \oplus \axref{ax:Cd} \oplus \axref{ax:Idb}$ is true
  at $x$. Since $\mo{X}$, $V$ and $x$ are arbitrary, the result follows.
\end{proof}

  We can also give a semantical proof for $\log{CK} \oplus \axref{ax:Nd} \oplus \axref{ax:Cd}$~\cite[Corollary~30]{DasMar23}.
  This uses a different frame transformation:
  whereas the proof of Proposition~\ref{prop:k34} makes $\expl$ modally
  accessible to all worlds to force validity of $\Diamond\bot$,
  the frame transformation in the next proposition falsifies
  $\Diamond\top$ in all worlds.%
  \footnote{Our frame transformation for $\log{CK} \oplus \axref{ax:Cd} \oplus \axref{ax:Idb}$
  tightly connects to the translation given in an independent alternative proof of Proposition~\ref{prop:k34}~\cite{Lang24-blog}
  which maps all diamond formulas to $\bot$. Following this insight, the result for $\log{CK} \oplus \axref{ax:Nd} \oplus \axref{ax:Cd}$
  can be obtained by translating diamond formulas to $\top$ instead~\cite{Das24-comment}.}
  Both approaches ensure validity of
  $\axref{ax:Cd}$ by trivialising diamonds.

\begin{proposition}[\rocqdoc{Conservativity.CK_Cd_Nd_conserv_CK.html\#diam_free_eq_CKCdNd_CK}]\label{prop:k35}
  The logic $\log{CK} \oplus \axref{ax:Cd} \oplus \axref{ax:Nd}$ is
  a conservative extension of $\log{CK}_{\Box}$.
\end{proposition}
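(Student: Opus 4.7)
Following the strategy of Proposition~\ref{prop:k34}, my plan is to take an arbitrary $\log{CK}$-frame $\mo{X} = (X,\expl,\leq,R)$ and construct a frame $\mo{X}'$ validating both $\axref{ax:Nd}$ and $\axref{ax:Cd}$ such that every $x \in X$ satisfies the same $\Diamond$-free formulas in $\mo{X}'$ as in $\mo{X}$. Combined with the completeness results of Section~\ref{sec:eight} and the fact that $\log{CK}_{\Box}$ coincides with the $\Diamond$-free fragment of $\log{CK}$, this yields the stated conservativity: any $\Diamond$-free consecution derivable in $\log{CK} \oplus \axref{ax:Nd} \oplus \axref{ax:Cd}$ is then valid on all $\log{CK}$-frames and so derivable in $\log{CK}_{\Box}$.

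As the footnote preceding the statement suggests, the intended transformation should trivialise $\Diamond$ in the \emph{opposite} direction to the one used for Proposition~\ref{prop:k34}: instead of making $\Diamond\phi$ hold everywhere by forcing modal access to $\expl$, we want $\Diamond\phi$ to \emph{fail} at every non-$\expl$ world, so that $\axref{ax:Nd}$ and $\axref{ax:Cd}$ both hold vacuously away from $\expl$. A first candidate is $\mo{X}' = (X,\expl,\leq,\mc{R})$ with $\mc{R} := \{(x,y) \in R : y \neq \expl\} \cup \{(\expl,\expl)\}$, i.e.\ delete every modal edge into $\expl$ except the self-loop. The condition~\eqref{eq:Nd-suff} is then immediate, and $\Box$-truth at each $x \in X$ is preserved, because $\expl$ satisfies every formula, so the removed edges were always vacuously good witnesses for $\Box\phi$ and cannot newly falsify it. The induction over $\Diamond$-free formulas then mirrors the one in the proof of Proposition~\ref{prop:k34}.

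The hard part will be establishing $\axref{ax:Cd}$: the candidate above does not by itself imply~\eqref{eq:Cd-suff}, since two incomparable intuitionistic successors of a world can still have $\mc{R}$-successors realising incompatible disjuncts. I expect to overcome this by further trivialising the diamond---for instance by adjoining to each non-$\expl$ world a fresh intuitionistic successor with no $\mc{R}$-successor, whose atomic valuation is chosen to match that of its source, so that propositional and $\Box$-truths at the original worlds are left undisturbed while $\Diamond\phi$ is killed uniformly. A purely syntactic fallback is the translation $t$ extending homomorphically over the other connectives and setting $t(\Diamond\phi) := \bot$: each axiom of $\log{CK} \oplus \axref{ax:Nd} \oplus \axref{ax:Cd}$ then maps to a $\log{CK}_{\Box}$-theorem---notably $\axref{ax:Nd}$ becomes $\bot \to \bot$ and $\axref{ax:Cd}$ becomes $\bot \to \bot \lor \bot$, while $\axref{ax:Kd}$ becomes $\Box(t\phi \to t\psi) \to (\bot \to \bot)$---and since $t$ fixes $\Diamond$-free formulas, conservativity follows.
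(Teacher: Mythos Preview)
Your syntactic fallback with $t(\Diamond\phi) := \bot$ is correct and by itself already proves the proposition: every axiom of $\log{CK} \oplus \axref{ax:Nd} \oplus \axref{ax:Cd}$ maps to a $\log{CK}_{\Box}$-theorem, the rules are preserved, and $t$ fixes $\Diamond$-free formulas. This is essentially the alternative argument the paper alludes to in its footnote.

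Your semantic sketch, however, has a gap that the paper's construction avoids. If $x^+$ is merely a strict intuitionistic successor of $x$ with the same atoms and no modal successors, then $x^+ \Vdash \Box\bot$ vacuously, even when $x \not\Vdash \Box\bot$. Hence a $\Diamond$-free formula such as $\Box\bot \to p$ can be true at $x$ in $\mo{X}$ (because no $y \geq x$ satisfies $\Box\bot$) yet false at $x$ in $\mo{X}'$ (because $x \preceq x^+$, $x^+ \Vdash \Box\bot$, and $x^+ \not\Vdash p$). Matching atomic valuations does \emph{not} suffice to leave $\Box$-truths---and therefore implications with $\Box$-antecedents---undisturbed. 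The paper fixes this by putting $x^+$ in the same $\preceq$-equivalence class as $x$, i.e.\ $x \preceq x^+ \preceq x$: then $x^+$ inherits \emph{all} intuitionistic successors of $x$, so by persistence $x$ and $x^+$ satisfy exactly the same formulas in $\mo{X}'$, and the equivalence $(\mo{X},V),x \Vdash \phi$ iff $(\mo{X}',V'),x \Vdash \phi$ goes through for arbitrary $\Diamond$-free $\phi$. The worlds $x^+$ still have no modal successors, which simultaneously secures~\eqref{eq:Nd-corr} and~\eqref{eq:Cd-corr}. With this adjustment your semantic plan coincides with the paper's proof; without it, you should rely on the syntactic translation.
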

\begin{proof}
  We use the same strategy as in the proof of Proposition~\ref{prop:k34}.
  Let $\mo{X} = (X, \expl, \leq, R)$ be any $\log{CK}$-frame.
  We construct a new frame $\mo{X}'$  by adding for each world $x \neq \expl$
  an additional world $x^+$ that satisfies $x \leq x^+ \leq x$
   and that
  cannot modally see anything. Intuitively, this makes $\Diamond\top$ false
  at every world (except $\expl$), forcing distributivity of diamonds
  over joins.
  
  Let $X^+$ be a copy of $X \setminus \{ \expl \}$, and for each $\expl \neq x \in X$
  denote its copy in $X^+$ by $x^+$.%
  \footnote{In the formalisation, $X^+$ is taken to include a copy of $\expl$.
            This copy is forced to be isolated as
            it is only intuitionistically or modally accessible from itself.
            Therefore it does not affect validity in the corresponding model.}
  Define a preorder $\preceq$ on $X \cup X^+$ by
  \begin{align*}
    {\preceq} := \bigcup \big\{ &\{ (x, y), (x, y^+), (x^+, y), (x^+, y^+) \}  \\
                  &\mid x, y \in X \text{ and } x \leq y \big\}.
  \end{align*}
  Leave $R$ unchanged, but view it as a relation on $X \cup X^+$.
  Then $\mo{X}' = (X \cup X^+, \expl, \preceq, R)$ is a $\log{CK}$-frame%
  ~(\rocqdoc{Conservativity.CK_Cd_Nd_conserv_CK.html\#CdNd_frame_former}).
  The worlds $x^+$ ensure satisfaction of~\eqref{eq:Nd-corr}%
  ~(\rocqdoc{Conservativity.CK_Cd_Nd_conserv_CK.html\#CdNd_frame_former_Nd_frame}).
  Furthermore,~\eqref{eq:Cd-corr} is satisfied%
  ~(\rocqdoc{Conservativity.CK_Cd_Nd_conserv_CK.html\#CdNd_frame_former_Cd_frame})%
  , because if
  $x \preceq y$ and $x \preceq z$ then $x^+ \succeq x$ is a world such that
  $R[x^+] \subseteq {\downarrow}R[y]$ and $R[x^+] \subseteq {\downarrow}R[z]$
  (since $R[x^+] = \emptyset$).
  So $\mo{X}'$ is a frame for $\log{CK} \oplus \axref{ax:Cd} \oplus \axref{ax:Nd}$.
  
  Let $V$ be any valuation for $\mo{X}$ and define a valuation $V'$ for
  $\mo{X}'$ by $V'(p) = V(p) \cup \{ x^+ \mid x \in V(p) \}$.
  Then a routine induction on the structure of $\phi$ yields
  $(\mo{X}, V), x \Vdash \phi$ iff $(\mo{X}', V'), x \Vdash \phi$
  for every diamond-free formula $\phi$%
  ~(\rocqdoc{Conservativity.CK_Cd_Nd_conserv_CK.html\#model_forces_iff_CdNd_model_forces}).
  The remainder of the proof is as in Proposition~\ref{prop:k34}.
\end{proof}

  Summarising our results, we get:

\begin{theorem}\label{thm:conservative}
  Let $\Ax \subseteq \{ \axref{ax:Nd}, \axref{ax:Cd}, \axref{ax:Idb} \}$.
  Then $\log{CK} \oplus \Ax$ is a conservative extension of $\log{CK}_{\Box}$ if and only if
  $\axref{ax:Nd} \notin \Ax$ or $\axref{ax:Idb} \notin \Ax$.
\end{theorem}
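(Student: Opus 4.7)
The plan is to combine Propositions~\ref{prop:k34} and~\ref{prop:k35} with a simple monotonicity argument for the positive direction, and to invoke the Grefe/Das--Marin non-conservativity witness for the negative direction.

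For the ``if'' direction, I would first note the following monotonicity principle: whenever $L_1 \subseteq L_2$ are logics extending $\log{CK}_{\Box}$ and $L_2$ is a conservative extension of $\log{CK}_{\Box}$, so is $L_1$, because any $\Diamond$-free theorem of $L_1$ is also a theorem of $L_2$ and hence of $\log{CK}_{\Box}$. The hypothesis then splits into two cases. If $\axref{ax:Nd} \notin \Ax$ then $\Ax \subseteq \{\axref{ax:Cd}, \axref{ax:Idb}\}$, so $\log{CK} \oplus \Ax$ is a sub-logic of $\log{CK} \oplus \axref{ax:Cd} \oplus \axref{ax:Idb}$, which is conservative by Proposition~\ref{prop:k34}. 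If instead $\axref{ax:Idb} \notin \Ax$, then $\log{CK} \oplus \Ax \subseteq \log{CK} \oplus \axref{ax:Nd} \oplus \axref{ax:Cd}$, conservative by Proposition~\ref{prop:k35}. Together these exhaust all sets $\Ax$ satisfying the disjunctive hypothesis.

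For the ``only if'' direction, I would argue contrapositively, using monotonicity in the opposite direction: if a sub-logic of $\log{CK} \oplus \Ax$ is already non-conservative over $\log{CK}_{\Box}$, then so is $\log{CK} \oplus \Ax$. It therefore suffices to prove non-conservativity of the smallest logic in our family containing both $\axref{ax:Nd}$ and $\axref{ax:Idb}$, namely $\log{CK} \oplus \axref{ax:Nd} \oplus \axref{ax:Idb}$. This is exactly the original Grefe observation, rediscovered by Das and Marin and recalled in the introduction: the $\Diamond$-free formula $(\neg\Box\bot \to \Box\bot) \to \Box\bot$ is a theorem of $\log{CK} \oplus \axref{ax:Nd} \oplus \axref{ax:Idb}$, yet it fails in some $\log{CK}_{\Box}$-model and so is not a theorem of $\log{CK}_{\Box}$.

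No real obstacle arises: the heavy lifting was already carried out in Propositions~\ref{prop:k34} and~\ref{prop:k35} via the two frame transformations, and the non-conservativity witness is established in the literature. The only point deserving care is checking that the Grefe formula is derivable using $\axref{ax:Nd}$ and $\axref{ax:Idb}$ alone (without $\axref{ax:Cd}$), which is standard; the substantive novelty packaged by the theorem is that, taken with this section's conservativity results, one obtains a clean characterisation of precisely which subsets of $\{\axref{ax:Nd}, \axref{ax:Cd}, \axref{ax:Idb}\}$ preserve conservativity.
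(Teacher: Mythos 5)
Your proposal is correct and takes essentially the same approach as the paper: the paper likewise obtains conservativity for every $\Ax$ omitting $\axref{ax:Nd}$ or $\axref{ax:Idb}$ from Propositions~\ref{prop:k34} and~\ref{prop:k35} (implicitly via the monotonicity observation you spell out), and witnesses non-conservativity of $\{\axref{ax:Nd},\axref{ax:Idb}\}$ and $\{\axref{ax:Nd},\axref{ax:Cd},\axref{ax:Idb}\}$ by a $\Diamond$-free theorem from Das--Marin that $\log{CK}$ does not prove. The only cosmetic difference is the witness formula --- the paper uses $\neg\neg\Box\bot\to\Box\bot$ where you use Grefe's $(\neg\Box\bot\to\Box\bot)\to\Box\bot$ --- but these are intuitionistically interderivable, so the argument is the same.
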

\begin{proof}
  Non-conservativity of $\{ \axref{ax:Nd}, \axref{ax:Idb} \}$%
  ~(\rocqdoc{Conservativity.CK_Idb_Nd_not_conserv_CK.html\#diam_free_strict_ext_CKIdbNd_CK})
  and $\{ \axref{ax:Nd}, \axref{ax:Cd}, \axref{ax:Idb} \}$%
  ~(\rocqdoc{Conservativity.CK_Idb_Nd_not_conserv_CK.html\#diam_free_strict_ext_IK_CK})
  is witnessed by the fact that they prove $\neg\neg\Box\bot \to \Box\bot$~\cite{DasMar23}, while $\log{CK}$ does not.
  Conservativity of the remaining extensions%
  ~(\rocqdoc{Conservativity.CK_Cd_Idb_conserv_CK.html\#diam_free_eq_CKIdb_CK},%
  \rocqdoc{Conservativity.CK_Cd_Nd_conserv_CK.html\#diam_free_eq_CKCd_CK},%
  \rocqdoc{Conservativity.CK_Cd_Nd_conserv_CK.html\#diam_free_eq_WK_CK})
  follows from
  Propositions~\ref{prop:k34} and~\ref{prop:k35}.
\end{proof}

  Theorem~\ref{thm:conservative} leaves us with two logics that are not
  conservative over $\log{CK}_{\Box}$. While clearly
  $\log{CK} \oplus \axref{ax:Nd} \oplus \axref{ax:Idb}$ is included in $\log{IK}$,
  it is unclear whether or not their diamond-free fragments coincide.

\begin{open}\label{open:1}
  Does the $\Diamond$-free fragment of $\log{CK} \oplus \axref{ax:Nd} \oplus \axref{ax:Idb}$
  coincide with that of $\log{CK} \oplus \axref{ax:Nd} \oplus \axref{ax:Idb}  \oplus \axref{ax:Cd}$?
  Are either or both fragments finitely axiomatisable?
\end{open}

\section{Other axioms}\label{sec:other}

  In Sections~\ref{sec:three-axioms}, \ref{sec:eight} and~\ref{sec:diamond-free}
  we focussed on logics between $\log{CK}$ and $\log{IK}$ generated by
  three axioms. There are of course myriad axioms to be considered,
  which can all be investigated semantically.
  As an example of this, we study two more axioms that have appeared in the literature.

\subsection{The weak normality axiom}

  We first consider the \emph{weak normality axiom}:
  \begin{myenumerate}
    \setlength{\itemindent}{2em}
    \myitem{$\mathsf{N_{\Diamond\Box}}$} \label{ax:Ndb}
          $\Diamond\bot \to \Box\bot$
  \end{myenumerate}
  This axiom was used by Kojima~\cite{Koj12} in the context of neighbourhood
  semantics, weakening $\log{WK}$ by replacing $\axref{ax:Nd}$
  with $\axref{ax:Ndb}$. We investigate how this axiom fits in our semantic
  framework, and use it to compare extensions
  of $\log{CK}$ that include $\axref{ax:Ndb}$. This gives
  rise to logics whose diamond-free fragments do not coincide with the ones
  from Section~\ref{sec:diamond-free}.

\begin{proposition}
  A frame validates $\axref{ax:Ndb}$ whenever it satisfies%
  ~$(\rocqdoc{Kripke.correspondence.html\#sufficient_Ndb})$:
  \begin{myenumerate}
    \setlength{\itemindent}{3.3em}
    \myitem{\ref{ax:Ndb}-suff} \label{eq:Ndb-suff}
      $\forall x$ (if $xR\expl$ and $xRy$ then $y = \expl$)
  \end{myenumerate}
  Moreover, a frame validates $\axref{ax:Ndb}$ if and only if it satisfies%
  ~$(\rocqdoc{Kripke.correspondence.html\#correspond_Ndb})$:
  \begin{enumerate}
    \setlength{\itemindent}{3.3em}
    \myitem{\ref{ax:Ndb}-corr} \label{eq:Ndb-corr}
      $\begin{aligned}[t]
        \forall x (
          &\text{if } \forall y \; (x \leq y \text{ implies } yR\expl ) \\
          &\text{then } \forall y, z \; (x \leq y R z \text{ implies } z = \expl))
       \end{aligned}$
  \end{enumerate}
\end{proposition}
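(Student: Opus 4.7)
The plan is to prove the two parts of this proposition in sequence, using only the semantic clauses for $\Box$ and $\Diamond$ together with the fact that $\expl$ is the unique world satisfying $\bot$.

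For the sufficient condition, I would take any model $\mo{M}$ over a frame satisfying \eqref{eq:Ndb-suff} and any world $x$ with $\mo{M}, x \Vdash \Diamond\bot$, and show that $\mo{M}, x \Vdash \Box\bot$. Unfolding the semantics of $\Diamond\bot$: for each $y \geq x$ there exists $z$ with $yRz$ and $z \Vdash \bot$, which forces $z = \expl$. Hence $yR\expl$ holds for every $y \geq x$. Now to check $\mo{M}, x \Vdash \Box\bot$, consider any $y \geq x$ and any $z$ with $yRz$; since $yR\expl$ holds too, \eqref{eq:Ndb-suff} applied at $y$ yields $z = \expl$, i.e.\ $z \Vdash \bot$.

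For the correspondence statement, the $(\Leftarrow)$ direction is essentially a rerun of the argument above with \eqref{eq:Ndb-corr} replacing \eqref{eq:Ndb-suff}: the semantic meaning of $x \Vdash \Diamond\bot$ is exactly that $yR\expl$ for every $y \geq x$ (the hypothesis of \eqref{eq:Ndb-corr}), and the semantic meaning of $x \Vdash \Box\bot$ is exactly that $z = \expl$ whenever $x \leq y R z$ (the conclusion of \eqref{eq:Ndb-corr}). So \eqref{eq:Ndb-corr} is tailor-made to deliver this implication.

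For $(\Rightarrow)$, suppose $\mo{X}$ validates $\axref{ax:Ndb}$ and fix $x$ such that $yR\expl$ for every $y \geq x$. Since $\axref{ax:Ndb}$ contains no propositional variables, I can pick any valuation $V$ on $\mo{X}$ and work in $\mo{M} = (\mo{X}, V)$. The hypothesis on $x$ together with $\expl \Vdash \bot$ immediately yields $\mo{M}, x \Vdash \Diamond\bot$. Validity of $\axref{ax:Ndb}$ then gives $\mo{M}, x \Vdash \Box\bot$, which unfolds to: for every $y \geq x$ and every $z$ with $yRz$, $z = \expl$. This is exactly the conclusion of \eqref{eq:Ndb-corr}. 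There is no real obstacle here; the only minor point to be careful about is observing that the axiom is variable-free so the choice of valuation is immaterial, and that the $\bot$ clause pinpoints $\expl$ as the only world forcing $\bot$, making the correspondence essentially definitional.
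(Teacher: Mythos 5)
Your proof is correct and matches the paper's approach: the paper disposes of the first claim by noting \eqref{eq:Ndb-suff} implies \eqref{eq:Ndb-corr} and calls the correspondence argument a ``routine verification,'' which is exactly the verification you spell out, since $x \Vdash \Diamond\bot$ unfolds to the hypothesis of \eqref{eq:Ndb-corr} and $x \Vdash \Box\bot$ to its conclusion. Your direct proof of the sufficiency claim is just the paper's factoring (suff implies corr, corr implies validity) inlined, and your care about the variable-free axiom and the $\bot$ clause pinpointing $\expl$ is exactly right.
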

\begin{proof}
  The first claim follows from the second, because~\eqref{eq:Ndb-suff}
  implies~\eqref{eq:Ndb-corr} (\rocqdoc{Kripke.correspondence.html\#suff_impl_Ndb}).
  The second claim follows from a routine verification. 
\end{proof}

  What happens if we add $\axref{ax:Ndb}$ to any of the
  eight logical systems discussed in Section~\ref{sec:three-axioms}?
  Since $\axref{ax:Nd}$ implies $\axref{ax:Ndb}$, adding it to a system
  containing $\axref{ax:Nd}$ does not change anything.
  Furthermore, Theorem~\ref{thm:conservative} implies that Kojima's logic $\log{CK} \oplus \axref{ax:Ndb}$,
  as well as its extension with $\axref{ax:Cd}$,
  are conservative over $\log{CK}_{\Box}$%
  ~(\rocqdoc{Conservativity.CK_Cd_Nd_conserv_CK.html\#diam_free_eq_CKNdb_CK},%
  \rocqdoc{Conservativity.CK_Cd_Nd_conserv_CK.html\#diam_free_eq_CKCdNdb_CK}).
  This leaves us with two new logics (in terms of their diamond-free fragment)
  between $\log{CK}_{\Box}$ and $\log{IK}$. We claim that they relate as in
  Figure~\ref{fig:Ndb}.
  \begin{figure*}
    \vspace{0em}
    \centering
    \begin{tikzpicture}[yscale=.7,xscale=1.1]
      \node (ik) at (-1,0) {$\log{CK}_{\Box}$};
      \node (NdbI) at (2,0) {$\log{CK} \oplus \axref{ax:Ndb} \oplus \axref{ax:Idb}$};
      \node (NdI) at (6,1) {$\log{CK} \oplus \axref{ax:Nd} \oplus \axref{ax:Idb}$};
      \node (NdbCI) at (6,-1) {$\log{CK} \oplus \axref{ax:Ndb} \oplus \axref{ax:Cd} \oplus \axref{ax:Idb}$};
      \node (IK) at (10,0) {$\log{IK}$};
      \draw[latex-, dashed] (ik) to (NdbI);
      \draw[bend left=15, latex-, dashed] (NdbI) to (NdI);
      \draw[bend left=15, latex-] (NdI) to node[above]{?} (IK);
      \draw[-latex, dashed, bend right=25] (NdI) to (NdbCI);
      \draw[-latex, dashed,bend left=15] (IK) to (NdbCI);
      \draw[-latex, bend left=15] (NdbCI) to node[below]{?} (NdbI);
      \draw[-latex, bend right=25] (NdbCI) to node[right]{?} (NdI);
    \end{tikzpicture}
    \vspace{0em}
    \caption{Logics including $\axref{ax:Ndb}$.
      Inclusion of logics from left to right is immediate.
      Dashed arrows indicate non-inclusion of diamond-free fragments,
      and follow from Propositions~\ref{prop:nnBp-Bnnp} and~\ref{app:prop:CK-vs-CKNdbI}.
      We note that non-inclusion of the middle question mark would imply
      non-inclusion of the other two.}
    \label{fig:Ndb}
  \end{figure*}
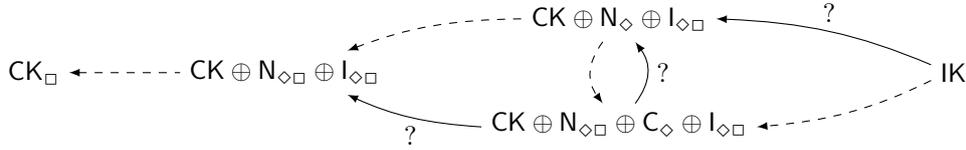

\begin{theorem}
  Let $\Ax \subseteq \{ \axref{ax:Cd}, \axref{ax:Idb} \}$.
  Then the logic $\log{CK} \oplus \axref{ax:Ndb} \oplus \Ax$ is sound and
  strongly complete with respect to the class of frames satisfying
  \eqref{eq:Ndb-corr}, and \textup{($\sf{A}$-suff)} for each $\sf{A} \in \Ax$%
  ~$(\rocqdoc{Complseg.CK_Ndb_seg_completeness.html\#CKNdb_Strong_Completeness},%
  \rocqdoc{ComplsegAB.CK_Cd_Ndb_segAB_completeness.html\#suff_CKCdNdb_Strong_Completeness})$.
  Moreover, if $\axref{ax:Idb}\in \Ax$ then we can replace
  \eqref{eq:Ndb-corr} by \eqref{eq:Ndb-suff}%
  ~$(\rocqdoc{ComplsegP.CK_Idb_Ndb_segP_completeness.html\#suff_suff_CKIdbNdb_Strong_Completeness},%
    \rocqdoc{Complth.CK_Cd_Idb_Ndb_th_completeness.html\#suff_suff_suff_CKCdIdbNdb_Strong_Completeness})$.
\end{theorem}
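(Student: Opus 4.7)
The plan is to follow the blueprint of Theorem~\ref{thm:strong_compl_for_Ax}, reusing the canonical model construction appropriate to the axioms in $\Ax$ and verifying that adding $\axref{ax:Ndb}$ forces the relevant Ndb-frame condition on the canonical frame. Soundness is immediate from the correspondence result for $\axref{ax:Ndb}$, the observation that \eqref{eq:Ndb-suff} implies \eqref{eq:Ndb-corr}, and the soundness results already established in Theorem~\ref{thm:strong_compl_for_Ax} for the axioms in $\Ax$.

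For strong completeness, I would split into four cases mirroring the proof of Theorem~\ref{thm:strong_compl_for_Ax}: $\Ax = \emptyset$ using general segments (Definition~\ref{def:segment}); $\Ax = \{\axref{ax:Cd}\}$ using A- and B-segments; $\Ax = \{\axref{ax:Idb}\}$ using P-segments (Definition~\ref{def:p-seg}); and $\Ax = \{\axref{ax:Cd}, \axref{ax:Idb}\}$ using the prime-theory canonical model. In every case the truth lemma transfers without modification, because adding $\axref{ax:Ndb}$ only enlarges the derivability relation rather than introducing new syntactic cases; likewise the frame conditions for axioms in $\Ax$ continue to hold on the canonical frame for the same reasons as in Theorem~\ref{thm:strong_compl_for_Ax}. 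The only genuinely new task is to verify the appropriate Ndb-condition on the canonical frame.

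The dichotomy in the conclusion reflects the structure of the canonical model. In the Idb cases, any prime theory $\Gamma$ with $\Diamond\bot \in \Gamma$ satisfies $\Box\bot \in \Gamma$ by $\axref{ax:Ndb}$, which forces every $\Delta$ in the tail of the associated P-segment (respectively, every $R$-successor of $\Gamma$ in the prime-theory model) to contain $\bot$, hence to coincide with $\mathbf{L}$. Thus in those constructions $(\Gamma, U) R \expl$ entails $U = \{\mathbf{L}\}$, immediately yielding \eqref{eq:Ndb-suff}. In the non-Idb cases, segments of the form $(\Gamma, U)$ with $\mathbf{L} \in U$ need not have the rest of their tail consist of $\mathbf{L}$, so we only hope for the weaker \eqref{eq:Ndb-corr}: given a segment $(\Gamma, U)$ all of whose intuitionistic extensions can modally see $\expl$, a Lindenbaum argument together with $\axref{ax:Ndb}$ shows that $\Box\bot$ must already lie in every such extension's first coordinate, forcing every reachable $\Delta$ to equal $\mathbf{L}$.

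The main obstacle I expect is the case $\Ax = \{\axref{ax:Cd}\}$, where the A/B-segment construction constrains the choice of witness theories. One must check that the Lindenbaum-style argument used to derive the contrapositive of \eqref{eq:Ndb-corr} is compatible with the A- or B-segment requirements; in particular, when extending a prime theory to a segment one has to avoid inadvertently placing $\Diamond\bot$ inside the head, which would collapse the required case distinction. The remaining cases are essentially bookkeeping once the Ndb-specific observation above is in place.
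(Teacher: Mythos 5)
Your proposal is correct and follows essentially the same route as the paper: soundness via the correspondence results, and completeness by reusing the canonical model construction for each choice of $\Ax$ (general segments, A/B-segments, P-segments, prime theories) and verifying \eqref{eq:Ndb-corr} or \eqref{eq:Ndb-suff} on the canonical frame, which is exactly what the paper's terse proof does. Your Ndb-specific observations — that $\axref{ax:Ndb}$ turns $\Diamond\bot \in \Gamma$ into $\Box\bot \in \Gamma$, collapsing all modal successors to $\expl$ in the Idb-based constructions, while the non-Idb cases need the Lindenbaum-style contrapositive argument for \eqref{eq:Ndb-corr} — correctly supply the details the paper leaves implicit.
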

\begin{proof}
  Here again soundness is straightforward
  ~(\rocqdoc{Soundness.CK_Ndb_soundness.html\#CKNdb_Soundness},%
  \rocqdoc{Soundness.CK_Cd_Ndb_soundness.html\#CKCdNdb_Soundness},%
  \rocqdoc{Soundness.CK_Idb_Ndb_soundness.html\#CKIdbNdb_Soundness},%
  \rocqdoc{Soundness.CK_Cd_Idb_Ndb_soundness.html\#CKCdIdbNdb_Soundness}).
  Completeness follows from verifying that the canonical model constructed for each
  of the resulting logics satisfies~\eqref{eq:Ndb-corr} or~\eqref{eq:Ndb-suff}.
\end{proof}

  The next propositions witness the non-inclusions indicated in Figure~\ref{fig:Ndb}.

\begin{proposition}[\rocqdoc{Conservativity.CK_Idb_Nd_not_incl_CK_Cd_Idb_Ndb.html\#CKIdbNd_not_included_CKCdIdbNdb}]\label{prop:nnBp-Bnnp}
  The formula $\neg\neg\Box p \to \Box\neg\neg p$
  is derivable in $\log{CK} \oplus \axref{ax:Nd} \oplus \axref{ax:Idb}$
  but not in $\log{CK} \oplus \axref{ax:Ndb} \oplus \axref{ax:Cd} \oplus \axref{ax:Idb}$
  (hence not in $\log{CK} \oplus \axref{ax:Ndb} \oplus \axref{ax:Idb}$).
\end{proposition}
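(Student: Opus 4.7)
The plan is to establish the two clauses separately: provability in $\log{CK}\oplus\axref{ax:Nd}\oplus\axref{ax:Idb}$ by a direct Hilbert derivation, and unprovability in $\log{CK}\oplus\axref{ax:Ndb}\oplus\axref{ax:Cd}\oplus\axref{ax:Idb}$ by exhibiting a countermodel and invoking soundness via Corollary~\ref{cor:corr_strong_compl_for_Ax}.

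For the derivation, the key observation is that $\axref{ax:Idb}$ instantiated with $\phi=\neg p$, $\psi=\bot$ reduces the goal $\Box\neg\neg p$ to $\Diamond\neg p\to\Box\bot$, because $\Box(\neg p\to\bot)=\Box\neg\neg p$. First, from $\vdash p\to\neg\neg p$ (intuitionistic), $\ref{rule:Nec}$ and $\axref{ax:Kb}$ yield $\vdash \Box p\to\Box\neg\neg p$; then $\axref{ax:Kd}$ supplies $\Box\neg\neg p\to(\Diamond\neg p\to\Diamond\bot)$, and $\axref{ax:Nd}$ collapses $\Diamond\bot$ to $\bot$. Composing these we get $\Box p\vdash\neg\Diamond\neg p$. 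Standard intuitionistic reasoning (using $A\to B\Rightarrow\neg\neg A\to\neg\neg B$, together with $\neg\neg\neg\varphi\leftrightarrow\neg\varphi$) lifts this to $\neg\neg\Box p\vdash\neg\Diamond\neg p$. Since $\bot\vdash\Box\bot$, we obtain $\neg\neg\Box p\vdash\Diamond\neg p\to\Box\bot$, and $\axref{ax:Idb}$ closes the derivation.

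For the refutation, the plan is to build a finite $\log{CK}$-frame $\mo{X}=(X,\expl,\leq,R)$ with a valuation $V$ and a distinguished world $x_0$ such that $x_0\Vdash\neg\neg\Box p$ while $x_0\not\Vdash\Box\neg\neg p$, and verify that $\mo{X}$ satisfies \eqref{eq:Ndb-corr}, the $\axref{ax:Cd}$-condition (either \eqref{eq:Cd-suff} or \eqref{eq:Cd-strong}), and \eqref{eq:Idb-corr}. The required structure has: (i) a modal successor $c$ of some $\leq$-successor of $x_0$ together with a $\leq$-successor $d$ of $c$ that is not below $\expl$ and not in $V(p)$, so that $d\Vdash\neg p$ and hence $c\not\Vdash\neg\neg p$; and (ii) sufficiently many $\leq$-successors $y'$ of each $y\geq x_0$ forcing $\Box p$, typically because $y'$ only modally sees $\expl$ (exploiting that $\expl\Vdash p$). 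Once $\mo{X}$ is fixed, the semantic check that the formula fails at $x_0$ is routine by the semantic clauses for $\Box,\Diamond$ and $\neg$.

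The main obstacle is the design of $\mo{X}$: the derivation crucially exploits $\axref{ax:Nd}$ to collapse $\Diamond\bot$ to $\bot$, so a refuting frame must contain a world witnessing $\Box\bot$ without witnessing $\bot$ (only possible under the weaker \eqref{eq:Ndb-corr}). At the same time we must avoid triggering the consequent of \eqref{eq:Ndb-corr}, which would force every $R$-successor of futures of $x_0$ to equal $\expl$ and thereby destroy the $c,d$ structure, while still providing enough modal accessibility so that the escape clause of \eqref{eq:Idb-corr}—each $s\geq u$ satisfies $sR\expl$ or $sRt$ with $z\leq t$—holds for the problematic instance $xRc\leq d$. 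Threading this needle is the essential content of the proof; the frame-condition verifications and the forcing computations reduce to finite case analyses once a suitable frame has been chosen, and the construction is carried out in the accompanying Rocq development.
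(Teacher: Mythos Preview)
Your strategy matches the paper's: a Hilbert derivation for the positive half and a countermodel for the negative half. The derivation you give is correct and in fact spells out what the paper merely cites as \cite[Lemma~10]{DasMar23}.

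The gap is in the second half. You describe constraints a refuting frame should satisfy and then defer the actual construction to the Rocq code; for a paper proof this is not enough, and your constraints are also slightly over-engineered. In particular, requiring ``$d$ not below $\expl$'' is unnecessary: since $\expl\Vdash\bot$, a world $d$ with $d\leq\expl$ can still force $\neg p$ provided $\expl$ is its only $\leq$-successor in $V(p)$. The paper simply exhibits the four-world frame $\{x,y,z,\expl\}$ with $x\leq y$, $z\leq\expl$, $xRz$, $yR\expl$ (plus the mandatory $\expl R\expl$) and valuation $V(p)=\{\expl\}$; then $z\Vdash\neg p$ (so $x\not\Vdash\Box\neg\neg p$) while $y\Vdash\Box p$ (so $x\Vdash\neg\neg\Box p$). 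Checking that this frame satisfies \eqref{eq:Ndb-suff}, \eqref{eq:Cd-strong} and \eqref{eq:Idb-suff} is a short finite case analysis---and note that these are the \emph{sufficient} conditions, so the ``threading the needle'' you anticipate is not in fact difficult. Finally, your reference to Corollary~\ref{cor:corr_strong_compl_for_Ax} is misplaced: that is a completeness result for a different axiom set, whereas all you need here is soundness of $\log{CK}\oplus\axref{ax:Ndb}\oplus\axref{ax:Cd}\oplus\axref{ax:Idb}$ with respect to frames validating the three axioms.
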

\begin{proof}
  Derivability in $\log{CK} \oplus \axref{ax:Nd} \oplus \axref{ax:Idb}$
  is a known result~\cite[Lemma 10]{DasMar23}.
  Figure~\ref{fig:app-Ndb-2} gives a frame that satisfies
  \eqref{eq:Ndb-suff}, \eqref{eq:Cd-strong} and~\eqref{eq:Idb-suff}
  where the formula is not valid (it is false at $x$
  if we take a valuation with $V(p) = \{ \expl \}$),
  hence it is not derivable in
  $\log{CK} \oplus \axref{ax:Ndb} \oplus \axref{ax:Idb}$.
\end{proof}

\begin{proposition}[\rocqdoc{Conservativity.CK_Idb_Ndb_not_conserv_CK.html\#diam_free_strict_ext_CKIdbNdb_CK}]\label{app:prop:CK-vs-CKNdbI}
  The formula
  \begin{equation}\label{eq:Ndb-Idb-form}
    \neg\Box\bot \to (\neg\neg\Box p \to \Box\neg\neg p).
  \end{equation}
  is derivable in $\log{CK} \oplus \axref{ax:Ndb} \oplus \axref{ax:Idb}$,
  but not in $\log{CK}$.
\end{proposition}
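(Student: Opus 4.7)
My plan splits into a syntactic derivation of the formula in $\log{CK} \oplus \axref{ax:Ndb} \oplus \axref{ax:Idb}$ and a semantic counterexample in a plain $\log{CK}$-frame, the latter invoking the completeness of $\log{CK}$ (Theorem~\ref{thm:ck-completeness}). For derivability, I plan to show $\neg\Box\bot, \neg\neg\Box p \vdash \Box\neg\neg p$ and then close with the deduction theorem. Rewriting $\Box\neg\neg p$ as $\Box(\neg p \to \bot)$ and using $\axref{ax:Idb}$, it suffices to derive $\Diamond\neg p \to \Box\bot$. Under the first hypothesis, the contrapositive of $\axref{ax:Ndb}$ yields $\neg\Diamond\bot$; combined with the ex falso schema $\bot \to \Box\bot$, the goal further reduces to deriving $\neg\Diamond\neg p$. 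The key chain lives in plain $\log{CK}$: necessitating the tautology $p \to (\neg p \to \bot)$ and applying $\axref{ax:Kb}$ gives $\Box p \vdash \Box(\neg p \to \bot)$; then $\axref{ax:Kd}$ gives $\Box p \vdash \Diamond\neg p \to \Diamond\bot$; combined with $\neg\Diamond\bot$, this yields $\Box p \vdash \neg\Diamond\neg p$. Since $\neg\Diamond\neg p$ is a negation, hence $\neg\neg$-stable, the intuitionistic monotonicity of $\neg\neg$ lifts this to $\neg\neg\Box p \vdash \neg\Diamond\neg p$, closing the derivation.

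For non-derivability in $\log{CK}$, I exhibit a four-world $\log{CK}$-model falsifying the formula. Take $X = \{x, z, q, \expl\}$, with $\leq$ the reflexive closure of $\{(x, z)\}$, $R = \{(x, z), (z, q), (\expl, \expl)\}$, and $V(p) = \{q, \expl\}$; routine checks confirm this is a legitimate $\log{CK}$-model. At $z$, $\Box p$ holds (its only modal successor $q$ lies in $V(p)$) and $\Box\bot$ fails ($q \neq \expl$); at $x$, $\Box\bot$ likewise fails, since $xRz$ with $z \neq \expl$. Thus $x \Vdash \neg\Box\bot$, and because $z \geq x$ non-vacuously forces $\Box p$, also $x \Vdash \neg\neg\Box p$. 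However, $z \Vdash \neg p$ vacuously (since $z \notin V(p)$ has no further proper $\leq$-successors) while $z \neq \expl$, so $z \not\Vdash \neg\neg p$; hence $x \not\Vdash \Box\neg\neg p$ via the modal edge $xRz$, falsifying the target formula at $x$.

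The main obstacle I foresee is the lifting step from $\Box p \vdash \neg\Diamond\neg p$ to $\neg\neg\Box p \vdash \neg\Diamond\neg p$, which crucially relies on $\neg\Diamond\neg p$ being syntactically a negation and on $\neg\Diamond\bot$ being available as a standing hypothesis; an analogous derivation without $\neg\Box\bot$ (targeting the stronger $\neg\neg\Box p \vdash \Box\neg\neg p$) would instead require the non-derivable $\neg\neg\Box\bot \to \Box\bot$, so the hypothesis $\neg\Box\bot$ is essential rather than cosmetic.
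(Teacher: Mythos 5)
Your proposal is correct and follows essentially the same route as the paper: the derivation uses exactly the paper's ingredients (necessitating $p \to \neg\neg p$, then \axref{ax:Kb}, \axref{ax:Kd}, \axref{ax:Ndb}, the ex falso instance $\bot \to \Box\bot$, and finally \axref{ax:Idb}), with your $\neg\neg$-monotonicity-plus-triple-negation lemma doing the same intuitionistic work as the paper's currying chain, and your four-world countermodel is a sound compression of the paper's five-world frame (Figure~\ref{fig:Ndb-1}) in which the world forcing $\Box p$ is identified with the modal successor of $x$ --- all the required checks ($x \Vdash \neg\Box\bot$, $x \Vdash \neg\neg\Box p$, $x \not\Vdash \Box\neg\neg p$) go through. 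One small correction: the non-derivability direction rests on \emph{soundness} of $\log{CK}$ with respect to $\log{CK}$-frames (Proposition~\ref{prop:soundness}), not on completeness (Theorem~\ref{thm:ck-completeness}), since the argument is ``derivable implies valid, but a countermodel exists.''
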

\begin{proof}
  To see that the formula is not derivable in $\log{CK}$,
  consider the frame from Figure~\ref{fig:Ndb-1} with valuation
  $V(p) = \{ w, \expl \}$.
  Then clearly $x$ and $y$ do not satisfy $\Box\bot$, hence
  $x \Vdash \neg\Box\bot$.
  Furthermore, the fact that $y \Vdash \Box p$ implies that
  neither $x$ nor $y$ satisfies $\neg\Box p$, and hence
  $x \Vdash \neg\neg\Box p$.
  Finally, $z \not\Vdash \neg\neg p$ because $z \Vdash \neg p$,
  which implies $x \not\Vdash \Box\neg\neg p$.
  Combining this entails that $x$ falsifies the formula.

  Next, we derive the formula in $\log{CK} \oplus \axref{ax:Ndb} \oplus \axref{ax:Idb}$.
  Applying $\axref{rule:Nec}$, \axref{ax:Kb} and $\axref{rule:MP}$ to $p \to \neg\neg p$, a theorem of IPL,
  yields $\Box p \to \Box\neg\neg p$.
  This is equal to $\Box p \to \Box(\neg p \to \bot)$,
  so \axref{ax:Kd} yields $\Box p \to (\Diamond\neg p \to \Diamond\bot)$.
  Using $\axref{ax:Ndb}$ gives $\Box p \to (\Diamond\neg p \to \Box\bot)$.
  Using propositional intuitionistic reasoning we rewrite this to
  \begin{equation*}
    \neg\neg\Box p \to (\neg\Box\bot \to \neg\Diamond\neg p).
  \end{equation*}
  Now currying and commutativity of $\wedge$ allows us to derive
  $\neg\Box\bot \to (\neg\neg\Box p \to \neg\Diamond\neg p)$.
  Rewriting $\neg\Diamond\neg p$ to $\Diamond\neg p \to \bot$
  and using the fact that $\bot \to \Box\bot$ yields
  \begin{equation*}
    \neg\Box\bot \to (\neg\neg\Box p \to (\Diamond\neg p \to \Box\bot)).
  \end{equation*}
  Finally, applying~$\axref{ax:Idb}$ we find
  $\neg\Box\bot \to (\neg\neg\Box p \to \Box \neg\neg p)$.
\end{proof}

\begin{figure}[h!]
  \vspace{-1em}
  \begin{subfigure}{.22\textwidth}
    \centering
    \begin{tikzpicture}[scale=.6]
      \draw[rounded corners, fill=yellow!10, semithick] (5.6,2.8) rectangle (10.4,-.8);
        \node (x2) at (6.5,0) {$x$};
        \node (y2) at (6.5,2) {$y$};
        \node (z2) at (9.5,0) {$z$};
        \node (e2) at (9.5,2) {$\expl$};
        \draw[-latex] (x2) \ito{left} (y2);
        \draw[-latex] (z2) \ito{right} (e2);
        \draw[-latex] (x2) \rto{below} (z2);
        \draw[-latex] (y2) \rto{above} (e2);
    \end{tikzpicture}
    \caption{Frame for Prop.~\ref{prop:nnBp-Bnnp}.}
    \label{fig:app-Ndb-2}
  \end{subfigure}
  \begin{subfigure}{.25\textwidth}
    \begin{tikzpicture}[scale=.6]
      \draw[rounded corners, fill=yellow!10, semithick] (5.6,2.8) rectangle (13.2,-.8);
        \node (x2) at (6.5,0) {$x$};
        \node (y2) at (6.5,2) {$y$};
        \node (z2) at (9.5,0) {$z$};
        \node (w2) at (9.5,2) {$w$};
        \node (e2) at (11,1)  {$\expl$};
        \draw[-latex] (x2) \ito{left} (y2);
        \draw[-latex] (x2) \rto{below} (z2);
        \draw[-latex] (y2) \rto{above} (w2);
        \draw[-latex] ([shift=(135:.6)]11.6,1) arc (135:-135:.6) node[right,pos=.5]{\footnotesize{$R$}};
    \end{tikzpicture}
    \caption{Frame for Proposition~\ref{prop:wCDb}.}
    \label{fig:Ndb-1}
  \end{subfigure}
  \caption{Two more frames used to falsify formulas.}
  \label{fig:app-Ndb-frames}
\end{figure}
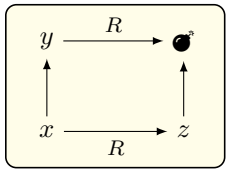
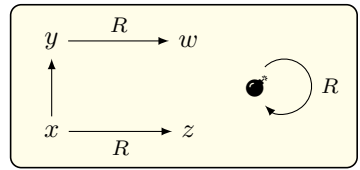

  We have the following analogue of Open question~\ref{open:1}.

\begin{open}
  Does the $\Diamond$-free fragment of $\log{CK} \oplus \axref{ax:Ndb} \oplus \axref{ax:Idb}$
  coincide with that of $\log{CK} \oplus \axref{ax:Ndb} \oplus \axref{ax:Idb}  \oplus \axref{ax:Cd}$?
  Are the $\Diamond$-free fragments of either of these logics finitely axiomatisable?
\end{open}

\subsection{The weak constant domain axiom}\label{subsec:wCD}

  The logic $\log{FIK}$ arises from $\log{IK}$ by replacing
  $\axref{ax:Idb}$ with the \emph{weak constant domain} axiom:
  \begin{myenumerate}
    \setlength{\itemindent}{2em}
    \myitem{$\mathsf{wCD}$} \label{ax:wCD}
          $\Box(p \vee q) \to ((\Diamond p \to \Box q) \to \Box q)$
  \end{myenumerate}
  It was introduced by Balbiani, Gao, Gencer, and Olivetti~\cite{BalGaoGenOli24} to axiomatise the
  intuitionistic modal logic with birelational semantics satisfying
  \eqref{eq:Cd-strong}, but not necessarily~\eqref{eq:Idb-suffCd}.
  We briefly investigate $\axref{ax:wCD}$ and how it relates to the
  axioms from Section~\ref{sec:three-axioms} to
  obtain results similar to those of Figure~\ref{fig:Ndb}.
  We postpone completeness results and finite axiomatisability of some
  of the diamond-free fragments to future work.

\begin{proposition}[\rocqdoc{GHC.properties.html\#CKIdb_prv_wCD}]
  Over $\log{CK}$, the axiom $\axref{ax:Idb}$ implies $\axref{ax:wCD}$.
\end{proposition}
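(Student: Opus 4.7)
The plan is to derive $\axref{ax:wCD}$ by combining $\axref{ax:Idb}$ with the intuitionistic tautology $(p \vee q) \to ((p \to q) \to q)$, pushed through $\Box$ via necessitation and \axref{ax:Kb}.

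First I would assume $\Box(p \vee q)$ and $\Diamond p \to \Box q$, and aim to derive $\Box q$. Applying $\axref{ax:Idb}$ to the second assumption immediately gives $\Box(p \to q)$. So the remaining task is to extract $\Box q$ from $\Box(p \vee q)$ and $\Box(p \to q)$, and for this I would invoke the intuitionistically valid schema $(p \vee q) \to ((p \to q) \to q)$. Applying rule \ref{rule:Nec} to this theorem and then using \axref{ax:Kb} twice (together with \ref{rule:MP}) yields $\Box(p \vee q) \to (\Box(p \to q) \to \Box q)$. Two applications of modus ponens then deliver $\Box q$, completing the derivation.

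Packaging this as a Hilbert-style derivation of the full axiom $\Box(p \vee q) \to ((\Diamond p \to \Box q) \to \Box q)$ is routine by the deduction theorem, which is available in $\log{CK}\oplus\Ax$ for any $\Ax$. I do not expect any real obstacle: the only non-trivial ingredient is \axref{ax:Idb} itself, which converts the hypothesis $\Diamond p \to \Box q$ into $\Box(p \to q)$, after which the argument is just distributing $\Box$ over an intuitionistic implicational tautology. In particular, no use of \axref{ax:Cd}, \axref{ax:Nd}, or \axref{ax:Ndb} is required, so the result indeed holds over plain $\log{CK}$.
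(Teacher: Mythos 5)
Your proposal is correct and takes essentially the same route as the paper: both proofs use \axref{ax:Idb} to convert the hypothesis $\Diamond p \to \Box q$ into $\Box(p \to q)$, and then distribute $\Box$ over an intuitionistic tautology to extract $\Box q$ from $\Box(p \vee q)$ and $\Box(p \to q)$. The only cosmetic difference is that you curry the tautology as $(p \vee q) \to ((p \to q) \to q)$ and apply \axref{ax:Kb} twice, whereas the paper boxes the conjunctive form $((p \vee q) \wedge (p \to q)) \to q$ after merging the two boxed formulas.
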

\begin{proof}
  Suppose we have $(\Diamond p \to \Box q) \to \Box(p \to q)$.
  Then it follows that
  \begin{equation*}
    \Box(p \vee q) \to ((\Diamond p \to \Box q) \to (\Box(p \vee q) \wedge \Box(p \to q))).
  \end{equation*}
  The conjunction is equal to $\Box((p \vee q) \wedge (p \to q))$,
  and since $((p \vee q) \wedge (p \to q)) \to q$ is a theorem of
  intuitionistic logic it implies $\Box q$.
  Thus we find $\Box(p \vee q) \to ((\Diamond p \to \Box q) \to \Box q$.
\end{proof}

  The previous proposition implies that adding $\axref{ax:wCD}$ to a logic
  that already proves $\axref{ax:Idb}$ does not add any extra deductive power.
  So adding $\axref{ax:wCD}$ to the logics from
  Section~\ref{sec:three-axioms} yields four logics of interest:
  $\log{CK} \oplus \axref{ax:wCD}$, $\log{CK} \oplus \axref{ax:Cd} \oplus \axref{ax:wCD}$,
  $\log{CK} \oplus \axref{ax:Nd} \oplus \axref{ax:wCD}$, and $\log{FIK}$.
  The diamond-free fragments of the first two coincide with $\log{CK}_{\Box}$
  as a consequence of Theorem~\ref{thm:conservative}. Thus we are left 
  with a situation similar to Figure~\ref{fig:Ndb}, depicted in Figure~\ref{fig:wCD}.

  \begin{figure*}
    \centering
    \begin{tikzpicture}[yscale=.7,xscale=1.1]
      \node (ik) at (-1.5,0) {$\log{CK}_{\Box}$};
      \node (NdW) at (2,0) {$\log{CK} \oplus \axref{ax:Nd} \oplus \axref{ax:wCD}$};
      \node (NdI) at (6,1) {$\log{CK} \oplus \axref{ax:Nd} \oplus \axref{ax:Idb}$};
      \node (NdCW) at (6,-1) {$\log{CK} \oplus \axref{ax:Nd} \oplus \axref{ax:Cd} \oplus \axref{ax:wCD}$};
      \node (IK) at (10,0) {$\log{IK}$};
      \draw[latex-, dashed] (ik) to (NdW);
      \draw[bend left=15, latex-, dashed] (NdW) to (NdI);
      \draw[bend left=15, latex-] (NdI) to node[above]{?} (IK);
      \draw[-latex, dashed, bend right=25] (NdI) to (NdCW);
      \draw[-latex, dashed,bend left=15] (IK) to (NdCW);
      \draw[-latex, bend left=15] (NdCW) to node[below]{?} (NdW);
      \draw[-latex, bend right=25] (NdCW) to node[right]{?} (NdI);
    \end{tikzpicture}
    \vspace{0em}
    \caption{Logics including $\axref{ax:wCD}$.
      Inclusion of logics from left to right is immediate.
      Dashed arrows indicate non-inclusion of diamond-free fragments,
      and follow from Propositions~\ref{prop:CKNI-FIK} and~\ref{prop:wCDb}.}
    \label{fig:wCD}
  \end{figure*}
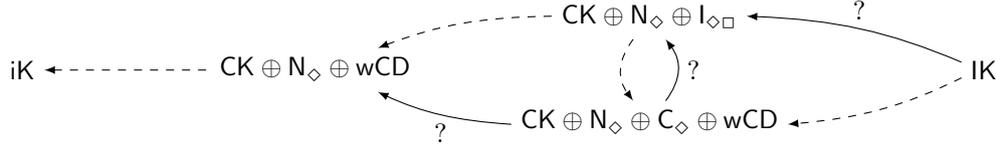

\begin{proposition}\label{prop:CKNI-FIK}
  The formula 
  $\neg\neg\Box p \to \Box\neg\neg p$ is derivable in
  $\log{CK} \oplus \axref{ax:Nd} \oplus \axref{ax:Idb}$~$(\rocqdoc{GHC.properties.html\#negneg_box_prv})$ but not in $\log{FIK}$.
\end{proposition}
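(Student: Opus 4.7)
The plan is to split into derivability and non-derivability and prove each in turn. For derivability in $\log{CK}\oplus\axref{ax:Nd}\oplus\axref{ax:Idb}$, I will mirror the chain of implications used in the proof of Proposition~\ref{app:prop:CK-vs-CKNdbI}, but replace $\axref{ax:Ndb}$ with the strictly stronger $\axref{ax:Nd}$ so that the antecedent $\neg\Box\bot$ disappears entirely. For non-derivability in $\log{FIK}$, I will exhibit a counter-model exploiting the fact that $\log{FIK}$ satisfies only the forth condition \eqref{eq:Cd-strong} and not the back-style condition that underlies $\axref{ax:Idb}$.

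For the derivation, the key steps are as follows. From the IPL theorem $p\to\neg\neg p$, apply \axref{rule:Nec} and $\axref{ax:Kb}$ to get $\Box p\to\Box\neg\neg p=\Box p\to\Box(\neg p\to\bot)$; apply $\axref{ax:Kd}$ to obtain $\Box p\to(\Diamond\neg p\to\Diamond\bot)$; use $\axref{ax:Nd}$ to collapse $\Diamond\bot$ to $\bot$, giving $\Box p\to\neg\Diamond\neg p$. Since $\neg\Diamond\neg p$ is a negation, a routine intuitionistic step upgrades this to $\neg\neg\Box p\to\neg\Diamond\neg p$; weaken the conclusion via $\bot\to\Box\bot$ to obtain $\neg\neg\Box p\to(\Diamond\neg p\to\Box\bot)$; finally apply $\axref{ax:Idb}$ with $\phi=\neg p$ and $\psi=\bot$ to conclude $\neg\neg\Box p\to\Box(\neg p\to\bot)=\Box\neg\neg p$.

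For non-derivability, I would take the $\log{CK}$-frame with worlds $\{x,a,z,z^+,z',\expl\}$, intuitionistic relation generated by $x\leq a$, $z\leq z^+$, and $z\leq z'$ (plus reflexivity, with $\expl$ isolated), and modal relation $R=\{(x,z),(a,z^+),(\expl,\expl)\}$. This frame satisfies \eqref{eq:Nd-suff} and \eqref{eq:Cd-strong}, so it validates $\axref{ax:Nd}$ and $\axref{ax:Cd}$; under \eqref{eq:Cd-strong} the $\Diamond$-clause collapses to an existential (Remark~\ref{rem:frm-cond}), and a short chase then confirms validity of $\axref{ax:wCD}$, so the frame is a legitimate $\log{FIK}$-frame. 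With valuation $V(p)=\{z^+,\expl\}$ we obtain $z'\Vdash\neg p$, hence $z\not\Vdash\neg\neg p$, so $\Box\neg\neg p$ fails at $x$ via the edge $xRz$. Yet $a\geq x$ satisfies $\Box p$ (its only modal successor is $z^+$), so no non-exploding $y\geq x$ witnesses $\neg\Box p$, giving $x\Vdash\neg\neg\Box p$.

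The main obstacle is the delicate bookkeeping in the counter-model: one must arrange for the ``bad'' intuitionistic successor $z'$ of the modal target $z$ to remain invisible from above $x$, which is exactly what the absence of a back-style condition permits. A naive frame that over-eagerly patches forth (for instance by forcing every $y\geq x$ to modally see some world above $z'$) destroys $\neg\neg\Box p$ as well, because every descendant of $x$ would then inherit a $\neg p$-successor; the asymmetric split of $z^+$ and $z'$ above $z$, together with $a$ seeing only the ``good'' sibling $z^+$, is the balancing point. A minor subsidiary obstacle is verifying $\axref{ax:wCD}$ by hand, since the paper postpones full completeness for $\log{FIK}$ in the $\log{CK}$-frame setting to future work; this is nevertheless a direct unwinding of the definitions once $\Diamond$ has been turned existential.
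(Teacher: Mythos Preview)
Your proposal is correct. The derivability half matches what the paper does implicitly: the paper simply cites Das and Marin~\cite[Lemma~10]{DasMar23}, but the explicit chain you give is exactly the argument of Proposition~\ref{app:prop:CK-vs-CKNdbI} with $\axref{ax:Nd}$ in place of $\axref{ax:Ndb}$, so there is no real divergence here.

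For non-derivability you take a genuinely different route from the paper. The paper appeals to the external $\log{FIK}$ semantics of~\cite{BalGaoGenOli24} and leaves the countermodel implicit; you instead stay inside the paper's own $\log{CK}$-frame semantics, exhibit a concrete frame, and argue soundness of $\log{FIK}$ on it by verifying $\axref{ax:Nd}$, $\axref{ax:Cd}$ (via~\eqref{eq:Cd-strong}) and $\axref{ax:wCD}$ directly. Your frame works: the key case for $\axref{ax:wCD}$ at $x$ reduces to the dichotomy $z \Vdash p$ (whence $x \Vdash \Diamond p$ and the hypothesis $\Diamond p \to \Box q$ fires) or $z \Vdash q$ (whence $z^+ \Vdash q$ by persistence), and the valuation $V(p) = \{ z^+, \expl \}$ then falsifies $\neg\neg\Box p \to \Box\neg\neg p$ at $x$ exactly as you describe. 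The trade-off is that your approach is self-contained and dovetails with the paper's unifying semantics, but requires the manual $\axref{ax:wCD}$ check that the paper sidesteps by citing~\cite{BalGaoGenOli24}; conversely, the paper's route is shorter but imports an external semantic framework that the paper otherwise does not develop.
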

\begin{proof}
  The first part of the statement is a known result~\cite[Lemma 10]{DasMar23}.
  The second part follows from constructing a countermodel in the semantics
  for $\log{FIK}$~\cite{BalGaoGenOli24}.
\end{proof}

\begin{proposition}[\rocqdoc{Conservativity.CK_wCD_Nd_not_conserv_CK.html\#diam_free_strict_ext_CKwCDNd_CK}]\label{prop:wCDb}
  The formula 
  \begin{myenumerate}
    \setlength{\itemindent}{2em}
    \myitem{$\mathsf{wCD_{\Box}}$} \label{ax:wCDb}
      $\Box(p \vee q) \to ((\neg\Box\neg p \to \Box q) \to \Box q)$
  \end{myenumerate}
  is derivable in
  $\log{CK} \oplus \axref{ax:Nd} \oplus \axref{ax:wCD}$ but not in $\log{CK}$.
\end{proposition}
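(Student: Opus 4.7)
The proof splits into two independent parts. For the derivability claim in $\log{CK} \oplus \axref{ax:Nd} \oplus \axref{ax:wCD}$, the plan is to first derive the auxiliary principle $\Diamond p \to \neg\Box\neg p$. Instantiating $\axref{ax:Kd}$ with $\phi = p$ and $\psi = \bot$ yields $\Box\neg p \to (\Diamond p \to \Diamond\bot)$, and composing with $\axref{ax:Nd}$ gives $\Box\neg p \to \neg\Diamond p$, which is propositionally equivalent to $\Diamond p \to \neg\Box\neg p$. Chaining this with any assumed $\neg\Box\neg p \to \Box q$ yields $\Diamond p \to \Box q$, whereupon $\axref{ax:wCD}$ delivers $\Box(p \vee q) \to \Box q$, and the deduction theorem then packages the conclusion as $\axref{ax:wCDb}$.

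For the non-derivability claim in $\log{CK}$, the plan is to exhibit a countermodel and invoke soundness (Proposition~\ref{prop:soundness}). I would reuse the frame of Figure~\ref{fig:Ndb-1}, but with the new valuation $V(p) = \{z, \expl\}$ and $V(q) = \{w, \expl\}$. The modal successors of intuitionistic successors of $x$ are precisely $z$ and $w$, and both satisfy $p \vee q$, so $x \Vdash \Box(p \vee q)$, while $z \not\Vdash q$ witnesses $x \not\Vdash \Box q$. The non-obvious step is to check $x \Vdash \neg\Box\neg p \to \Box q$: the intuitionistic successors of $x$ are $x$ and $y$, and both admit $y$ as an intuitionistic successor that forces $\Box\neg p$ (since $y$'s only modal successor is $w$, which falsifies $p$). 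Thus neither $x$ nor $y$ satisfies $\neg\Box\neg p$, and the implication holds vacuously at each, hence at $x$.

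The main subtlety I expect is choosing the valuation so that $\neg\Box\neg p$ is uniformly falsified at all intuitionistic successors of $x$. A naive reuse of $V(p) = \{w, \expl\}$ from the proof of Proposition~\ref{app:prop:CK-vs-CKNdbI} would force $z$ to satisfy $q$ in order to preserve $\Box(p \vee q)$, which then spoils $x \not\Vdash \Box q$. The resolution is to swap the roles: letting $p$ hold at $z$ and $q$ hold at $w$ makes $x \not\Vdash \Box q$ be witnessed by $z$ while $y \Vdash \Box\neg p$ is still witnessed by $w$, and this latter witness rules out $\neg\Box\neg p$ at both $x$ and $y$ simultaneously.
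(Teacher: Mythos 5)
Your proposal is correct and matches the paper's proof essentially step for step: the derivability half goes through the same auxiliary theorem $\Diamond p \to \neg\Box\neg p$ of $\log{CK} \oplus \axref{ax:Nd}$ (obtained from \axref{ax:Kd} and \axref{ax:Nd}) chained into \axref{ax:wCD}, with your use of the deduction theorem replacing the paper's explicit intuitionistic composition combinators. The non-derivability half uses exactly the paper's countermodel, namely the frame of Figure~\ref{fig:Ndb-1} with $V(p) = \{z, \expl\}$ and $V(q) = \{w, \expl\}$ falsifying \axref{ax:wCDb} at $x$; your detailed verification (including the role-swap of $p$ and $q$ relative to Proposition~\ref{app:prop:CK-vs-CKNdbI}) simply spells out what the paper leaves implicit.
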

\begin{proof}
  The formula $\Diamond p \to \neg\Box\neg p$ is a theorem of
  $\log{CK} \oplus \axref{ax:Nd}$.
  Now using instances of the intuitionistic theorems
  \begin{align*}
    &(\phi \to \psi) \to ((\chi \to \phi) \to (\chi \to \psi)) \\
    \quad\text{and}\quad
    &(\phi \to \psi) \to ((\psi \to \chi) \to (\phi \to \chi))
  \end{align*}
  we can derive $\axref{ax:wCDb}$ from $\axref{ax:wCD}$.
  The fact that $\axref{ax:wCDb}$ is not derivable in $\log{CK}$ can be
  shown by taking the $\log{CK}$-frame from Figure~\ref{fig:Ndb-1} and setting
  $V(p) = \{ z, \expl \}$ and $V(q) = \{ w, \expl \}$.
  The resulting model falsifies $\axref{ax:wCDb}$ at $x$.
\end{proof}

\begin{open}
  Does the $\Diamond$-free fragment of $\log{CK} \oplus \axref{ax:Nd} \oplus \axref{ax:wCD}$
  coincide with that of $\log{CK} \oplus \axref{ax:Nd} \oplus \axref{ax:wCD}  \oplus \axref{ax:Cd}$?
  Are the $\Diamond$-free fragments of either of these logics finitely axiomatisable?
\end{open}

\section{A note on the formalisation}\label{sec:formal}

  Beyond the extra confidence it provides in our results,
  for example clarifying the semantics of $\log{WK}$ which have been disputed~\cite{MendeP05},
  our use of formalisation in the Rocq Prover has been crucial for our work.
  
  Because of the multitude of axiomatic extensions of $\log{CK}$
  we consider, we had to formalise most of our results
  parametrised by an arbitrary set of axioms.
  This generality led to formalised results which are both
  insightful and close to their pen-and-paper counterparts by their
  size and structure.
  For example, while it is often questioned in modal logic~\cite{HakNeg12},
  we proved the deduction theorem for \emph{all} axiomatic extensions of $\log{CK}$.
  Additionally, each of our strong completeness results via a canonical model 
  construction, roughly 500 lines of code, was leveraged
  for several logics, around 50 lines only per logic.
  
  Pragmatically, the code for a certain set of logics
  could be copied, pasted, and modified for another such set
  in a heuristic and efficient way. 
  This enabled us to study a total of 12 logics,
  with an additional few days of work for each further
  axiom analysed.
  We invite our readers to experience it for themselves
  by downloading our code and adding their favourite axioms.

\section{Further work}\label{sec:further}

We have shown that using a relational semantics for $\log{CK}$, and employing Rocq to verify our proofs and tame the profusion of logics arising from
combinations of axioms, allows us to analyse the modal logics between $\log{CK}$ and $\log{IK}$. The success of this methodology leaves us
with much further work to pursue.

\paragraph{Open questions} While our work has closed some open questions about the logics between $\log{CK}$ and $\log{IK}$, others remain.
  Most pressingly, do the $\Diamond$-free fragments of $\log{CK}\oplus\axref{ax:Nd}\oplus\axref{ax:Idb}$ and $\log{IK}$ coincide, and do they have
  a finite axiomatisation? We are aware of a discussion, reported by Das and Marin~\cite{DasMar23}, that the fragment of $\log{IK}$ might not
  have a finite axiomatisation, but we are not aware of any proof of this; Grefe's thesis~\cite{Gre99} proves only the much weaker result that there
  exists an infinite chain of finitely axiomatisable $\Diamond$-free logics between $\log{CK}$ and $\log{IK}$.

\paragraph{Finite model properties}
        As noted in Remark~\ref{rem:CK-fmp}, the canonical model construction
        for $\log{CK}$ can be carried out relative to a set $\Sigma$ of
        formulas. This yields the finite
        model property for $\log{CK}$, as proved already~\cite[Section~4]{MendeP05}.
        Since the completeness results in Theorem~\ref{thm:completeness} use
        various model constructions, the finite model property does not
        readily carry over. This raises the question whether we can
        obtain finite model properties for the logics
        between $\log{CK}$ and $\log{IK}$.

\paragraph{Axiomatic extensions} Focus on the $\log{K}$ family of logics is reasonable to answer basic questions in modal logic, but much
  interest concerns extensions of a basic modal logic with further axioms, from $\log{S4}$ to provability logic to epistemic logic. How does the choice of base
  logic, from $\log{CK}$ to $\log{IK}$, effect the properties of these logics?
  Adding even more axioms into the mix will make our
  Rocq framework more crucial than ever.
  
\paragraph{Weaker or incomparable basic logics}
  While a lot of the work in intuitionistic modal logic builds on at least $\log{CK}$,
  this is not universal. For example, Bo\v{z}i\'{c} and Do\v{s}en~\cite{BozDos84},
  Wolter and Zakharyaschev~\cite{WolZak98,WolZak99}, and Gor\'{e}, Postniece and Tiu~\cite{GorPosTiu10} take as
  basis necessitation, \axref{ax:Kb}, \axref{ax:Nd}, and \axref{ax:Cd}, which is
  incomparable with $\log{CK}$ because \axref{ax:Kd} is merely an optional extension.
  This setup, with no assumed links between $\Box$ and $\Diamond$, is too weak
  for a sensible birelational semantics, so instead takes a \emph{tri}relational semantics with separate relations for the two modalities. However if we
  wished to remain within birelational semantics we could still consider logics where $\Box$ and $\Diamond$ interact, but not via \axref{ax:Kd}, or
  logics with weaker properties of $\Box$, forgoing necessitation or \axref{ax:Kb}.

\paragraph{Sahlqvist theorems} Finding a link between axioms and relational properties can take ingenuity. Sahlqvist theorems for classical modal
  logic~\cite{Sah75,SamVac89,ForMor23} make this automatic for certain syntactically defined classes of formulas. If these techniques
  could be modified to take the relational semantics for $\log{CK}$ as their basis, they would provide another tool for taming the jungle of
  logics in this space.
  
\paragraph{Proof theory and types} There is a body of research too large to summarise on proof theory for intuitionistic modal logics, but our work
  may help to identify new logics to target and techniques to use, such as labelled sequents that take $\log{CK}$-semantics as their basis. In
  particular, much~recent~work on proofs for intuitionistic modal logic involves calculi with type assignment, but no $\Diamond$. These
  type theories support impressive applications in formalised mathematics, so it may be worthwhile to investigate type theory with
  $\Diamond$ from a $\log{CK}$ base, following Bellin, De Paiva and Ritter~\cite{BeldePRit01}, while simultaneously taking cues from more
  recent work on modal type theory.

\bibliographystyle{IEEEtranS}
\bibliography{modal-int}

\end{document}